\newcommand{\bD}{\tilde B}
\newcommand{\Pf}{{\rm Pf}}
\newcommand{\bI}{\widetilde{\rm bott}}
\newcommand{\B}{B}
\newcommand{\Sym}{S}
\newcommand{\X}{\hat X}
\newcommand{\Y}{\hat Y}
\newcommand{\Z}{\hat Z}
\numberwithin{equation}{section} 
\numberwithin{figure}{section} 
\theoremstyle{plain}
\newtheorem{thm}{Theorem}[section]
  \theoremstyle{remark}
  \newtheorem{rem}[thm]{Remark}
 \theoremstyle{definition}
  \newtheorem{example}[thm]{Example}
  \theoremstyle{plain}
  \newtheorem{lem}[thm]{Lemma}
  \theoremstyle{definition}
  \newtheorem{defn}[thm]{Definition}
\newtheorem{conjecture}{Conjecture}
\newcommand{\be}{\begin{equation}}
\newcommand{\ee}{\end{equation}}
\begin{document}

\title{Almost Commuting Matrices, Localized Wannier Functions,
and the Quantum Hall Effect}

\author{Matthew~B.~Hastings}

\author{Terry A.~Loring}

\address{Microsoft Research, Station Q, Elings Hall, University of California,
Santa Barbara, CA 93106, USA.}

\address{Department of Mathematics and Statistics, University of New Mexico,
Albuquerque, NM 87131, USA.}

\begin{abstract}
For models of non-interacting fermions moving within sites arranged
on a surface in three dimensional space, there can be obstructions to finding
localized Wannier functions.  We show that such
obstructions are
$K$-theoretic obstructions to approximating almost commuting,
complex-valued matrices by commuting matrices, and we demonstrate
numerically the presence of this obstruction for a lattice model
of the quantum Hall effect in a spherical geometry.  The numerical
calculation of the obstruction is straightforward, and does not
require translational invariance or introducing a flux torus.

We further show that there is a $Z_2$ index obstruction to
approximating almost commuting self-dual matrices by exactly
commuting self-dual matrices, and present additional conjectures regarding the
approximation of almost commuting real and self-dual matrices by
exactly commuting real and self-dual matrices.  The motivation for
considering this problem is the case of physical systems with additional
antiunitary symmetries such as time reversal or particle-hole conjugation.

Finally, in
the case of the sphere---mathematically speaking three almost commuting
Hermitians whose sum of square is near the identity---we give
the first quantitative result showing this
index is the only obstruction to finding commuting approximations.
We review the known non-quantitative results for the torus.

\end{abstract}

\maketitle

\markleft {Almost Commuting Matrices}
\markright {Almost Commuting Matrices}

\tableofcontents{}

\section{Asymptotic Commutants of Finite Rank Projections
\label{sec:AsymptoticCommutants}
}

Given a list of bounded operators on infinite dimensional Hilbert space,
it is often natural to seek a finite
rank projection $P$ that almost commutes with that set. 
The $C^{*}$-algebraist
would do so in the study of quasidiagonality, 
\cite{BrownNonQD,HadwinStronglyQD,VoiculescuQD}.
In physics, we are interested in
a projection onto a band of energy states separated from the
rest of the spectrum by an energy gap;
assuming the underlying Hamiltonian is local, this projection will itself be
local due to the gap, and hence will
approximately
commute with a list of observables. 

Whatever exact relations might
be known to hold for the original operators $(H_{1},\ldots,H_{r})$
will generally hold only approximately for the 
compressions $(PX_{1}P,\ldots,PX_{r}P).$  In a lattice model,
the projection might be from a finite dimensional space to
a space whose dimension is much lower, but still the outcome
is finite-rank operators that \emph {approximately} satisfy some relations.
Can these be approximated by finite-rank operators
that \emph {exactly} satisfy those relations.?

For example, if $X_{1}$ and $X_{2}$ in $\mathbb{B}(\mathbb{H})$
satisfy
$ -I \leq X_{j}   \leq I$ and
$ [X_{1},X_{2}]  = 0,$
then $P$ almost commuting with the $X_{j}$ implies
\begin {align*}
-I \le PX_{j}P & \leq I,\\
\left\Vert [PX_{1}P,PX_{2}P]\right\Vert  & \approx 0.
\end {align*}
We especially want to know if these almost commuting Hermitian operators
are close to commuting Hermitian operators in the corner 
$P\mathbb{B}(\mathbb{H})P\cong\mathbf {M}_{k}(\mathbb{C}).$
It is sufficient to answer this question:  can two almost commuting Hermitian
matrices be approximated by commuting Hermitian
matrices? The answer is yes. This is Lin's theorem 
\cite{LinAlmostCommutingHermitian}.

The situation very different if we consider three almost commuting
Hermitians. Specifically, Theorem~\ref {thm:cutdownExample}
shows that there are $X,Y,Z$
in $\mathbb{B}(\mathbb{H})$ satisfying
\begin {align*}
[X,Y] =[X,Z]=[Y,Z]& = 0,\\
X^2+Y^2+Z^2 & = I
\end {align*}
and finite rank projections $P_{n}$ asymptotically commuting with
$X,Y,Z$ and yet such that there do not exist  triples
$(H_{1}',H_{2}',H_{3}')$ of operators with
\begin {align*}
[H_{r}',H_{s}']  = 0,\\
(H_{1}')^{2}+(H_{2}')^{2}+(H_{3}')^{2}  = I\\
\left\Vert H_{r}-H_r'\right\Vert \approx 0,
\end {align*}
where
$H_1=PXP$, $H_2=PYP$, and $H_3=PZP$.
We will make this precise below, but the example is a variation on
the examples in
\cite{ChoiAlmostNotNearly,DavidsonAlmostCommuting,
Loring-K-thryAsymCommMatrices,VoiculescuAsymptoticallyCommuting}.

The key to showing this result is the presence of an index obstruction.
Conversely, our main quantitative result is Theorem \ref{thm:sphereQuantitative},
which gives quantitative bounds on how accurately
three almost commuting Hermitian matrices with vanishing
index obstruction can be approximated by exactly commuting
matrices.  Specifically, we show that

\begin{thm}
Suppose $(H_1,H_2,H_3)$ is a $\delta$-representation of
the sphere by matrices (defined below).  If the index
$\mathrm{bott}(H_1,H_2,H_3)$,
defined below, is vanishing, 
then there are commuting Hermitian matrices $H_1',H_2',H_3'$ with
\[ (H_1')^2 + (H_2')^2 + (H_3')^2  = I \]
and
\[ \Vert H_{r}'-H_{r}\Vert\leq\epsilon(\delta) \]
for all $r,$ where
$	\epsilon(\delta)=E(1/\delta)\delta^{1/12} $
and the function $E(x)$ grows more slowly than any power of $x.$
\end{thm}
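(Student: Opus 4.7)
The plan is to exploit the vanishing of $\mathrm{bott}(H_1,H_2,H_3)$ to deform $(H_1,H_2,H_3)$ through a continuous family of $\delta'$-representations of the sphere to a commuting triple, and then extract a small-norm perturbation from the endpoint. Because the bott index is the complete $K$-theoretic obstruction, its vanishing guarantees that such a deformation exists; the task is to make the construction quantitative. As a first reduction I would normalize so that $H_1^2+H_2^2+H_3^2 = I$ holds exactly: letting $T=\sum_r H_r^2$, the symmetrized conjugation $T^{-1/2}H_r T^{-1/2}$ is $O(\delta)$-close to $H_r$, satisfies the exact sphere relation, only mildly increases the commutator norms, and does not alter the bott index since it is a homotopy through $\delta$-representations for $\delta$ small.

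\textbf{Reduction to a planar problem.} Next I would cut the sphere into latitude slices via functional calculus in $H_3$ and reduce to an almost-normal matrix problem. Form $N = H_1+iH_2$. On the spectral subspace of $H_3$ corresponding to latitude $h\in(-1,1)$, the sphere relation forces $N^{*}N \approx (1-h^{2}) I$, so $N$ is approximately a scalar multiple of a unitary, and $\mathrm{bott}(H_1,H_2,H_3)$ reappears as the total winding invariant of this family of approximate unitaries. When $\mathrm{bott}=0$ the winding is trivial on each slice, so no topological obstruction remains to deforming the family to exactly commuting data. A quantitative Lin-type theorem for almost-normal matrices with vanishing winding then provides, slice by slice, a nearby commuting pair $(H_1^{\mathrm{slice}},H_2^{\mathrm{slice}})$ with the correct circle radius.

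\textbf{Gluing and the main obstacle.} The hardest part of the argument is the quantitative gluing. The per-slice commuting approximations must assemble into a single commuting triple satisfying the exact sphere relation, and the errors from (i) the spectral truncation of $H_3$, (ii) the planar Lin-type bound on each slice, and (iii) the interpolation between adjacent slices must all be controlled simultaneously. The trade-off determines the final exponent: too few slices degrades the per-slice Lin bound, while too many slices inflates the gluing errors. Optimizing this trade-off is what I expect to yield the exponent $\delta^{1/12}$, with the subpolynomial factor $E(1/\delta)$ absorbing logarithmic losses from the smoothness of the cutoff functions used in the functional calculus and from the growing number of slices as $\delta\to 0$. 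The technical heart of the proof should lie in step (iii), where the matching of commuting approximations between adjacent latitudes requires a construction that uses the homotopy triviality of the winding on each annular transition region while preserving the global operator-norm bound.
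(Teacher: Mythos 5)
Your outline and the paper's proof do share one conceptual point of contact — both eventually work with something like ``axial coordinate $H_3$ plus angular variable $H_1+iH_2$'' — but the actual constructions diverge sharply, and your sketch has a genuine gap at exactly the place where the paper's approach wins.

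The paper never slices $H_3$'s spectrum. Instead it performs a single \emph{global} change of coordinates to the cylinder. The vanishing of $\mathrm{bott}(H_1,H_2,H_3)$ is used concretely: it says the approximate projection $P=B(H_1,H_2,H_3)$ is within $O(\delta)$ of a genuine rank-$n$ projection, i.e.\ of $W^*\bigl[\begin{smallmatrix}I&0\\0&0\end{smallmatrix}\bigr]W$ for some unitary $W$. One then reads off the top blocks $A,B$ of $\bigl[\begin{smallmatrix}I&0\\0&0\end{smallmatrix}\bigr]W$, takes polar decompositions $A=Z|A|$, $B=V|B|$, and sets $U=Z^*V$. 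The norm identities $AA^*+BB^*=I$ and $A^*A+B^*B\approx I$, together with the block structure of $P$, yield both that $(U,H_3)$ is a $\sqrt{8\delta}$-representation of the cylinder and the key displacement estimate $\|U(I-H_3^2)^{1/2}-(H_1+iH_2)\|=O(\sqrt{\delta})$. After that, the quantitative cylinder theorem (which is itself a chain cylinder~$\to$~annulus~$\to$~disk~$\to$~square, bottoming out at the quantitative version of Lin's theorem) produces exactly commuting $(V,K)$, and the reverse change of coordinates $H_1'=\mathrm{Re}(V\sqrt{I-K^2})$ etc.\ delivers the exact sphere representation. There is no gluing step; all the topological input is consumed at once, in the choice of $W$.

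Your proposal has two concrete gaps. First, ``the bott index is the complete $K$-theoretic obstruction, so a deformation exists'' is a qualitative statement; to get the claimed $E(1/\delta)\delta^{1/12}$ bound you must show \emph{how} the vanishing produces a construction with controlled norms, and you never do. Second, and more seriously, the slice-and-glue strategy leaves the gluing — which you yourself flag as ``the hardest part'' — entirely unaddressed, and it is not clear it can be carried out at all. The spectral subspaces of $H_3$ are not invariant under $H_1,H_2$ (that is the whole point: the matrices almost commute, they do not commute), so the ``per-slice winding'' you invoke is not cleanly defined, and even if it were, it is not obvious that $\mathrm{bott}=0$ forces each slice's winding to vanish rather than merely their signed total. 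You would need to first straighten the spectrum of $H_3$, then show the restricted $N$ on each slab is invertible with small self-commutator, then prove the winding on each slab is zero, then match commuting approximations across overlapping slabs without destroying commutativity or the unit-sphere constraint. Each of these is a substantial lemma, none is supplied, and the paper's global cylindrical reduction is precisely the device that makes all of them unnecessary.
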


Unless stated otherwise, matrices and vector spaces are over the complex
numbers. We use here $X^{T}$ to denote transpose and $X^{*}$ the
conjugate transpose. We always use 
$\left\Vert \mbox{ -- }\right\Vert $
to mean the operator norm. Operator means a bounded linear operator on
$\mathbb {C}^n$  or Hilbert space.  Contraction means a operator of
norm at most one.

\begin{example}
\label{exa:basicExample}
We give an example of $n$-dimensional triples of matrices which
almost commute but are not close to exactly commuting matrices.
Let
$ n=2S+1, $ where $S$ is either integer or half-integer.

Consider the spin matrices $S^1,S^2,S^3$ for a quantum spin $S$ with
$H_1=S^1/\sqrt{S(S+1)},$ 
$H_2=S^2/\sqrt{S(S+1)},$ 
$H_3=S^3/\sqrt{S(S+1)}.$  
Thus,
\[
H_1=\left[\begin{array}{ccccc}
S/\sqrt{S(S+1)}\\
 & (S-1)/\sqrt{S(S+1)}\\
 &  & \ddots\\
 &  &  & (-S+1)/\sqrt{S(S+1)}\\
 &  &  &  & -S/\sqrt{S(S+1)}\end{array}\right],
\]
and
\[
[H_r,H_s]=\epsilon^{rst}H_t/\sqrt{S(S+1)},
\]
where $\epsilon^{rst}$ is a totally anti-symmetric tensor with $\epsilon^{123}=1$.
For a mathematically oriented reader, the $S$ matrices are a representation of
the Lie algebra of $SU(2)$.

Note that
\[
H_1^2+H_2^2+H_3^2=I.
\]
It is easy to see that
\[ \Vert H_r,H_s \Vert \leq 1/S.
\]
Lemmas~\ref{lem:indexIsStable} and \ref{lem:commutingKillsIndex}
combine to tell us that
if $H_{1}',$ $H_{2}'$ and $H_{3}'$ are commuting, $n$-by-$n$
Hermitian matrices then
\[
\left\Vert H_{1}'-H_1\right\Vert 
+\left\Vert H_{2}'- H_2 \right\Vert 
+\left\Vert  H_{3}' - H_3 \right\Vert 
\geq \sqrt{1 - 4/S}.
\]
Choi \cite{ChoiAlmostNotNearly} produced a slightly better
estimate with essentially the same matrices.

These spin matrices form an example of what we call an approximate
representation of the sphere.  The mathematics oriented
reader should think about generators and relations for the
$C^*$-algebra $C_0(S^2).$  The physics oriented reader should
think that we describe the coordinates of a particle moving on the
surface of a sphere; in the presence of a magnetic field,
the particle's position is blurred out on the scale of a magnetic
length and the different coordinates cease to commute exactly.
\end{example}

To all manner of surfaces in $\mathbb {R} ^d,$ or $\mathbb {C} ^d,$
there are associated $C^*$-algebras and related collections of
almost commuting $d$-tuples of matrices.  We focus
on the surfaces most prominent in physical models:  the
disk, square, annulus, cylinder, sphere and torus.

\begin {defn}
\label {def:approxRepSphere}
Suppose $\delta \geq 0.$
A triple $(H_1, H_2, H_3)$ of operators or matrices is called a
\emph {$\delta$-representation of the sphere} if
\begin{gather*}
H_r^*   = H_r, \quad (\forall r) \\ 
\left\Vert \left[ H_r, H_s \right] \right\Vert   \leq \delta,  \quad (\forall r \neq s)\\
\left\Vert H_{1}^{2}+H_{2}^{2}+H_{3}^{2}-I\right\Vert   \leq \delta.
\end{gather*}
\end{defn}

\begin {defn}
\label {def:approxRepTorus}
Suppose $\delta \geq 0.$
A pair $(U_1, U_2)$ of operators or matrices is called a
\emph {$\delta$-representation of the torus} if
\begin{gather*}
U_r^*U_r = U_rU_r^* =I, \quad (\forall r) \\ 
\left\Vert \left[ U_1, U_2 \right] \right\Vert   \leq \delta,
\end{gather*}
\end{defn}

\begin {defn}
\label {def:approxRepSquare}
Suppose $\delta \geq 0.$
A pair $(H_1, H_2)$ of operators or matrices is called a
\emph {$\delta$-representation of the square} if
\begin{gather*}
-I \leq  H_r \leq I, \quad (\forall r) \\ 
\left\Vert \left[ H_1, H_2 \right] \right\Vert   \leq \delta, 
\end{gather*}
\end{defn}

\begin {defn}
\label {def:approxRepDisk}
Suppose $\delta \geq 0.$
An operator or matrix $X$ is called a
\emph {$\delta$-representation of the disk} if
\begin{gather*}
\left\Vert X \right\Vert   \leq 1, \\ 
\left\Vert \left[ X^*, X \right] \right\Vert   \leq \delta, 
\end{gather*}
\end{defn}

\begin {defn}
\label {def:approxRepAnnulus}
Suppose $\delta \geq 0.$
An operator or matrix $X$ is called a
\emph {$\delta$-representation of the annulus} if
\begin{gather*}
\frac{1}{2}I \leq \left| X \right|   \leq I, \\ 
\left\Vert \left[ X^*, X \right] \right\Vert   \leq \delta, 
\end{gather*}
\end{defn}

\begin {defn}
\label {def:approxRepCylinder}
Suppose $\delta \geq 0.$
A pair $(U,K)$ of operators or matrices is called a
\emph {$\delta$-representation of the cylinder} if
\begin{gather*}
U^*U = UU^* = I, \\ 
-I \leq  K \leq I,  \\ 
\left\Vert \left[ U, K \right] \right\Vert   \leq \delta, 
\end{gather*}
\end{defn}

In the above, we will usually say ``exact representation'' instead
of ``$0$-representation.''  If we have a $\delta$-representation for
$\delta > 0$ and don't wish to emphasize the exact value of $\delta,$
we will say ``approximate representation.''

We will define an invariant, called the Bott index, that applies to
approximate representations of the sphere. This is an invariant
that distinguishes those that can be approximated by exact
representations and those that cannot.

There is a more general index, defined explicitly using $K$-theory,
that applies to almost commuting triples in $C^{*}$-algebras. This
has been studied in many papers, including 
\cite{BrattElliottEvKiapproxCommUnitaries,
ElliottRordamClassificationII,LinAlmostUnitariesClassification}. 
Where possible we offer direct proofs in the language
of matrix theory, with careful error estimates and avoiding $K$-theory or
$C^{*}$-algebras.

The Bott index was discovered first in the context of
$\delta$-representations of the torus, i.e. for almost
commuting unitaries $U$ and $V.$ We give four
descriptions of this index.  Their complexity varies,
but the simplest fits in one sentence.  If $VU$ is close to
$UV$ the determinant applied to a short path between $UV$ and
$VU$ will create a closed path in the punctured plane whose
winding number equals the Bott index of $(U,V)$.

The paper is organized as follows: in the next two sections we prove results
needed for
Theorem \ref{thm:sphereQuantitative}.  In section 4 we review non-quantitative
results on the torus.  In section 5 we consider physics applications, and
in section 6 we consider index obstructions to approximation of almost commuting
real and self-dual matrices by exactly commuting real and self-dual matrices,
and we present a $Z_2$ obstruction in the self-dual case.

\section{Matrices that Almost Represent the Disk or Annulus
\label{sec:DiskAnnulus}
}

There is no second cohomology for the disk, square, annulus or 
cylinder.  This means there will be no obstruction (other than
hard work) to perturbing approximate representations to exact
representations.  We are able to make this precise in quantitative
theorems.

We start with a minor variation to the quantitative version
of Lin's theorem in \cite{hastings-2008}.

\begin{thm}
\label{thm:QuantitativeSquare} 
Suppose $(H_1,H_2)$ is a $\delta$-representation of the
square by matrices.  Then,
there exists an exact representation $(K_1,K_2)$ of the square
with
\[ \Vert H_1-K_1 \Vert, \Vert H_2-K_2 \Vert \leq \epsilon(\delta) \]
where
$ \epsilon(\delta)=E(1/\delta)\delta^{1/6} $
and the function $E(x)$ grows more slowly than any power of $x.$
\end{thm}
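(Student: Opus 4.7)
The plan is to reduce to the quantitative Lin's theorem of \cite{hastings-2008} (which handles almost commuting Hermitian matrices without any spectral constraint) and then enforce the contraction condition $-I \leq K_r \leq I$ by a simultaneous functional calculus truncation. Since the hypotheses $H_r^{*}=H_r$ and $\Vert [H_1,H_2] \Vert \leq \delta$ are already those of Lin's theorem, the $\delta^{1/6}$ rate should pass through essentially unchanged; the only extra work is a one-step post-processing.

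First, I would apply the quantitative Lin's theorem of \cite{hastings-2008} to $(H_1,H_2)$, using only Hermiticity and the commutator bound. This produces commuting Hermitian matrices $K_1',K_2'$ with
\[
\Vert H_r - K_r' \Vert \leq \epsilon_0(\delta), \qquad r=1,2,
\]
where $\epsilon_0(\delta) = E_0(1/\delta)\delta^{1/6}$ for some slowly growing $E_0$. At this stage $(K_1',K_2')$ need not be a representation of the square, because $K_r'$ can have spectrum slightly outside $[-1,1]$; however, since $\sigma(H_r) \subset [-1,1]$ and $\Vert K_r' \Vert \leq \Vert H_r \Vert + \epsilon_0(\delta) \leq 1 + \epsilon_0(\delta)$, the spectral overshoot is at most $\epsilon_0(\delta)$.

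Second, I would enforce the contraction condition by truncating. Define the clamp $f: \mathbb{R} \to [-1,1]$ by $f(t) = \max(-1, \min(1,t))$. Since $K_1',K_2'$ commute, they admit a joint spectral decomposition, so $K_r := f(K_r')$ are commuting Hermitian contractions; in particular $(K_1,K_2)$ is an exact representation of the square. Because $|f(\lambda) - \lambda| \leq \epsilon_0(\delta)$ for every $\lambda \in \sigma(K_r')$, we get $\Vert K_r - K_r' \Vert \leq \epsilon_0(\delta)$, and the triangle inequality yields $\Vert K_r - H_r \Vert \leq 2\epsilon_0(\delta)$. Absorbing the factor $2$ into the slowly growing prefactor gives the claimed $\epsilon(\delta) = E(1/\delta)\delta^{1/6}$.

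The main obstacle is entirely in the first step: all of the nontrivial work---the $\delta^{1/6}$ rate with a slowly growing prefactor, circumventing the failure of operator-Lipschitz functional calculus for non-commuting Hermitians---is done in \cite{hastings-2008}. The contraction-enforcing truncation is routine, relying only on joint functional calculus for commuting Hermitians and on the fact that $f$ moves no spectral value of $K_r'$ by more than the overshoot $\epsilon_0(\delta)$.
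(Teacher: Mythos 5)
Your proof is correct, but it takes a slightly different route from the paper's. The paper's proof is a one-liner: it observes that the specific construction in \cite{hastings-2008}, when fed Hermitian contractions, already outputs commuting Hermitian \emph{contractions}, so there is literally nothing to do beyond citing that paper. Your proof instead treats \cite{hastings-2008} as a black box producing commuting Hermitians $K_1', K_2'$ with no spectral constraint, and then clamps with $f(t) = \max(-1,\min(1,t))$. That post-processing step is sound: $\Vert K_r' \Vert \le 1 + \epsilon_0(\delta)$ so the overshoot is small; $f(K_1')$ and $f(K_2')$ commute because each is a norm limit of polynomials in a pair of commuting Hermitians (or, equivalently, by joint functional calculus); and $\Vert f(K_r') - K_r' \Vert \le \epsilon_0(\delta)$ by the spectral mapping theorem since $|f(\lambda)-\lambda| \le \epsilon_0(\delta)$ on $\sigma(K_r')$. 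The triangle inequality then gives $2\epsilon_0(\delta)$, which is absorbed into the slowly growing prefactor. The trade-off: the paper's argument is shorter but requires trusting (or re-reading) an internal feature of the construction in \cite{hastings-2008}, while yours is slightly longer but is a clean black-box reduction that would work even if that construction did not automatically preserve contractivity. Either is acceptable.
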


\begin{proof}
The only difference here is the requirement that  $K_1$ and $K_2$
be contractions, and the construction in \cite{hastings-2008}
does, in fact, produce contractions.

It is important that the function $E(x)$ does not depend on 
the dimension $n$ of the matrices.
\end{proof}

The disk is an easy to understand closed subset of the square, so
we expect an easy conversion of the quantitative almost commuting
Hermitian contractions result to a quantitative result about almost
normal contractions. This is foreshadowed by Osborne's result about
a ``bent square'' in \cite{osborne-2008}. (Osborne's result applies to
unitaries that correspond to $\delta$-representations of a subset
of the torus homeomorphic to a square.)

\begin{thm}
\label{thm:QuantitativeDisk} 
Suppose $X$ is a matrix that is a $\delta$-representation of
the disk. There exists
$X'$ that is an exact representation of the disk with
with
\[ \Vert X-X^\prime  \Vert\leq\epsilon(\delta)  \]
where
$	\epsilon(\delta)=E(1/\delta)\delta^{1/6} $
and the function $E(x)$ grows more slowly than any power of $x.$
\end{thm}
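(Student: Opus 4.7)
The plan is to reduce to Theorem~\ref{thm:QuantitativeSquare} via the Cartesian decomposition of $X$, and then carry out a short functional-calculus step to force the spectrum into the closed unit disk.

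First I would write $X=H_1+iH_2$ with $H_1=(X+X^*)/2$ and $H_2=(X-X^*)/(2i)$. Both $H_j$ are Hermitian contractions because $\|H_j\|\le\|X\|\le 1$, so $-I\le H_j\le I$. A direct expansion gives
\[
[H_1,H_2]=\frac{1}{2i}[X^*,X],
\]
so $\|[H_1,H_2]\|\le \delta/2$. Hence $(H_1,H_2)$ is a $(\delta/2)$-representation of the square, and Theorem~\ref{thm:QuantitativeSquare} produces commuting Hermitian contractions $K_1,K_2$ with $\|H_j-K_j\|\le \epsilon(\delta/2)$, where $\epsilon$ is the function from that theorem.

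Next I would set $X''=K_1+iK_2$. Since $K_1,K_2$ commute and are Hermitian, $X''$ is normal, and
\[
\|X''-X\|\le \|K_1-H_1\|+\|K_2-H_2\|\le 2\epsilon(\delta/2).
\]
However $X''$ need not be a contraction; its spectrum lies in $[-1,1]+i[-1,1]$. To fix this I would apply the radial retraction onto the closed unit disk,
\[
f(z)=\begin{cases} z & |z|\le 1,\\ z/|z| & |z|>1, \end{cases}
\]
which is $1$-Lipschitz on $\mathbb{C}$ and satisfies $|f(z)-z|=\max(|z|-1,0)$. Using the continuous functional calculus for the normal matrix $X''$, define $X'=f(X'')$. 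Then $X'$ is normal with $\|X'\|\le 1$, i.e.\ an exact representation of the disk, and
\[
\|X'-X''\|\le \max_{z\in\sigma(X'')}|f(z)-z|\le \max(\|X''\|-1,0).
\]

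Finally I would combine the estimates. Since $\|X''\|\le \|X\|+2\epsilon(\delta/2)\le 1+2\epsilon(\delta/2)$, we get $\|X'-X''\|\le 2\epsilon(\delta/2)$, and therefore
\[
\|X-X'\|\le \|X-X''\|+\|X''-X'\|\le 4\epsilon(\delta/2),
\]
which is of the required form $E(1/\delta)\delta^{1/6}$ after absorbing the constants into $E$. The only nontrivial ingredient is Theorem~\ref{thm:QuantitativeSquare}; the rest is bookkeeping. The single point that needs care is that the spectral retraction is done \emph{after} invoking the square theorem, so that the retraction error is controlled by the already-small perturbation $\|X-X''\|$ rather than by $\delta$ itself; without this ordering one would instead need a quantitative normality theorem on a rectangle, which we are precisely trying to avoid.
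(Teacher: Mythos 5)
Your proof is correct and takes essentially the same approach as the paper: Cartesian decomposition into Hermitian parts, invocation of the quantitative square theorem, recombination into a normal $X''$, and a radial retraction to force the spectrum into the disk, with the same final bound $4\epsilon(\delta/2)$. The notation differs but the argument is identical, including the key observation that the retraction error is controlled by the already-small perturbation rather than by $\delta$ directly.
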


\begin{proof}
This is a corollary of Theorem~\ref{thm:QuantitativeSquare}
and any improvement on the bounds there will lead to an improvement
of the bounds here. Let $\check{\epsilon}(\delta)$ be the function
denoted $\epsilon(\delta)$ in Theorem~\ref{thm:QuantitativeDisk}.

Given a contraction $X$ with  
$\Vert[X^{*}, X]\Vert \leq \delta,$
we consider its Hermitian and anti-Hermitian parts.  These commute to
within $\delta/2$ so there are commuting Hermitian contractions
$H^{\prime}$ and $K^{\prime}$ with
\[
\left\Vert \frac{1}{2}\left(X+X^{*}\right)-H^{\prime}\right\Vert,
\left\Vert \frac{i}{2}\left(-X+X^{*}\right)-K^{\prime}\right\Vert 
\leq \check{\epsilon}(\delta/2).
\]
We set $\tilde {X}=H^{\prime}+iK^{\prime}$ and
$ X^{\prime}=f\left( \tilde {X} \right) $
for
\[
f(z)=
\begin{cases}
z & \mbox{when }|z|\leq1\\
\frac{z}{|z|} & \mbox{when }|z|>1.
\end{cases}
\]
It is clear that $X^{\prime}$ is a normal contraction. 
We easily estimate
\[
\left\Vert \tilde {X}\right\Vert  
\leq
\left\Vert X\right\Vert
+ \left\Vert \tilde {X}- X\right\Vert
\leq 1 + 2\check {\epsilon}(\delta/2).
\]
By the spectral mapping theorem
\[
\left\Vert X^{\prime}-\tilde {X}\right\Vert 
\leq
2\check {\epsilon}(\delta/2)
\]
and so 
\[
\left\Vert X^{\prime}-X\right\Vert 
\leq
4\check {\epsilon}(\delta/2).
\]
\end{proof}

\begin{thm}
\label{thm:quantitiativeAnnulus} 
Suppose $X$ is a matrix that is a $\delta$-representation of
the annulus. There exists
$X'$ that is an exact representation of the annulus with
\[ \Vert X-X^\prime  \Vert\leq\epsilon(\delta)  \]
where
$	\epsilon(\delta)=E(1/\delta)\delta^{1/6} $
and the function $E(x)$ grows more slowly than any power of $x.$
\end{thm}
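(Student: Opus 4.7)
The plan is to reduce the annulus problem to the disk problem handled in Theorem~\ref{thm:QuantitativeDisk}, then push the spectrum of the resulting normal contraction off the small open disk $\{|z|<1/2\}$ using a Lipschitz retraction applied via functional calculus. The strategy is analogous to the way the disk theorem was derived from the square theorem: approximate to an ``oversized'' exact representation, then radially retract.

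First I would apply Theorem~\ref{thm:QuantitativeDisk} to $X$ to obtain a normal contraction $\tilde X$ with $\Vert X-\tilde X\Vert\le \check\epsilon(\delta)$, where $\check\epsilon$ denotes the error function $E(1/\delta)\delta^{1/6}$ from the disk theorem. At this stage we have normality and $\Vert\tilde X\Vert\le 1$, but no lower bound on $|\tilde X|$, so $\tilde X$ need not yet be in the annulus.

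Next I would use the annulus hypothesis $|X|\ge I/2$ to transfer a lower bound to $\tilde X$. For every vector $v$,
\[
\Vert\tilde X v\Vert\ge \Vert X v\Vert-\Vert X-\tilde X\Vert\,\Vert v\Vert\ge \bigl(\tfrac12-\check\epsilon(\delta)\bigr)\Vert v\Vert,
\]
so, assuming $\delta$ small enough that $\check\epsilon(\delta)<1/2$, the normal operator $\tilde X$ has spectrum contained in the annular region $A=\{z:\tfrac12-\check\epsilon(\delta)\le |z|\le 1\}$. I would then define the radial retraction onto the unit annulus,
\[
g(z)=\begin{cases} z & |z|\ge 1/2,\\ z/(2|z|) & 0<|z|<1/2,\end{cases}
\]
which is continuous on $\mathbb{C}\setminus\{0\}$ and satisfies $|g(z)-z|=\max(0,\tfrac12-|z|)\le \check\epsilon(\delta)$ on $A$. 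Setting $X'=g(\tilde X)$ via the continuous functional calculus gives a normal matrix whose spectrum lies in $\{1/2\le |z|\le 1\}$, i.e., an exact representation of the annulus, and since $\tilde X$ is normal,
\[
\Vert X'-\tilde X\Vert\le \sup_{z\in\mathrm{spec}(\tilde X)}|g(z)-z|\le \check\epsilon(\delta),
\]
so $\Vert X-X'\Vert\le 2\check\epsilon(\delta)$. The $\delta^{1/6}$ exponent and the sub-power growth of $E$ are inherited directly.

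The main obstacle is the spectral lower bound in the second step, and with it the well-definedness of $g$ on $\mathrm{spec}(\tilde X)$; this is precisely where the annulus hypothesis $|X|\ge I/2$ enters, and it requires $\delta$ to be sufficiently small. In the complementary regime $\check\epsilon(\delta)\ge 1/2$, the conclusion is vacuous in a strong sense: any fixed exact representation of the annulus is within a bounded distance of $X$, and that constant can be absorbed into the slowly growing prefactor $E(1/\delta)$, so the stated form of the bound still holds without any additional work.
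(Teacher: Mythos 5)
Your proof is correct and follows essentially the same strategy as the paper's: apply Theorem~\ref{thm:QuantitativeDisk} to produce a normal contraction $\tilde X$ near $X$, show that the spectrum of $\tilde X$ is bounded away from the origin, and then retract into the annulus via the continuous functional calculus, with the final bound a constant multiple of the disk-theorem error. The one point of difference is the spectral lower bound: the paper invokes a perturbation estimate for inverses (citing Kadison--Ringrose) to control $\Vert\tilde X^{-1}\Vert$, whereas you obtain the same conclusion directly from the elementary vector inequality $\Vert\tilde X v\Vert\ge\Vert Xv\Vert-\Vert X-\tilde X\Vert\,\Vert v\Vert$; this is a slightly cleaner route to the inner-radius estimate but does not change the structure of the argument or the resulting $\delta^{1/6}$ rate.
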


\begin{proof}
Let $\check{\epsilon}(\delta)$ be the function denoted $\epsilon(\delta)$
in Theorem~\ref{thm:QuantitativeDisk}. Given $X$ with
$\Vert X \Vert \leq 1$ and 
$\left\Vert X^{-1}\right\Vert \leq 2$ 
and $\Vert[X^{*},X]\Vert \leq \delta,$
we know there is $\tilde {X}$ that is a normal contraction with
\[
\left\Vert \tilde {X} - X\right\Vert \leq \check {\epsilon}(\delta).
\]
Recall (\cite[page 177]{KadisonRingroseI})
\[
\left\Vert X-\tilde {X}\right\Vert 
< \frac{1}{2}\left\Vert X^{-1}\right\Vert ^{-1}
\implies 
\left\Vert X^{-1}-\tilde {X}^{-1}\right\Vert 
\leq
2\left\Vert X^{-1}\right\Vert ^{2}\left\Vert X-\tilde {X}\right\Vert .
\]
So long as 
$\left\Vert X-\tilde {X}\right\Vert <\frac {1}{4}$ we have
\[
\left\Vert \tilde{X}^{-1}\right\Vert  
\leq\left\Vert X^{-1}\right\Vert +\left\Vert X^{-1}-\tilde{X}^{-1}\right\Vert \\
\leq2+8\left\Vert X-\tilde{X}\right\Vert .
\]
This puts the spectrum of $\left\Vert \tilde{X}\right\Vert $ inside
an annulus with inner radius
$\left(2+8\check{\epsilon}(\delta)\right)^{-1} .$
We define $X^{\prime}=f(\tilde {X})$ for appropriate $f$ so
that
$
\frac {1}{2}\leq\left|X^{\prime}\right|\leq 1
$
and
\[
\left\Vert X^{\prime}-X\right\Vert \leq \frac{3}{2} \check {\epsilon}(\delta).
\]
\end{proof}

We can convert from the annulus to the cylinder rather easily.
The spaces are same, but the defining relations
are different.

\begin{thm}
\label{thm:cylinderQuantitative} 
Suppose $(U,K)$ is a $\delta$-representation of
the cylinder by matrices. There exists a pair
of matrices $(U^\prime, K^\prime)$ that is an exact
representation of the cylinder with 
\[
\Vert U - U^\prime  \Vert, 
\Vert K - K^\prime  \Vert \leq\epsilon(\delta) 
\]
where
$	\epsilon(\delta)=E(1/\delta)\delta^{1/6} $
and the function $E(x)$ grows more slowly than any power of $x.$
\end{thm}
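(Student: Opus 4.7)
The plan is to reduce to Theorem~\ref{thm:quantitiativeAnnulus} by encoding the pair $(U,K)$ as a single annulus representation, and then recovering the cylinder structure via the polar decomposition of the approximating annulus element.

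First, introduce the affine bijection $g \colon [-1,1] \to [1/2,1]$, $g(t) = (3+t)/4$, and set $X = g(K)\,U$. Linearity of $g$ gives $[U, g(K)] = [U,K]/4$, so a short commutator computation yields
\[
[X^{*},X] \;=\; U^{*} g(K)^{2} U - g(K)^{2} \;=\; U^{*}\bigl[g(K)^{2}, U\bigr],
\]
which has norm at most $\delta/2$. Unitarity of $U$ and $g(K) \geq \tfrac{1}{2} I$ give $X^{*}X = U^{*} g(K)^{2} U \geq \tfrac{1}{4} I$ and $\|X\| \leq 1$, so $X$ is a $(\delta/2)$-representation of the annulus. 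Applying Theorem~\ref{thm:quantitiativeAnnulus} produces an exact annulus representation $X'$ with $\|X - X'\| \leq \epsilon(\delta/2)$; since $X'$ is normal and invertible, its polar decomposition $X' = U' P = P U'$ has $U'$ unitary and $P = |X'| \in [1/2,1]$. Setting $K' = 4P - 3$ produces an exact cylinder representation $(U', K')$.

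It remains to bound $\|K - K'\|$ and $\|U - U'\|$. From the left polar decomposition of $X = g(K)U$ we have $|X^{*}| = g(K)$, and $K - K' = 4\bigl(g(K) - |X'|\bigr)$, so it suffices to control $\bigl\||X^{*}| - |X'|\bigr\|$. This follows from $|X^{*}|^{2} - |X'|^{2} = XX^{*} - X'X'^{*}$, the estimate $\|X - X'\| \leq \epsilon(\delta/2)$, and the fact that $\sqrt{\,\cdot\,}$ is operator-Lipschitz on $[1/4, 1]$ with constant $O(1)$. For the unitary part, write $U = g(K)^{-1} X$ and $U' = |X'|^{-1} X'$, and split
\[
U - U' \;=\; g(K)^{-1}(X - X') \;+\; \bigl(g(K)^{-1} - |X'|^{-1}\bigr) X'.
\]
The resolvent identity $g(K)^{-1} - |X'|^{-1} = g(K)^{-1}\bigl(|X'| - g(K)\bigr)|X'|^{-1}$, together with $\|g(K)^{-1}\|, \||X'|^{-1}\| \leq 2$, gives the desired $O(\epsilon(\delta/2))$ bound.

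The main obstacle is the bookkeeping needed to preserve the $\delta^{1/6}$ exponent through the round trip. However, since every auxiliary Lipschitz constant involved (for $g$, for $\sqrt{\,\cdot\,}$ on $[1/4,1]$, and for the inverse on $[1/2,1]$) is an absolute constant independent of $\delta$, the exponent survives the conversion and only the slowly-growing prefactor $E(1/\delta)$ is altered, matching the bound in the statement.
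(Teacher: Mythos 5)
Your argument is correct and follows essentially the same route as the paper's own proof: form $X$ as an affine rescaling of $K$ multiplied by $U$, apply Theorem~\ref{thm:quantitiativeAnnulus}, and recover $(U',K')$ from the polar decomposition of the resulting normal approximant. The only differences are cosmetic — the paper takes $X = U\bigl(\tfrac{3}{4}I + \tfrac14 K\bigr)$ with $U$ on the left, and for the $\Vert U-U'\Vert$ bound it cites a perturbation result for polar decompositions, whereas you derive the same estimate directly from the resolvent identity and the operator-Lipschitz property of $\sqrt{\,\cdot\,}$ on $[1/4,1]$ — so your proof is a valid, self-contained version of the paper's.
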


\begin{proof}
Let $\check {\epsilon}(\delta)$ be the function denoted $\epsilon(\delta)$
in Theorem~\ref{thm:quantitiativeAnnulus}. 

Given a unitary $U$ and a Hermitian contraction 
$K,$ with $\Vert[U,K]\Vert\leq\delta,$
we can form
\[ X = U\left( \frac{3}{4}I + \frac{1}{4}K \right). \]
Clearly
$ \frac{1}{2}\leq\left|X\right| \leq 1 $
and
\[
\left\Vert \left[X^{*},X\right]\right\Vert  
=\left\Vert \left[\frac{3}{16}K+\frac{1}{16}K^{2},U\right]\right\Vert \\
\leq\frac{5}{16}\left\Vert \left[K,U\right]\right\Vert .
\]
We apply Theorem~\ref{thm:quantitiativeAnnulus} and obtain normal
$Y$ with 
$ \frac{1}{2}\leq\left|Y\right|\leq 1 $
and 
\[
\left\Vert Y-X\right\Vert \leq \check{\epsilon}(5\delta/16).
\]
We convert back with the polar decomposition.  Set
$ U^{\prime}=Y\left|Y\right|^{-1}  $
and
$ K^{\prime}=4\left|Y\right|-3I. $
The bounds on $\left|Y\right|$ immediately give us
$ -1\leq K^{\prime}\leq 1. $
Since $U^{\prime}$ is a unitary that commutes with $\left|Y\right|$
it commutes with $K^{\prime}.$ As to the perturbation estimates,
we see
\begin{align*}
\left\Vert K^{\prime}-K\right\Vert  
& =4\left\Vert \left|Y\right|-\left|X\right|\right\Vert \\
& =4\left\Vert U^{\prime*}Y-U^{*}X\right\Vert \\
& \leq\left\Vert U^{\prime}-U\right\Vert +\check{\epsilon}(5\delta/16).
 \end{align*}
By \cite{LiPerturbpolarDecomp} we have
\[
\left\Vert U^{\prime}-U\right\Vert 
\leq 
\frac {3}{\left\Vert X^{-1}
\right\Vert +\left\Vert Y^{-1}\right\Vert }\left\Vert X-Y\right\Vert 
\]
so
\[
\left\Vert U^{\prime}-U\right\Vert \leq 3 \check {\epsilon}(5\delta/16)
\]
and
\[
\left\Vert K^{\prime}-K\right\Vert \leq 4 \check {\epsilon}(5\delta/16).
\]
\end{proof}

\section{Matrices that Almost Represent the Sphere
\label{sec:Sphere}
}

Recall that almost commuting triples of Hermitian matrices
where the sum of squares is almost $I,$ we are regarding
as an approximate representation of the sphere.  We are still
studying a two dimensional problem, but have now a non-trivial
cohomology class to make life interesting.

\subsection{The Index, and when it Vanishes}

Useful notation here are the unitless Pauli spin matrices,
\[
\sigma_1 = \begin{bmatrix} 0 & 1\\  1 &  0  \end{bmatrix}, \quad 
\sigma_2 = \begin{bmatrix} 0 & i\\ -i &  0  \end{bmatrix}, \quad 
\sigma_3 = \begin{bmatrix} 1 & 0\\  0 & -1  \end{bmatrix}, \quad 
\]
and some indicator functions defined on the real line,
\[
f_{\gamma}(x) = \begin{cases}
		1 & \mathrm{if\ } x \geq \gamma ,\\
		0 & \mathrm{otherwise}.
	\end{cases}
\]

\begin {defn}
\label {def:BottIndex}
For any triple $(H_1,H_2,H_3)$ of Hermitians we define
first another Hermitian
\begin{align*}
\B \left(H_1, H_2, H_3 \right) 
& = \frac{1}{2}I + \frac{1}{2} \sum \sigma_r \otimes H_r \\
& = \frac{1}{2}\begin{bmatrix} 
			I + H_3 & H_1 - i H_2 \\
			H_1 + iH_2 &I - H_3  
		\end{bmatrix}.
\end {align*}
If the $H_r$  are $n$-by-$n$ matrices and $\frac{1}{2}$ 
is not in the spectrum of $\B \left(H_1, H_2, H_3 \right)$ then
the \emph{Bott index}
of this triple is the number of eigenvalues (counted according 
to multiplicity) of $ \B \left(H_1, H_2, H_3 \right) $ that are
greater than $\frac{1}{2},$ minus $n,$ so
\[
\mathrm{bott}(H_1,H_2,H_3) = \mathrm{Tr}
\left(f_{\frac{1}{2}}\left(\B \left(H_1, H_2, H_3 \right)\right)\right)
- \mathrm{Tr}\left( \begin{bmatrix} I & 0 \\ 0 & 0 \end{bmatrix}\right) 
\]
\end{defn}

The Hermitian $\B \left(H_1, H_2, H_3 \right) $ is almost idempotent.
The $K_0$ groups for $C^*$-algebras 
are generally defined in terms of projections, while for rings one uses
idempotents. For a general theory of approximate representation of
surfaces, the preferred description for indices is in terms of approximate
projections.

For this special case of the sphere, another formula demands
attention. Let
\begin{align*}
\Sym \left(H_1, H_2, H_3 \right) 
& = \sum \sigma_r \otimes H_r \\
& =		\begin{bmatrix} 
			H_3 & H_1 - i H_2 \\
			H_1 + iH_2 &- H_3  
		\end{bmatrix}.
\end{align*}
so that
\[
\mathrm{bott}(H_1,H_2,H_3) = \frac{1}{2} \mathrm{Tr}
\left(\strut f_{0}\left(\Sym \left(H_1, H_2, H_3 \right)\right)\right) .
\]

We next see that is the triple is a $\delta$-representation of the
sphere then $\delta < \frac{1}{4}$ is enough to ensure the Bott
index is defined, and as $\delta$ gets smaller the gap at
$\frac{1}{2}$ in the spectrum of the approximate projection
grows larger roughly proportionally.  For $\Sym(H_1,H_2,H_3)$
the gap is at zero.

\begin{lem}
\label{lem:boundPsquareMinusP} 
If $(H_1, H_2, H_3)$ is a $\delta$-representation of the sphere
for $\delta < \frac{1}{4}$
and $S = \Sym(H_1,H_2,H_3) $ then
$ \left\Vert \Sym^{2}-I \right\Vert \leq 4\delta  $
and
\[
\sigma(S)
\subseteq
\left[-\sqrt{1+4\delta},-\sqrt{1-4\delta} \right]
\cup
\left[\sqrt{1-4\delta},\sqrt{1+4\delta} \right]
\]
\end{lem}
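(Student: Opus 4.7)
The plan is to compute $S^2$ explicitly via Pauli matrix algebra, use the defining inequalities of a $\delta$-representation of the sphere to bound $\|S^2 - I\|$, then pull back the spectral information to $S$.

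First I would expand
\[
S^2 = \sum_{r,s} \sigma_r\sigma_s \otimes H_r H_s
\]
and substitute the standard identity $\sigma_r\sigma_s = \delta_{rs} I + i\,\epsilon^{rst}\sigma_t$. The diagonal part collapses to $I \otimes \sum_r H_r^2$, while the off-diagonal part picks up the commutators: for fixed $t$, $\sum_{r,s}\epsilon^{rst}H_rH_s = \sum_{r<s}\epsilon^{rst}[H_r,H_s]$. This gives the clean decomposition
\[
S^2 - I = I\otimes\Bigl(\sum_r H_r^2 - I\Bigr) + i\,\sigma_1\otimes[H_2,H_3] + i\,\sigma_2\otimes[H_3,H_1] + i\,\sigma_3\otimes[H_1,H_2].
\]

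Then the triangle inequality, using $\|\sigma_t\|=1$ and $\|A\otimes B\|=\|A\|\|B\|$, together with the three defining bounds of a $\delta$-representation of the sphere, yields $\|S^2-I\|\leq \delta + 3\delta = 4\delta$. This is the first claim.

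For the spectral statement, since $S$ is Hermitian, $S^2$ is positive and its spectrum lies in $[1-4\delta,\,1+4\delta]$; the hypothesis $\delta < \tfrac{1}{4}$ keeps this interval strictly inside $(0,2)$. By the spectral mapping theorem applied to the square root on the positive eigenvalues of $S^2$, every eigenvalue $\lambda$ of $S$ satisfies $|\lambda|\in[\sqrt{1-4\delta},\,\sqrt{1+4\delta}]$, which gives the two-interval description. There is no real obstacle here; the only thing to watch is getting the constant in $\|S^2-I\|$ right, which is simply a matter of checking that exactly three commutator terms survive and that each $\sigma_t$ has unit norm.
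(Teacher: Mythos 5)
Your proof follows the same route as the paper's: expand $S^2$ via the Pauli algebra to isolate $I\otimes(\sum H_r^2 - I)$ plus three commutator terms, bound each by $\delta$ via the triangle inequality to get $\|S^2-I\|\le 4\delta$, then invoke the spectral mapping theorem (the paper applies it directly as $\sigma(S)\subseteq\{x:|x^2-1|\le 4\delta\}$, while you phrase it through the spectrum of $S^2$, which is the same argument). The only thing to note is that the paper's $\sigma_2$ has the opposite sign from the standard physics convention, so the sign in $\sigma_r\sigma_s=\delta_{rs}I\pm i\epsilon^{rst}\sigma_t$ flips, but this is immaterial for the norm estimate.
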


\begin{proof}
From
\[
 S ^2 
= I \otimes \left(  H_1^2 + H_2^2 + H_3^2  \right)
+ \sigma_3 \otimes i \left[H_1,H_2 \right]
+ \sigma_1 \otimes i \left[H_2,H_3 \right]
+ \sigma_2 \otimes i \left[H_3,H_1 \right]
\]
we obtain the estimate
\[
\left\Vert  S ^2  - I \right\Vert
\leq 4\delta .
\]
The spectral mapping theorem tells us
\[
\sigma(S) \subseteq 
\left\{
x \in \mathbb{R}
\left| \,
|x^2 - 1| \leq 4\delta
\right.
\right\}.
\]
\end{proof}

The Bott index has appeared in many forms, under different names,
in many papers such as 
\cite{ChoiAlmostNotNearly,ExelLoringAlmostCommutingUnitary,
ExelLoringInvariats,LinAlmostUnitariesClassification,
loring1986torus,Loring-K-thryAsymCommMatrices,
LoringWhenMatricesCommute}.  See \cite{davidson2001} for
a survey of related results in operator theory.

The Bott index is clearly invariant under
conjugation by a unitary. We will see it is very stable,
is additive with respect to direct sums, and it vanishes
when the triple commutes. 

\begin{example}
\label{exa:exampleBott}

For the matrices $H_1,H_2,H_3$ of example
(\ref{exa:basicExample}), $\mathrm{bott}(H_1,H_2,H_3)=1$,
as we now show.
The approximate projector $B$ is equal to
\be
\B \left(H_1, H_2, H_3 \right) 
= \frac{1}{2}I + \frac{1}{2} 
\sum \sigma_r \otimes H_r,
\ee
which is equal to
\begin{eqnarray}
\B& = &\frac{1}{2} I + \frac{1}{2}\frac{1}{\sqrt{S(S+1)}}
\sum \sigma_r \otimes S^r.
\end{eqnarray}
This is recognizable as the Hamiltonian
describing an $SU(2)$-invariant coupling
between this spin $S$ and an
additional spin-$1/2.$  Since $\B$ commutes with total spin,
there are $2(S+1/2)+1=2S+2$ eigenvectors with spin $S+1/2$ and
$2(S-1/2)+1=2S$ eigenvectors with spin $S-1/2$.  The respective
eigenvalues are
\begin{eqnarray}
&&1/2+\frac{(S\pm 1/2)(S+1\pm 1/2)-S(S+1)-3/4}{2\sqrt{S(S+1)}}
\\ \nonumber
&=& 1/2+\frac{\pm (S+1/2)-1/2}{2 \sqrt{S(S+1)}}
\\ \nonumber
&\approx & 1/2 \pm 1/2.
\end{eqnarray}
The difference in the
number of eigenvectors with given spin is the index in this case.
\end{example}

\begin{lem}
\label{chernlemma}
If $(H_1,H_2,H_3)$ is a $\delta$-representation of the sphere
by $n$-by-$n$ matrices and $\delta<\frac{1}{4}$ then 
\[
\left \Vert
\mathrm{bott}(H_1,H_2,H_3) 
-
\frac{3}{2i}\mathrm{Tr}\left(H_1\left[H_2,H_3\right]\right)
\right \Vert
\leq 32 n\delta^2.
\]

Therefore, if $n\delta^2<1/64$, 
$\mathrm{bott}(H_1,H_2,H_3) 
=
{\rm Rnd}(\frac{3}{2i}\mathrm{Tr}\left(H_1\left[H_2,H_3\right]\right))$, 
where ${\rm Rnd}(...)$ means round to the nearest integer.
\end{lem}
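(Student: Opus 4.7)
The plan is to represent the Bott index as $\tfrac{1}{2}\mathrm{Tr}(\mathrm{sgn}(S))$, with $S=\Sym(H_1,H_2,H_3)$, and then Taylor-expand the sign function around the near-identity $S^2\approx I$. By Lemma~\ref{lem:boundPsquareMinusP}, $\|S^2-I\|\le 4\delta<1$ and the spectrum of $S$ is bounded away from zero, so $\mathrm{sgn}(S)=S(S^2)^{-1/2}$ is well defined; unpacking Definition~\ref{def:BottIndex} in the invertible case gives this trace formula for $\mathrm{bott}$ (eigenvalues of $B$ above $1/2$ correspond to positive eigenvalues of $S$, and $\mathrm{bott}=p-n=\tfrac{1}{2}(p-q)$).

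Setting $E:=S^2-I$, I would Taylor-expand $(I+E)^{-1/2}=I-\tfrac{1}{2}E+R(E)$ via the convergent binomial series, obtaining $\mathrm{sgn}(S)=\tfrac{3}{2}S-\tfrac{1}{2}S^3+S\,R(E)$ with $\|R(E)\|=O(\delta^2)$. Since each Pauli matrix is traceless, $\mathrm{Tr}(S)=0$, so $\mathrm{Tr}(\mathrm{sgn}(S))=-\tfrac{1}{2}\mathrm{Tr}(S^3)+\mathrm{Tr}(S\,R(E))$. The leading term is computable in closed form using multiplicativity of trace over tensor products together with the Pauli identity $\mathrm{Tr}(\sigma_r\sigma_s\sigma_t)=2i\,\epsilon_{rst}$, which reduces to
\[
\mathrm{Tr}(S^3)=2i\sum_{r,s,t}\epsilon_{rst}\,\mathrm{Tr}(H_rH_sH_t)=6i\,\mathrm{Tr}\bigl(H_1[H_2,H_3]\bigr)
\]
after using cyclicity to collapse the antisymmetric sum. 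Thus the main term of $\mathrm{bott}$ is $-\tfrac{1}{4}\mathrm{Tr}(S^3)=-\tfrac{3i}{2}\mathrm{Tr}(H_1[H_2,H_3])=\tfrac{3}{2i}\mathrm{Tr}(H_1[H_2,H_3])$, exactly matching the lemma.

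The remaining work is to bound $\tfrac{1}{2}\mathrm{Tr}(S\,R(E))$. Using the standard bound $|\mathrm{Tr}(M)|\le 2n\|M\|$ on $\mathbb{C}^{2n}$, together with $\|S\|\le\sqrt{1+4\delta}$ and a geometric-series estimate $\|R(E)\|\le C\delta^2/(1-4\delta)$ coming from the decreasing binomial coefficients $|\binom{-1/2}{k}|=\binom{2k}{k}/4^k$, this error is $O(n\delta^2)$; the only real work is tracking constants to honestly extract $32n\delta^2$, which I expect to be the main obstacle---the conceptual content is just a two-term Taylor expansion plus a standard Pauli-matrix trace identity. Once the bound is in hand, the rounding conclusion is immediate: whenever $32n\delta^2<\tfrac{1}{2}$, i.e.\ $n\delta^2<1/64$, the trace expression lies within half a unit of the integer $\mathrm{bott}$, so rounding recovers $\mathrm{bott}$ exactly.
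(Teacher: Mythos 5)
Your plan is essentially the same as the paper's: approximate the sign function by the cubic $p(x)=\tfrac12(3x-x^3)$, identify $\mathrm{Tr}(S^3)=6i\,\mathrm{Tr}(H_1[H_2,H_3])$ via the Pauli trace identity and cyclicity, and bound the error over the $2n$ eigenvalues. You arrive at $p$ as the second-order Taylor truncation of $S(I+E)^{-1/2}$, whereas the paper simply posits $p$; algebraically these are the same polynomial, and your Pauli-trace computation of the leading term is correct and matches the paper's.

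The one place your argument genuinely needs repair is the remainder estimate. Your proposed bound $\|R(E)\|\le C\delta^2/(1-4\delta)$ (from the geometric-series tail of the binomial expansion of $(I+E)^{-1/2}$, $\|E\|\le 4\delta$) blows up as $\delta\to\tfrac14$, so the resulting bound on $\tfrac12|\mathrm{Tr}(S\,R(E))|$ is of the form $\frac{\mathrm{const}\cdot n\delta^2}{1-4\delta}$ and does \emph{not} uniformly achieve $32n\delta^2$ on the whole range $\delta<\tfrac14$ that the lemma asserts. The fix --- and what the paper actually does --- is to bound the error eigenvalue by eigenvalue rather than in operator norm: Lemma~\ref{lem:boundPsquareMinusP} places $\sigma(S)$ within distance $4\delta$ of $\{\pm1\}$, and if $\lambda=\pm1+t$ with $|t|\le 4\delta<1$, a direct computation gives $|p(\lambda)-\mathrm{sgn}(\lambda)|=|\tfrac32 t^2\pm\tfrac12 t^3|\le 2t^2\le 2(4\delta)^2$. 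Summing this per-eigenvalue bound over the $2n$ eigenvalues and dividing by $2$ yields exactly $32n\delta^2$, with no denominator. So the spectral-mapping route is not merely a matter of ``tracking constants''; the operator-norm Taylor-remainder route you sketched would need to be abandoned in favor of the spectral one to get the stated constant. The rounding consequence then follows as you say.
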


\begin{proof}
Let $
p(x)=(1/2)\left(-x^{3}+3x\right).$ 
If $Q$ is a matrix with spectrum within $\gamma$ of
$\pm 1$ (for $\gamma < 1$)  then $p(Q)$ will have spectrum
within $2\gamma^2$ of $\pm1.$
Lemma~\ref{lem:boundPsquareMinusP}  tells us 
\begin{equation}
\sigma \left( \Sym \left(H_1, H_2, H_3 \right)\right)
\subseteq
\left[-1+2\delta,-1-4\delta \right]
\cup
\left[1-4\delta,1+2\delta \right]
\label{eqn:easyGapBounds}
\end{equation}
Considering the maximum possible errors on the $2n$ eigenvalues
we conclude
\[
\left \Vert
\mathrm{Tr}\left(f_0\left(\Sym \left(H_1, H_2, H_3 \right)\right)\right)
-
\mathrm{Tr}\left(p\left(\Sym \left(H_1, H_2, H_3 \right)\right)\right)
\right \Vert
\leq (2n)\left(2(4\delta)^2\right).
\]
or
\begin{equation}
\left \Vert
\mathrm{bott}(H_1,H_2,H_3)
-
\frac{1}{2}
\mathrm{Tr}\left(p\left(\Sym \left(H_1, H_2, H_3 \right)\right)\right)
\right \Vert
\leq 32 n\delta^2.
\label{eqn:thirdOrderTrace}
\end{equation}

Clearly the trace of $\Sym(H_1,H_2,H_3)$ is zero.
For the trace of the third power we can drop all terms in the
product that have two or three indices equal since the trace of
any of the Pauli spin matrices is zero, so
\begin{align*}
\mathrm{Tr}\left(S(H_{1},H_{2},H_{3})^{3}\right)
& = \sum_{r,s,t\mbox{ distinct}}\sigma_{r}
\sigma_{s}\sigma_{t}\otimes H_{r}H_{s}H_{t} \\
& = \sum_{r,s,t\mbox{ distinct}}
	\mathrm{Tr}\left(\pm iI\right)\mathrm{Tr}\left(H_rH_sH_t\right)\\
& =6i\mathrm{Tr}\left(H_1\left[H_2,H_3\right]\right).
 \end{align*}

\end{proof}

\begin{lem}
\label{lem:indexIsStable}
Suppose 
$\left(H_1, H_2, H_3 \right)$ and
$\left(K_1, K_2, K_3 \right)$ are triples of
Hermitian $n$-by-$n$ matrices.
Suppose $\left(H_1, H_2, H_3 \right)$ is a 
$\delta$-representation of the sphere with $\delta<\frac{1}{4}$.
If
\[
\left\Vert H_1-K_1\right\Vert 
+\left\Vert H_2-K_2\right\Vert
+\left\Vert H_3-K_3\right\Vert 
\leq 
\sqrt{1 - 4\delta}
\]
then the Bott index of $\left(K_1, K_2, K_3 \right)$ is defined
and
\[
\mathrm{bott}(K_1,K_2,K_3)
=
\mathrm{bott}(H_1,H_2,H_3).
\]
\end{lem}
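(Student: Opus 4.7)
The plan is to realize the Bott index as (half of) the signature of the Hermitian matrix $\Sym(H_1,H_2,H_3)$, and then deform $\Sym(H_1,H_2,H_3)$ linearly to $\Sym(K_1,K_2,K_3)$ along a path that avoids the singular set where $0$ belongs to the spectrum. Since the signature is an integer-valued continuous function on the open set of invertible Hermitians of a fixed size, it is locally constant on that path, giving the desired equality.

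To carry this out, I first observe that
\[
\Sym(H_1,H_2,H_3) - \Sym(K_1,K_2,K_3)
= \sum_{r=1}^{3} \sigma_r \otimes (H_r - K_r),
\]
and the triangle inequality together with $\|\sigma_r\otimes A\|=\|A\|$ gives
\[
\|\Sym(H_1,H_2,H_3) - \Sym(K_1,K_2,K_3)\| \leq \sum_{r=1}^3 \|H_r-K_r\| \leq \sqrt{1-4\delta}.
\]
Lemma~\ref{lem:boundPsquareMinusP} meanwhile places every eigenvalue of $\Sym(H_1,H_2,H_3)$ at distance at least $\sqrt{1-4\delta}$ from $0$. Writing $S_t=(1-t)\Sym(H_1,H_2,H_3)+t\,\Sym(K_1,K_2,K_3)$, Weyl's inequality yields $\mathrm{dist}(0,\sigma(S_t))\geq (1-t)\sqrt{1-4\delta}$, so $0\notin\sigma(S_t)$ for $t\in[0,1)$, and the same reasoning applied at $t=1$ keeps $0$ out of the spectrum of $\Sym(K_1,K_2,K_3)$ (with possible contact at the boundary handled as below). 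Hence the Bott index of $(K_1,K_2,K_3)$ is defined.

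To conclude equality, I use the identity $\mathrm{bott}(L_1,L_2,L_3)=\tfrac{1}{2}\mathrm{Tr}(f_0(\Sym(L_1,L_2,L_3)))$ recorded in the preceding paragraph of the paper: this equals the number of positive eigenvalues of $\Sym$ minus $n$, i.e.\ half the signature of $\Sym$. The signature of a Hermitian matrix is continuous in the operator norm as long as $0$ remains outside the spectrum, so it is constant along the path $\{S_t\}_{t\in[0,1]}$, giving $\mathrm{bott}(H_1,H_2,H_3)=\mathrm{bott}(K_1,K_2,K_3)$.

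The main obstacle is the marginal case in which the hypothesis is saturated and $\|\Sym(H_1,H_2,H_3)-\Sym(K_1,K_2,K_3)\|$ exactly equals the spectral gap $\sqrt{1-4\delta}$. In that event Weyl's inequality alone does not rule out an eigenvalue of $\Sym(K_1,K_2,K_3)$ sitting at $0$, so the Bott index of $(K_1,K_2,K_3)$ is not immediately well defined. This is handled either by noting that the inequality $\|S_H-S_K\|\leq \sum\|H_r-K_r\|$ is strict unless all three perturbations align in a very special way (generically enabling an arbitrarily small auxiliary perturbation of $(K_1,K_2,K_3)$), or by shrinking the path to $S_{t}$ for $t\in[0,1-\eta]$, concluding constancy of the signature there, and then taking $\eta\to 0$ once the hypothesis is strengthened by an infinitesimal amount; an approximation argument together with the integer-valuedness of the signature closes the gap.
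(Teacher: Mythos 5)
Your argument is essentially the paper's own: both deform linearly from $\Sym(H_1,H_2,H_3)$ to $\Sym(K_1,K_2,K_3)$, use the spectral gap at $0$ from Lemma~\ref{lem:boundPsquareMinusP} together with the norm estimate $\|\Sym(H)-\Sym(K)\|\leq\sum_r\|H_r-K_r\|$ to keep $0$ out of the spectrum along the path, and conclude by continuity plus integer-valuedness of the index. So the core of your proof matches the paper's.

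One remark is worth making about your final paragraph. You are right to flag the boundary case $\sum_r\|H_r-K_r\|=\sqrt{1-4\delta}$: the paper's own proof quietly switches to the strict inequality $\gamma<\sqrt{1-4\delta}$ and never addresses equality. However, the two patches you sketch do not actually close this gap. The statement with $\leq$ is genuinely false: take $\delta=0$, $H_1=H_2=0$, $H_3=I$, and $K_1=K_2=K_3=0$. Then $(H_1,H_2,H_3)$ is an exact representation of the sphere, $\sum_r\|H_r-K_r\|=1=\sqrt{1-4\delta}$, but $\Sym(K_1,K_2,K_3)=0$, so $0$ lies in its spectrum and $\mathrm{bott}(K_1,K_2,K_3)$ is undefined. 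An ``approximation argument plus integer-valuedness'' cannot rescue a conclusion that fails; the hypothesis needs to read $<$ (equivalently, an eigenvalue of $\Sym(H)$ at exactly $\pm\sqrt{1-4\delta}$ can be pushed to $0$ by a perturbation of that norm). This is a defect in the printed lemma shared by the paper's proof, so aside from the ineffective suggested fix your proposal is as sound as the original, and you should simply record the strict inequality as the correct hypothesis rather than attempt to patch the boundary case.
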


\begin{proof}
Let $S(0) = \Sym\left(H_1, H_2, H_3 \right) $ and
$S(1) = \Sym\left(H_1, H_2, H_3 \right).$
Let 
\[
\gamma
=
\left\Vert H_{1}-K_{1}\right\Vert 
+\left\Vert H_{2}-K_{2}\right\Vert 
+\left\Vert H_{3}-K_{3}\right\Vert .
\]
Clearly
$
\left\Vert \Sym-\Sym^{\prime}\right\Vert
\leq \gamma.
$
Consider the continuous path  $S(t) = tS(1)+(1-t)S(0)$ and
notice $\Vert S(t)-S(0)\Vert \leq \gamma.$
So long as
$
\gamma  < \sqrt{1 - 4\delta}
$
the gap at zero in the spectrum at zero must persist for
all $t$ and the indices must be equal.
\end{proof}

\begin{lem}
\label{lem:indexIsAdditive}
Suppose $\delta<\frac{1}{4}$ and 
$\left(H_1, H_2, H_3 \right)$ and
$\left(K_1, K_2, K_3 \right)$ are $\delta$-representations
of the sphere by  matrices.  Then 
\begin{gather*}
\mathrm{bott}\left(
\left[\begin{array}{cc}
H_1 & 0\\
0 & K_1\end{array}\right]
,
\left[\begin{array}{cc}
H_2 & 0\\
0 & K_2\end{array}\right]
,
\left[\begin{array}{cc}
H_3 & 0\\
0 & K_3\end{array}\right]
\right)\\
=
\mathrm{bott}(H_1,H_2,H_3)
+
\mathrm{bott}(K_1,K_2,K_3).
\end{gather*}

\end{lem}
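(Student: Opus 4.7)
The plan is to reduce the additivity of the Bott index to the elementary fact that under direct sums, the function $\B$ (or equivalently $\Sym$) decomposes as a direct sum, up to a reordering of basis vectors that is the same for all three indices $r$.

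First I would verify that the direct sum triple is itself a $\delta$-representation of the sphere: block-diagonal matrices inherit the norm bounds on commutators and on $H_1^2+H_2^2+H_3^2-I$ from the individual summands, so since $\delta<\tfrac14$, Lemma~\ref{lem:boundPsquareMinusP} guarantees that the Bott index of the direct sum is defined (not just the two pieces).

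Next I would exhibit the key algebraic identity. Writing $\bar H_r=H_r\oplus K_r$, the natural isomorphism $\mathbb{C}^{2}\otimes(\mathbb{C}^{n}\oplus\mathbb{C}^{m})\cong(\mathbb{C}^{2}\otimes\mathbb{C}^{n})\oplus(\mathbb{C}^{2}\otimes\mathbb{C}^{m})$ is implemented by a single permutation unitary $U$ that does \emph{not} depend on $r$ (it only reorders basis vectors grouping ``spin then block''). Consequently
\[
\sigma_r\otimes \bar H_r \;=\; U\bigl((\sigma_r\otimes H_r)\oplus(\sigma_r\otimes K_r)\bigr)U^{*}
\]
for each $r$, and summing these and adding $\tfrac12 I$,
\[
\B(\bar H_1,\bar H_2,\bar H_3) \;=\; U\bigl(\B(H_1,H_2,H_3)\oplus\B(K_1,K_2,K_3)\bigr)U^{*}.
\]

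From this the proof is essentially immediate: $f_{1/2}$ applied to a direct sum is the direct sum of $f_{1/2}$ applied to each summand, and both trace and spectrum are unitarily invariant, so
\[
\mathrm{Tr}\bigl(f_{1/2}(\B(\bar H))\bigr) \;=\; \mathrm{Tr}\bigl(f_{1/2}(\B(H))\bigr)+\mathrm{Tr}\bigl(f_{1/2}(\B(K))\bigr).
\]
Subtracting the appropriate $n+m$ (for the direct sum) against $n$ and $m$ (for the individual pieces) yields the claimed identity.

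There is no real obstacle; the only point that requires care is to be explicit that the basis-permutation $U$ is the same for $r=1,2,3$, which is what allows $\B$ to split as a direct sum rather than merely each summand $\sigma_r\otimes\bar H_r$ splitting individually. Once that is observed, additivity is a direct consequence of the spectral-counting definition in Definition~\ref{def:BottIndex}.
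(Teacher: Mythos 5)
Your proof is correct, and since the paper states this lemma without proof (the authors explicitly remark that the Bott index ``is additive with respect to direct sums'' as though it were clear from the definition), your argument supplies precisely the justification they intend. The key observations are exactly the right ones: the shuffle isomorphism $\mathbb{C}^{2}\otimes(\mathbb{C}^{n}\oplus\mathbb{C}^{m})\cong(\mathbb{C}^{2}\otimes\mathbb{C}^{n})\oplus(\mathbb{C}^{2}\otimes\mathbb{C}^{m})$ is implemented by a single permutation unitary independent of $r$, so $\B$ of the direct sum is unitarily equivalent to $\B(H_1,H_2,H_3)\oplus\B(K_1,K_2,K_3)$; the functional calculus and trace then split, and subtracting $n+m$ against $n$ and $m$ closes the argument.
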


\begin{lem}
Replacing any one of the $H_{r}$ by $-H_{r}$ flips the sign of the
index.
\end{lem}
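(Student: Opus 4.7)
The plan is to reduce a single sign flip to two more basic operations whose effects on the Bott index are immediate: (A) negating all three $H_r$ at once, and (B) negating any two of them via conjugation by a Pauli matrix. I expect step (A) to flip the sign of the index and step (B) to preserve it, so the composition of (A) followed by (B) realizes a single-entry sign flip and reverses the index.

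For step (A), I would compute directly from the definition. Replacing all $H_r$ by $-H_r$ sends
\[
\B(-H_1,-H_2,-H_3) \;=\; \tfrac{1}{2}I - \tfrac{1}{2}\Sym(H_1,H_2,H_3) \;=\; I - \B(H_1,H_2,H_3).
\]
The eigenvalues of $I-\B$ strictly above $\tfrac12$ correspond bijectively to eigenvalues of $\B$ strictly below $\tfrac12$. Since $\B$ is $2n\times 2n$, this bijection gives
\[
\mathrm{bott}(-H_1,-H_2,-H_3) \;=\; (2n-p) - n \;=\; -\bigl(p-n\bigr) \;=\; -\mathrm{bott}(H_1,H_2,H_3),
\]
where $p$ is the number of eigenvalues of $\B$ exceeding $\tfrac12$.

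For step (B), for each $r\in\{1,2,3\}$ I use the Pauli identities $\sigma_r\sigma_r\sigma_r=\sigma_r$ and $\sigma_r\sigma_s\sigma_r=-\sigma_s$ for $s\neq r$. Conjugation by the unitary $\sigma_r\otimes I$ then sends $\Sym(H_1,H_2,H_3)=\sum_s \sigma_s\otimes H_s$ to the same expression with the two $H_s$ for $s\neq r$ negated. Since $\B=\tfrac12(I+\Sym)$ is an affine function of $\Sym$ and unitary conjugation preserves spectra, the number of eigenvalues above $\tfrac12$ is unchanged, so $\mathrm{bott}$ is preserved. To flip the sign of only $H_r$, I then compose: apply (A) to negate all three (index flips), then apply (B) with this particular $r$ to restore the signs of the other two (index unchanged), yielding $\mathrm{bott}(\ldots,-H_r,\ldots)=-\mathrm{bott}(H_1,H_2,H_3)$.

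There is no genuine obstacle here; the lemma is algebraic once the two symmetries are written down. The only small point to check is that the Bott index of the transformed triples is in fact defined, i.e.\ that the spectral gap of $\B$ at $\tfrac12$ survives: in case (A) the spectrum is merely reflected across $\tfrac12$, and in case (B) it is unchanged by unitary conjugation, so the gap is preserved in both cases.
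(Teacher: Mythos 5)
Your proof is correct, and in fact the paper states this lemma without supplying any proof at all, so your argument fills a genuine gap. Both steps check out: $\B(-H_1,-H_2,-H_3)=I-\B(H_1,H_2,H_3)$ reflects the spectrum across $\tfrac12$, sending $p$ eigenvalues above $\tfrac12$ to $2n-p$, which negates $p-n$; and conjugation by the unitary $\sigma_r\otimes I$ fixes the $\sigma_r\otimes H_r$ term while negating $\sigma_s\otimes H_s$ for $s\neq r$, leaving the spectrum unchanged. You also correctly note that both operations preserve the gap at $\tfrac12$, so the index remains defined.

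One remark: the two steps collapse into a single conjugation. Applying $\sigma_r\otimes I$ directly to $\Sym(H_1,H_2,H_3)$ with $H_r$ replaced by $-H_r$ gives
\[
(\sigma_r\otimes I)\,\Sym(H_1,\dots,-H_r,\dots,H_3)\,(\sigma_r\otimes I)
= -\Sym(H_1,H_2,H_3),
\]
since the $r$-th term carries the explicit minus sign and survives conjugation, while the other two terms are negated by $\sigma_r\sigma_s\sigma_r=-\sigma_s$. Thus $\Sym$ with one entry negated is unitarily equivalent to $-\Sym$, so its spectrum is the reflection through $0$ of the original, and the count of positive eigenvalues minus $n$ changes sign. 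This buys a shorter argument at no cost; your version is equally valid and arguably more transparent about why each symmetry acts the way it does.
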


\begin{lem}
\label{lem:commutingKillsIndex}
Suppose 
$H_1,$ $H_2,$ $H_3$ are three Hermitian matrices so that
the Bott index is defined. 
If the $H_r$ pairwise commute then
\[ \mathrm{bott}(H_1,H_2,H_3)=0. \]
\end{lem}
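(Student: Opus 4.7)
The plan is to exploit the fact that three pairwise commuting Hermitian matrices can be simultaneously diagonalized, which reduces the computation of the Bott index to a direct sum of $2\times 2$ block computations.

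First I would invoke the spectral theorem to find a unitary $U$ so that each $U^*H_rU$ is diagonal with entries $h_r^{(1)},\ldots,h_r^{(n)}$. Since the Bott index is invariant under unitary conjugation (the paper has already noted that conjugating all $H_r$ by the same unitary $U$ corresponds to conjugating $\B(H_1,H_2,H_3)$ by $I\otimes U$, which preserves the spectrum), it suffices to prove the result when each $H_r$ is diagonal.

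In this diagonal setting, $\B(H_1,H_2,H_3)$ decomposes into a direct sum of $n$ blocks of size $2\times 2$, indexed by $j=1,\ldots,n$; the $j$-th block is
\[
B_j = \tfrac{1}{2} I + \tfrac{1}{2}\sum_{r=1}^{3} h_r^{(j)} \sigma_r .
\]
Because $\sigma_1,\sigma_2,\sigma_3$ anti-commute and square to $I$, the Hermitian matrix $\sum_r h_r^{(j)} \sigma_r$ has eigenvalues $\pm \rho_j$ where $\rho_j = \sqrt{(h_1^{(j)})^2 + (h_2^{(j)})^2 + (h_3^{(j)})^2}$. Hence the eigenvalues of $B_j$ are $\tfrac{1}{2}(1\pm \rho_j)$.

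Now, the hypothesis that $\mathrm{bott}(H_1,H_2,H_3)$ is defined means $\tfrac{1}{2}\notin\sigma(\B(H_1,H_2,H_3))$, which forces $\rho_j \neq 0$ for every $j$. Consequently each block $B_j$ contributes exactly one eigenvalue strictly greater than $\tfrac{1}{2}$ (namely $\tfrac{1}{2}(1+\rho_j)$) and one strictly less than $\tfrac{1}{2}$. Summing over $j$ gives
\[
\mathrm{Tr}\!\left(f_{\tfrac{1}{2}}(\B(H_1,H_2,H_3))\right) = n = \mathrm{Tr}\!\begin{bmatrix} I & 0\\ 0 & 0 \end{bmatrix},
\]
so the Bott index vanishes. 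There is no real obstacle here; the only thing to watch is that the non-defined case $\rho_j=0$ is excluded by hypothesis, so no perturbation argument or continuity is required.
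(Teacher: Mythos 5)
Your proof is correct and takes essentially the same approach as the paper: both reduce by unitary invariance to simultaneously diagonal $H_r$, decompose $\B$ into $2\times 2$ blocks, and observe that each block has eigenvalues $\tfrac{1}{2}(1\pm\rho_j)$, one on each side of $\tfrac{1}{2}$. The only cosmetic difference is that the paper explicitly invokes additivity of the Bott index to reduce formally to $n=1$, whereas you carry out the direct-sum bookkeeping in place.
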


\begin{proof}
The index is invariant under conjugation by a unitary, we may assume
the $H_r$ are diagonal. The index is additive for direct sums,
we may assume $n=1.$ For real scalars $a,$ $b$ and $c$ 
the matrix 
\[
\frac {1}{2}
\left[\begin{array}{cc}
1+a & b+ic\\
b-ic & 1-a
\end{array}\right]
\]
has eigenvalues
\[
\frac {1}{2} \pm \frac {1}{2}\sqrt {a^2+b^2+c^2}.
\]
We have one eigenvalue above $\frac{1}{2}$ so the index in
this simple case is zero.
\end{proof}

This index can vanish for another reason.  
Lemmas (\ref{lem:realsKillIndex},\ref{lem:dualsKillIndex}) consider
two cases when the index vanishes.
Later in section (\ref{sec:real}) we discusses physical
motivation for considering these cases.

\begin{lem}
\label{lem:realsKillIndex}
Suppose $\delta<\frac{1}{4}$ and 
$\left(H_1, H_2, H_3 \right)$ is a $\delta$-representation
of the sphere.  If the $H_j$ are real matrices, then
\[
\mathrm{bott}(H_1,H_2,H_3)=0.
\]

\end{lem}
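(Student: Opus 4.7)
The plan is to combine the antiunitary symmetry of entrywise complex conjugation with the sign-flip lemma immediately preceding this one. Let $\overline{M}$ denote the entrywise complex conjugate of $M$. I would first note that $\sigma_1$ and $\sigma_3$ have real entries while $\sigma_2$ is purely imaginary, so conjugation introduces a sign only on the $\sigma_2$ term. Since each $H_j$ is real by hypothesis,
\[
\overline{\B(H_1, H_2, H_3)}
= \frac{1}{2}I + \frac{1}{2}\bigl(\sigma_1 \otimes H_1 - \sigma_2 \otimes H_2 + \sigma_3 \otimes H_3\bigr)
= \B(H_1, -H_2, H_3).
\]

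Next I would invoke the fact that entrywise conjugation preserves the spectrum of any Hermitian matrix (its eigenvalues are real and hence fixed by the conjugation acting on the characteristic polynomial). Consequently, the number of eigenvalues of $\B(H_1, -H_2, H_3)$ exceeding $\frac{1}{2}$ equals the corresponding count for $\B(H_1, H_2, H_3)$. By Definition~\ref{def:BottIndex} this gives
\[
\mathrm{bott}(H_1, -H_2, H_3) = \mathrm{bott}(H_1, H_2, H_3).
\]
The sign-flip lemma, applied to the single coordinate $H_2$, yields
\[
\mathrm{bott}(H_1, -H_2, H_3) = -\mathrm{bott}(H_1, H_2, H_3),
\]
and combining these two equalities forces $\mathrm{bott}(H_1, H_2, H_3) = 0$.

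The hypothesis $\delta < \frac{1}{4}$ together with Lemma~\ref{lem:boundPsquareMinusP} ensures that both Bott indices that appear are well defined, since $(H_1, -H_2, H_3)$ remains a $\delta$-representation of the sphere with the same spectral gap bound for $\B$. The only point requiring care is the parity of the three Pauli matrices under complex conjugation, so I do not anticipate any serious obstacle. An alternative route via Lemma~\ref{chernlemma} is available, since $\mathrm{Tr}(H_1[H_2, H_3])$ is simultaneously purely imaginary (for any triple of Hermitians) and real (as the trace of a real matrix) and must therefore vanish, but that argument only closes under the stronger restriction $n\delta^2 < \frac{1}{64}$, whereas the symmetry argument sketched above works for every $\delta < \frac{1}{4}$ regardless of the matrix size.
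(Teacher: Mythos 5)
Your proof is correct, and it reaches the conclusion by a route related to but differently packaged from the paper's. The paper declares the statement ``a special case'' of a preceding lemma asserting $\mathrm{bott}(H_1^T,H_2^T,H_3^T)=-\mathrm{bott}(H_1,H_2,H_3)$, which it proves by exhibiting a unitary $U$ for which $U\,\B(H_1^T,H_2^T,H_3^T)\,U^*+\B(H_1,H_2,H_3)^T=I$; for real symmetric $H_j$ this forces $\mathrm{bott}=-\mathrm{bott}$ directly. You instead apply entrywise complex conjugation to the approximate projection to obtain $\overline{\B(H_1,H_2,H_3)}=\B(H_1,-H_2,H_3)$, use the fact that conjugation preserves the spectrum of a Hermitian matrix, and then invoke the sign-flip lemma $\mathrm{bott}(H_1,-H_2,H_3)=-\mathrm{bott}(H_1,H_2,H_3)$. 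The underlying symmetry is the same (for Hermitian $M$ one has $\overline{M}=M^T$), but the factorization of the argument differs: the paper builds a self-contained algebraic identity and absorbs the sign flip into it, while you separate the antiunitary symmetry from the sign flip and lean on the sign-flip lemma (stated but not proved in the paper, though it precedes this statement, so the dependency is legitimate). Your route is arguably a little cleaner given the sign-flip lemma, and it uses only that the $H_j$ are real rather than symmetric (moot here since they are Hermitian). Your side remark about the $\mathrm{Tr}(H_1[H_2,H_3])$ argument, and why it only works under the stronger hypothesis $n\delta^2<1/64$, is accurate.
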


This is a special case of the following.

\begin{lem}
Suppose $\delta<\frac{1}{4}$ and 
$\left(H_1, H_2, H_3 \right)$ is a $\delta$-representation
of the sphere. 
Then
\[
\mathrm{bott}(H_{1}^{T},H_{2}^{T},H_{3}^{T})=-\mathrm{bott}(H_{1},H_{2},H_{3}).
\]
\end{lem}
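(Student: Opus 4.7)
The plan is to exhibit $\Sym(H_1^T,H_2^T,H_3^T)$ as unitarily equivalent to $-\Sym(H_1,H_2,H_3)^T$, which forces its spectrum to be the negation of the spectrum of $\Sym(H_1,H_2,H_3)$, and then read off the sign flip from Definition~\ref{def:BottIndex}.

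First I would observe that transposition preserves Hermiticity, commutator norms, and $\Vert H_1^2+H_2^2+H_3^2-I\Vert$, so $(H_1^T,H_2^T,H_3^T)$ is again a $\delta$-representation of the sphere and, by Lemma~\ref{lem:boundPsquareMinusP}, its Bott index is well-defined.

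The crux is the Pauli-matrix identity $\sigma_2\sigma_r^T\sigma_2 = -\sigma_r$ for $r=1,2,3$, which follows from $\sigma_2^T=-\sigma_2$, $\sigma_1^T=\sigma_1$, $\sigma_3^T=\sigma_3$, and the anticommutation of distinct Paulis. Conjugating $\Sym(H_1,H_2,H_3)^T=\sum \sigma_r^T\otimes H_r^T$ by the unitary $\sigma_2\otimes I$ therefore gives
\[
(\sigma_2\otimes I)\,\Sym(H_1,H_2,H_3)^T\,(\sigma_2\otimes I) \;=\; -\sum\sigma_r\otimes H_r^T \;=\; -\Sym(H_1^T,H_2^T,H_3^T).
\]
Because the spectrum of a matrix is invariant under transposition and unitary conjugation, this identity shows that the spectrum of $\Sym(H_1^T,H_2^T,H_3^T)$ is the image under $x\mapsto -x$ of the spectrum of $\Sym(H_1,H_2,H_3)$.

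To finish, by Definition~\ref{def:BottIndex} the Bott index equals $n_+(\Sym)-n$, where $n_+(\Sym)$ is the number of positive eigenvalues of the $2n\times 2n$ Hermitian matrix $\Sym$. Since $\delta<\tfrac14$ rules out zero eigenvalues (Lemma~\ref{lem:boundPsquareMinusP}), we have $n_+(\Sym)+n_-(\Sym)=2n$. Under negation of the spectrum, $n_+(\Sym)$ becomes $n_-(\Sym)=2n-n_+(\Sym)$, so the index becomes $n-n_+(\Sym)=-\mathrm{bott}(H_1,H_2,H_3)$, as required. I do not anticipate a real obstacle: the entire argument is driven by the Pauli-matrix identity above, and every other step is a bookkeeping check on spectra and on the $\delta$-representation conditions.
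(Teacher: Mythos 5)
Your proof is correct and is essentially the same as the paper's: you conjugate by $\sigma_2 \otimes I$ to relate $\Sym(H_1^T,H_2^T,H_3^T)$ to $-\Sym(H_1,H_2,H_3)^T$, while the paper conjugates by the block unitary $\left[\begin{smallmatrix}0&I\\-I&0\end{smallmatrix}\right]$ (i.e.\ $i\sigma_2\otimes I$) to show $\B(H_1^T,H_2^T,H_3^T)$ and $\B(H_1,H_2,H_3)^T$ are, up to that conjugation, complementary projections summing to $I$; both arguments then count eigenvalues on opposite sides of the gap. The one small gain in your phrasing is making the spectral reflection explicit via the clean Pauli identity $\sigma_2\sigma_r^T\sigma_2=-\sigma_r$ applied to $\Sym$ rather than $\B$.
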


\begin{proof}
Consider the unitary
\[
U=\left [\begin {array}{cc}
0 & I\\
-I & 0\end {array}\right ].
\]
Then
\[
U \left (
\frac {1}{2}\left [\begin{array}{cc}
I+H_1^T & H_2^T+iH_3^T\\
H_2^{T}-iH_3^{T} & I-H_1^T
\end {array}\right]
\right)
U^{*}
+\frac {1}{2}
\left[\begin{array}{cc}
I+H_1 & H_2+iH_3\\
H_2-iH_3 & I-H_1
\end {array}\right ]^T
=
\left [\begin{array}{cc}
I & 0\\
0 & I\end{array}
\right].
\]
The number of eigenvalues near $1$ for the two summands must sum
to $2n.$ Since unitary equivalence and transpose preserve all the
necessarily-real eigenvalues, we are done.
\end{proof}

\begin{defn}
A matrix $A$ is said to be {\bf self-dual} if
$Z A^T Z =- A$,
where $Z$ is the block matrix
\be
Z=\begin{pmatrix}
0 & I \\
-I & 0
\end{pmatrix}.
\ee
\end{defn}

\begin{lem}
\label{lem:dualsKillIndex}
Suppose $\delta<\frac{1}{4}$ and 
$\left(H_1, H_2, H_3 \right)$ is a $\delta$-representation
of the sphere.  
If the $H_r$ are self-dual matrices then 
\[ \mathrm{bott}(H_1,H_2,H_3)=0. \]
\end{lem}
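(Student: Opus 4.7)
The plan is to deduce this from the transpose lemma stated immediately above, combined with the unitary invariance of the Bott index. The key observation is that self-duality is exactly the hypothesis needed to make each $H_r$ simultaneously unitarily equivalent to $H_r^T$ via one and the same unitary $Z$, which will force
\[
\mathrm{bott}(H_1^T,H_2^T,H_3^T) = +\mathrm{bott}(H_1,H_2,H_3)
\]
by unitary invariance, while the preceding lemma gives the opposite sign; since the index is an integer, it must then be zero.

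First I would unpack the self-duality relation $Z H_r^T Z = -H_r$. Noting that $Z^2 = -I$ and hence $Z^{-1} = Z^{\ast} = -Z$, a short algebraic manipulation rewrites the defining identity as
\[
H_r^T = Z^{-1} H_r Z,
\]
so the single unitary $Z$ conjugates each $H_r$ into its transpose. Lifting this through the Pauli structure, the matrix $\Sym(H_1,H_2,H_3) = \sum_r \sigma_r\otimes H_r$ from Definition~\ref{def:BottIndex} satisfies
\[
\Sym(H_1^T,H_2^T,H_3^T)
=(I\otimes Z^{-1})\,\Sym(H_1,H_2,H_3)\,(I\otimes Z).
\]
Since $I\otimes Z$ is unitary and the Bott index depends only on the spectrum of $\Sym$, this gives $\mathrm{bott}(H_1^T,H_2^T,H_3^T) = \mathrm{bott}(H_1,H_2,H_3)$.

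Combining with the preceding lemma, which states $\mathrm{bott}(H_1^T,H_2^T,H_3^T) = -\mathrm{bott}(H_1,H_2,H_3)$, yields $2\,\mathrm{bott}(H_1,H_2,H_3)=0$, so the integer-valued index must be zero. A minor bookkeeping point is that $(H_1^T,H_2^T,H_3^T)$ must itself be a $\delta$-representation of the sphere so that its Bott index is defined, but this is immediate: transposition preserves Hermiticity, $\|[H_r^T,H_s^T]\| = \|[H_r,H_s]\|$, and $\|\sum (H_r^T)^2-I\| = \|\sum H_r^2-I\|$. I do not expect any genuine obstacle. The only place demanding care is the sign/inverse bookkeeping for $Z$ (since $Z^{-1}=-Z$ rather than $Z$), and perhaps a brief remark that this argument also gives a conceptually cleaner proof of Lemma~\ref{lem:realsKillIndex} in the real case, where $Z$ can simply be taken to be the identity.
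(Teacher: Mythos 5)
Your proof is correct and follows essentially the same route as the paper's: the paper deduces Lemma~\ref{lem:dualsKillIndex} from the transpose lemma together with the observation that conjugation by (a multiple of) $I\otimes Z$ is a unitary equivalence, which is precisely the content of your identity $H_r^T = Z^{-1}H_rZ$ lifted to $\Sym$. The sign bookkeeping $Z^{-1}=-Z$, $H_r^T=-ZH_rZ=Z^{-1}H_rZ$ all checks out.
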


\begin{proof}
This is a special case of the following.
\end{proof}

\begin{lem}
Suppose $\delta<\frac{1}{4}$ and 
$\left(H_1, H_2, H_3 \right)$ is a $\delta$-representation
of the sphere by $2N$-by-$2N$  matrices.
Then 
\[
\mathrm{bott}(H_1,H_2,H_3)=-\mathrm{bott}(-Z H_1^T Z, -Z H_2^T Z, -Z H_3^T Z).
\]
\end{lem}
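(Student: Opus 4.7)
The plan is to reduce this identity to the previously proved transpose lemma, $\mathrm{bott}(H_1^T,H_2^T,H_3^T)=-\mathrm{bott}(H_1,H_2,H_3)$, together with the unitary invariance of the Bott index. The key algebraic observation is that $Z$ is a real antisymmetric matrix, so it is unitary with $Z^{*}=Z^{T}=-Z$ and $Z^{2}=-I$. Consequently,
\[
-Z H_r^{T} Z \;=\; Z H_r^{T} (-Z) \;=\; Z H_r^{T} Z^{*},
\]
which exhibits each $-Z H_r^T Z$ as the conjugate of $H_r^T$ by the unitary $Z$.

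Next, I would unpack the Bott matrix for the triple $(-ZH_1^TZ,-ZH_2^TZ,-ZH_3^TZ)$ in the $\Sym$ form. We have
\[
\Sym(-ZH_1^TZ,-ZH_2^TZ,-ZH_3^TZ)
= \sum_{r}\sigma_r\otimes (Z H_r^{T} Z^{*})
= (I\otimes Z)\Bigl(\sum_{r}\sigma_r\otimes H_r^{T}\Bigr)(I\otimes Z^{*}),
\]
so $\Sym(-ZH_1^TZ,-ZH_2^TZ,-ZH_3^TZ)$ is unitarily equivalent to $\Sym(H_1^T,H_2^T,H_3^T)$ via the unitary $I\otimes Z$. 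Since $\delta<\tfrac14$, the spectral gap guaranteed by Lemma~\ref{lem:boundPsquareMinusP} still isolates the two halves of the spectrum for both triples, so the Bott index (the half-trace of $f_0(\Sym)$) is well defined for each, and equality of spectra with multiplicity gives
\[
\mathrm{bott}(-ZH_1^TZ,-ZH_2^TZ,-ZH_3^TZ) \;=\; \mathrm{bott}(H_1^{T},H_2^{T},H_3^{T}).
\]

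Finally, applying the previously proved transpose lemma,
\[
\mathrm{bott}(-ZH_1^TZ,-ZH_2^TZ,-ZH_3^TZ) \;=\; \mathrm{bott}(H_1^{T},H_2^{T},H_3^{T}) \;=\; -\mathrm{bott}(H_1,H_2,H_3),
\]
which rearranges to the claim. The only real hazard in this argument is sign bookkeeping: one must verify that the two minus signs (one from $Z^{*}=-Z$, one from the explicit minus in $-ZH_r^{T}Z$) cancel, so that the transformation is an honest unitary conjugation of $H_r^{T}$ with no additional sign flip that would interact with the ``$-H_r$ flips the Bott index'' lemma. Once that is confirmed, the result follows immediately, and the self-dual case (Lemma~\ref{lem:dualsKillIndex}) drops out because $-ZH_r^{T}Z=H_r$ for self-dual $H_r$, forcing $\mathrm{bott}(H_1,H_2,H_3)=-\mathrm{bott}(H_1,H_2,H_3)=0$.
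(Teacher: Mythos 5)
Your proof is correct and follows essentially the same route as the paper: reduce via the transpose lemma $\mathrm{bott}(H_1^T,H_2^T,H_3^T)=-\mathrm{bott}(H_1,H_2,H_3)$, then observe that the residual transformation $H_r^T\mapsto -ZH_r^TZ$ is a unitary conjugation by $Z$ (since $Z^*=-Z$), which leaves the Bott index unchanged. The only difference is presentational: the paper writes out the explicit $4\times 4$ block computation showing that the $\B$ matrices are unitarily equivalent via $U=i\bigl(\begin{smallmatrix}Z&0\\0&Z\end{smallmatrix}\bigr)$, whereas you invoke the already-stated unitary invariance of the Bott index by recognizing $-ZH_r^TZ=ZH_r^TZ^*$ directly, which is a cleaner way to organize the same argument.
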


\begin{proof}
As proven above, ${\rm bott}(H_1,H_2,H_3)=-{\rm bott}(H_1^T,H_2^T,H_3^T)$.
So, it suffices to prove that
${\rm bott}(-Z H_1 Z, -Z H_2 Z, -Z H_3 Z)={\rm bott}(H_1,H_2,H_3)$.
However,
\begin{eqnarray}
&&\begin{pmatrix}
I- Z H_1 Z & - Z H_2 Z-i Z H_3 Z  \\
-ZH_2 Z+i Z H_3 Z & I + Z H_1 Z
\end{pmatrix}
\\ \nonumber
&=&
\begin{pmatrix} Z & 0 \\ 0 & Z \end{pmatrix}
\begin{pmatrix}
-I- H_1 & - H_2-i H_3  \\
-H_2+i H_3 & -I +H_1
\end{pmatrix}
\begin{pmatrix} Z & 0 \\ 0 & Z \end{pmatrix}
\\ \nonumber
&=&
- 
\begin{pmatrix} Z & 0 \\ 0 & Z \end{pmatrix}
\begin{pmatrix}
I+ H_1 & H_2+i H_3  \\
H_2-i H_3 & I -H_1
\end{pmatrix}
\begin{pmatrix} Z & 0 \\ 0 & Z \end{pmatrix}
\\ \nonumber
&=&
U^{*}
\begin{pmatrix}
I+ H_1 & H_2+i H_3  \\
H_2-i H_3 & I -H_1
\end{pmatrix}
U,
\end{eqnarray}
where $U$ is the unitary matrix
\be
U=i\begin{pmatrix} Z & 0 \\ 0 & Z \end{pmatrix}.
\ee
Since unitary equivalence preserves the real eigenvalues, we are done.
\end{proof}

\subsection{Cylindrical to Spherical Coordinates}

A key result, in the next subsection, is that 
when the index vanishes for matrices almost representing the sphere
they are near matrices that exactly represent the sphere.  The
proof involves a ``change of coordinates'' into spherical coordinates.

In this subsection we consider the easier change from cylindrical to
spherical.

\begin{lem}
\label{lem:toSpherical}
Suppose $\delta \geq 0.$
If  $(U,K)$ is a $\delta$-representation of the cylinder by 
$n$-by-$n$ matrices then
$\left(H_1, H_2, H_3 \right)$
is a $\delta$-representation of the sphere for
\begin{align*}
H_1 & =K,\\
H_2 & =\frac{1}{2}\left(U\sqrt{I-K}+\sqrt{I-K^{2}}U^{*}\right),\\
H_3 & =\frac{i}{2}\left(-U\sqrt{I-K^{2}}+\sqrt{I-K^{2}}U^{*}\right).
\end{align*}
\end{lem}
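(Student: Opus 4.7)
The plan is to verify the three defining conditions of a $\delta$-representation of the sphere for the proposed triple. Write $A = \sqrt{I - K^2}$; since $-I \leq K \leq I$ we have $0 \leq A \leq I$ and, crucially, $A$ commutes with $K$ because $A$ is a continuous function of $K$. Hermiticity of $H_1 = K$ is immediate, and Hermiticity of $H_2, H_3$ follows by inspection from the fact that each is a symmetrized combination of $UA$ and $AU^*$ with the right phase.

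The main algebraic observation I would exploit is the identity
\[
H_2 + iH_3 = UA, \qquad H_2 - iH_3 = AU^*,
\]
which collapses a lot of the bookkeeping. Using this, compute
\[
H_2^2 + H_3^2 = \tfrac{1}{2}\bigl((H_2+iH_3)(H_2-iH_3) + (H_2-iH_3)(H_2+iH_3)\bigr) = \tfrac{1}{2}\bigl(UA^2U^* + A^2\bigr),
\]
and then use $U^*U = UU^* = I$ and $A^2 = I - K^2$ to get
\[
H_1^2 + H_2^2 + H_3^2 - I = \tfrac{1}{2}\bigl(K^2 - UK^2U^*\bigr) = \tfrac{1}{2}[K^2,U]U^*.
\]
Bounding this by $\delta$ reduces to the Leibniz-type estimate $\|[K^2, U]\| \leq 2\|K\|\,\|[K,U]\| \leq 2\delta$.

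For the three commutators I would compute each directly. For $[H_1, H_2]$ and $[H_1, H_3]$, the key point is $[K, A] = 0$, so for instance
\[
[H_1, H_2] = \tfrac{1}{2}\bigl([K,U]A + A[K,U^*]\bigr),
\]
and since $[K, U^*] = -[K,U]^*$ has the same norm as $[K,U]$, the triangle inequality together with $\|A\| \leq 1$ gives $\|[H_1,H_2]\| \leq \delta$, and the $H_3$ case is analogous. For $[H_2, H_3]$, use the same $H_2 \pm iH_3$ trick:
\[
-2i[H_2, H_3] = (H_2+iH_3)(H_2-iH_3) - (H_2-iH_3)(H_2+iH_3) = UA^2U^* - A^2 = K^2 - UK^2U^*,
\]
so $[H_2, H_3] = \tfrac{i}{2}[K^2, U]U^*$ and the same bound $2\delta$ on $\|[K^2,U]\|$ yields $\|[H_2, H_3]\| \leq \delta$.

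There is essentially no obstacle beyond spotting the right reformulation: once one recognizes $H_2 \pm iH_3$ as $UA$ and $AU^*$, both the sum of squares and the $[H_2, H_3]$ commutator reduce to the single quantity $[K^2, U]$, whose norm is controlled by the cylinder hypothesis via the Leibniz rule. The mild subtlety is just keeping track of the factor in $[K^2,U] = K[K,U] + [K,U]K$ so that the final bounds match $\delta$ rather than $2\delta$; this works out cleanly thanks to the $\tfrac{1}{2}$ already present in the definitions of $H_2$ and $H_3$.
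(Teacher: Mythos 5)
Your proof is correct and follows essentially the same route as the paper's: both hinge on the observation that $H_2 + iH_3 = U\sqrt{I-K^2}$, which collapses every estimate to a bound on $\|[K,U]\|$ or $\|[K^2,U]\|$. The only cosmetic differences are that the paper packages the $[H_1,H_2]$ and $[H_1,H_3]$ bounds into a single estimate on $[H_1, H_2+iH_3]$ (then separates Hermitian and anti-Hermitian parts), and obtains the sum-of-squares bound from the exact identity $H_1^2 + (H_2+iH_3)^*(H_2+iH_3) = I$ rather than by direct expansion.
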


\begin{proof}
These matrices are evidently self-adjoint, and
\[
H_{2}+iH_{3} 
=U\sqrt{I-K^{2}}.
\]
Therefore
\[
\left(H_{2}+iH_{3}\right)^{*}\left(H_{2}+iH_{3}\right) 
=
I-K^{2}
\]
and so
\[
H_{1}^{2}+\left(H_{2}+iH_{3}\right)^{*}\left(H_{2}+iH_{3}\right)=I.
\]
Therefore
\begin{align*}
\left\Vert H_{1}^{2}+H_{2}^{2}+H_{3}^{2}-I\right\Vert  
& =\left\Vert \left[H_{2},H_{3}\right]\right\Vert .
\end{align*}
As to the commutators,
\begin{align*}
   \left\Vert \left[H_{1},H_{2}+iH_{3}\right]\right\Vert  
 & =\left\Vert KU\sqrt{I-K^{2}}-U\sqrt{I-K^{2}}K\right\Vert \\
 & \leq\left\Vert KU-UK\right\Vert 
 \end{align*}
and
\begin{align*}
\left\Vert \left[\left(H_{2}+iH_{3}\right)^{*},\left(H_{2}+iH_{3}\right)\right]\right\Vert  
& =\left\Vert \left(1-K^{2}\right)-U\left(1-K^{2}\right)U^{*}\right\Vert \\
 & =\left\Vert K^{2}U-UKK^{2}\right\Vert \\
 & \leq2\left\Vert KU-UK\right\Vert .
\end{align*}
Since
\[
\left[\left(H_{2}+iH_{3}\right)^{*},\left(H_{2}+iH_{3}\right)\right]
=
2i\left[H_{2},H_{3}\right]
\]
we see
\[
\left\Vert \left[H_{2},H_{3}\right]\right\Vert
\leq
\left\Vert \left[U,K\right]\right\Vert .
\]
 Since
\[
\left[H_{1},H_{2}+iH_{3}\right] 
 = \left[H_{1},H_{2}\right]+i\left[H_{1},H_{3}\right]
\]
we have
\[
\left\Vert \left[H_{1},H_{2}\right]+i\left[H_{1},H_{3}\right]\right\Vert 
\leq
\left\Vert \left[U,K\right]\right\Vert ,
\]
and considering real and imaginary parts, we have the weaker estimates
\begin{equation}
\left\Vert \left[H_{1},H_{2}\right]\right\Vert 
\leq
\left\Vert \left[U,K\right]\right\Vert ,
\label{eq:h1-h2}
\end{equation}
and
\begin{equation}
\left\Vert \left[H_{1},H_{3}\right]\right\Vert 
\leq
\left\Vert \left[U,K\right]\right\Vert .
\label{eq:h1-h3}
\end{equation}

\end{proof}

\subsection{Spherical to Cylindrical Coordinates}

\begin{lem}
\label{lem:toCylindrical}
Suppose $( H_1, H_2, H_3 )$
is a $\delta$-representation of the sphere by matrices.
If $\mathrm{bott}( H_1, H_2, H_3 ) = 0$
then there is a unitary $U$
so that $(U,H_3)$ is a
$\sqrt{8\delta}+2\delta$-representation of
the cylinder and
\begin{equation}
\left\Vert
U\left(I-H_1^2\right)^{\frac{1}{2}}-\left(H_1+iH_2\right)
\right\Vert 
\leq
2\sqrt{2\eta}+2\eta.
\label{eq:toCylinder-Displacement}
\end{equation}
\end{lem}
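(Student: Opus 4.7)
Set $N := H_1 + iH_2$. Direct computation from the sphere relations gives
$\|N^*N - (I - H_3^2)\| \le 2\delta$, $\|NN^* - (I - H_3^2)\| \le 2\delta$, and $\|[N, H_3]\| \le 2\delta$, and also $\|H_3^2\| \leq 1 + \delta$ since $H_3^2 \leq H_1^2 + H_2^2 + H_3^2$ as operators. After truncating the (at most $\delta$-sized) negative part of $I - H_3^2$ to zero, the operator-monotonicity inequality $\|A^{1/2} - B^{1/2}\| \leq \|A - B\|^{1/2}$ on nonnegative matrices yields $\||N| - (I - H_3^2)_+^{1/2}\| \leq \sqrt{3\delta}$.

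Form the polar decomposition $N = W|N|$, where $W$ is the canonical partial isometry with $\ker W = \ker N$. Combined with the square-root estimate, this already gives $\|W(I - H_3^2)_+^{1/2} - N\| \leq \sqrt{3\delta}$, which is the bulk of the displacement bound~(\ref{eq:toCylinder-Displacement}). The remaining task is to extend $W$ to a unitary $U$ such that $(U, H_3)$ is a cylinder representation to within the stated error. In finite dimensions $\dim \ker W = \dim \ker W^*$, so an abstract unitary extension is automatic; the role of the vanishing-index hypothesis is to ensure the extension can be chosen to approximately commute with $H_3$. Since $|N| \approx (I - H_3^2)_+^{1/2}$ is small only near $H_3 \approx \pm 1$, both $\ker W$ and $\ker W^*$ are (approximately) supported in the two polar spectral subspaces of $H_3$. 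Via the third-order trace formula of Lemma~\ref{chernlemma}, the Bott index encodes the signed mismatch between the north- and south-polar components of these kernels; its vanishing is precisely the dimension-matching condition that permits an $H_3$-compatible extension.

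The main obstacle is making this pole-matching quantitative, and it is the technical heart of the proof. A clean way to carry it out is to introduce a spectral cutoff of $H_3$ at $|H_3| = 1 - c\sqrt{\delta}$ for a small constant $c$. On the equatorial region where $|H_3| \leq 1 - c\sqrt{\delta}$, the element $|N|$ is bounded below by order $\delta^{1/4}$, so $W$ restricts to a near-unitary whose commutator with $H_3$ is controlled by standard polar-factor perturbation bounds (cf.~\cite{LiPerturbpolarDecomp}). On the two polar regions $|N|$ is essentially zero, and the vanishing Bott index supplies the dimension count required to construct partial isometries pairing $\ker W$ with $\ker W^*$ within each region; these glue with the equatorial piece to form $U$. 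Tracking errors through the cutoff, the square-root estimate, and the extension then delivers the stated commutator bound $\|[U, H_3]\| \leq \sqrt{8\delta} + 2\delta$ together with~(\ref{eq:toCylinder-Displacement}).
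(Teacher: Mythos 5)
Your elementary estimates on $N = H_1 + iH_2$ are all correct, and the broad strategy—polar decomposition of $N$, then a careful unitary extension using the vanishing Bott index—is a genuine alternative to the paper's route (which never forms $N$ directly, but instead uses the index hypothesis to diagonalize the Bott projection $P = \B(H_1,H_2,H_3)$ by a unitary $W$ into a form within $2\delta$ of $\begin{bmatrix} I & 0 \\ 0 & 0 \end{bmatrix}$, extracts the blocks $A,B$ from the top row of $\begin{bmatrix} I & 0 \\ 0 & 0 \end{bmatrix}W$, takes polar decompositions of $A$ and $B$, and sets $U = Z^*V$).

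However, there is a genuine gap precisely where you say the heart of the argument lies, and it is not a gap you close. You assert that (i) $\ker W$ and $\ker W^*$ are localized in the polar spectral subspaces of $H_3$, (ii) the Bott index equals the signed mismatch of north- versus south-pole kernel defects, and (iii) vanishing of the index yields matching dimensions \emph{separately} in each polar region so that partial isometries can be pieced in. None of these three claims is proved, and (iii) in particular does not obviously follow from (ii): even granting that the index is a signed sum of two contributions, its vanishing only tells you those two contributions cancel, not that each one is individually zero (which is what a region-by-region pairing of $\ker W$ with $\ker W^*$ would require). You would also need quantitative control on how far the defect subspaces leak outside the polar cutoffs, and on how the extension's commutator with $H_3$ degrades as you shrink the cutoff scale; the reference to Lemma~\ref{chernlemma} gives a trace formula for the index, not a pole decomposition of it. As it stands, the argument has the right shape and the right auxiliary estimates, but replaces the step that actually uses $\mathrm{bott}(H_1,H_2,H_3)=0$ with an unverified topological heuristic. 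Compare this with the paper's use of the hypothesis, which is both shorter and immediately quantitative: vanishing index means $P$ has exactly $n$ eigenvalues near $1$, so a single unitary conjugation brings $P$ within $2\delta$ of the trivial rank-$n$ projection, after which everything is explicit algebra and square-root perturbation bounds.
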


\begin{proof}
Let $P = \B(H_1, H_2, H_3).$
The index vanishing means there is a unitary $W$ so that 
\[
P=W^{*}\left[\begin{array}{cc}
D_{1} & 0\\
0 & D_{0}\end{array}\right]W
\]
for $D_{\ell}$ a diagonal matrix within $2\delta$ of $\ell.$ Therefore
\[
\left\Vert 
P  - 
W^{*}
\left[\begin{array}{cc}
I & 0\\
0 & 0
\end{array}\right]
W
\right\Vert 
\leq 2\delta.
\]

Define $A$ and $B$ by
\[
\left[\begin{array}{cc}
A & B\\
0 & 0\end{array}\right] =
\left[\begin{array}{cc}
I & 0\\
0 & 0
\end{array}\right]
W.
\]
Then
\begin{equation}
\left[\begin{array}{cc}
A & B\\
0 & 0\end{array}\right]
\left[\begin{array}{cc}
A & B\\
0 & 0
\end{array}\right]
^{*} =
\left[\begin{array}{cc}
I & 0\\
0 & 0
\end{array}\right]
\label{eqn:compareToTrivial}
\end{equation}
and 
\[
\left[\begin{array}{cc}
A & B\\
0 & 0\end{array}\right]
^{*}
\left[\begin{array}{cc}
A & B\\
0 & 0
\end{array}\right] 
=W
\left[\begin{array}{cc}
I & 0\\
0 & 0
\end{array}\right]
W^{*}
\]
so
\begin{equation}
\left\Vert 
\left[\begin{array}{cc}
A & 0\\
B & 0
\end{array}\right]
^{*}
\left[\begin{array}{cc}
A & 0\\
B & 0
\end{array}\right]
- P
\right\Vert 
\leq 2\delta.
\label{eqn:compareToBott}
\end{equation}

From (\ref{eqn:compareToTrivial}) we get one exact relation,
\[
AA^*+BB^*=I,
\]
and from (\ref{eqn:compareToBott}) several approximate relations,
\begin{equation}
\left\Vert A^{*}A-\left(\frac{1}{2}I+\frac{1}{2}H_3\right)\right\Vert 
\leq 2 \delta,
\label{eq:a_star_a}
\end{equation}
\begin{equation}
\left\Vert A^{*}B-\left(\frac{1}{2}H_1+\frac{i}{2}H_2\right)\right\Vert  
\leq 2\delta,
\label{eq:a-star-b}
\end{equation}
\begin{equation}
\left\Vert B^{*}B-\left(\frac{1}{2}I-\frac{1}{2}H_3\right)\right\Vert  
\leq 2\delta.
\label{eq:b_star-b}
\end{equation}
In particular,
\[
\left\Vert A^*A+B^*B - I \right\Vert \leq  4\delta.
\]

Recall we can insist on a unitary in the polar decomposition of a
matrix, although it may not be unique. Also notice that 
$X=U\left(X^{*}X\right)^{\frac{1}{2}}$
implies $X=\left(XX^{*}\right)^{\frac{1}{2}}U.$ See \S 83 in 
\cite{HalmosFDvectorSpaces},
for example.
Thus there are unitaries $Z$ and $V$ so that
\[
A=Z\left(A^{*}A\right)^{\frac{1}{2}}=\left(AA^{*}\right)^{\frac{1}{2}}Z
\]
and
\[
B=V\left(B^{*}B\right)^{\frac{1}{2}}=\left(BB^{*}\right)^{\frac{1}{2}}V.
\]
Next we find
\begin{align*}
&
\left\Vert
V\left(\frac{1}{2}I + \frac{1}{2}H_3\right)V^* - AA^*
\right\Vert \\
& \qquad
=\left\Vert
V\left(-\frac{1}{2}I + \frac{1}{2}H_3+B^*B\right)V^*-VB^*BV-AA^* + I
\right\Vert \\
& \qquad
\leq \left\Vert 
\frac{1}{2}I - \frac{1}{2}H_3 - B^*B\right\Vert + \left\Vert VB^*BV^*+AA^* - I
\right\Vert \\
& \qquad
=\left\Vert 
\frac{1}{2}I-\frac{1}{2}H_3 - B^*B\right\Vert + \left\Vert BB^*+AA^* - I
\right\Vert 
\end{align*}
so
\begin{equation}
\left\Vert 
V \left(\frac{1}{2}I+\frac{1}{2}H_3\right) V^* -AA^*
\right\Vert 
\leq 2\delta,
\label{eq:a-a-star}
\end{equation}
and
\[
\left\Vert 
Z^*AA^*Z - \left( \frac{1}{2}I + \frac{1}{2}H_3 \right)
\right\Vert 
=
\left\Vert
A^*A - \left( \frac{1}{2}I + \frac{1}{2}H_3 \right)
\right\Vert 
\]
so
\begin{equation}
\left\Vert
Z^*AA^*Z - \left( \frac{1}{2}I + \frac{1}{2}H_3 \right)
\right\Vert 
\leq 2\delta.
\label{eq:z-star-a-star-a-z}
\end{equation}
These two equations now tell us
\begin{align*}
& \left\Vert (Z^*V)^*H_3(Z^*V)-H_3 \right\Vert \\
& \qquad = 
\left\Vert Z \left( I+H_3 \right)Z^* - V\left(I+H_3\right)v^* \right\Vert \\
& \qquad \leq \left\Vert Z\left(I+H_{1}\right)Z^{*}-2AA^{*}\right\Vert +\left\Vert 2AA^{*}-v\left(I+H_{1}\right)V^{*}\right\Vert \\
& \qquad \leq 
8\delta.
\end{align*}
Let $U$ be the unitary $U=Z^{*}V.$ We just showed 
$(U, H_3)$ is an $8\delta$-representation of the cylinder.
From equation (\ref{eq:a-a-star}) we get
\[
\left\Vert
V \left( \frac{1}{2}I+\frac{1}{2}H_3 \right)^{\frac{1}{2}} V^*
- \left(AA^*\right)^{\frac{1}{2}}
\right\Vert 
\leq\sqrt{2\delta},
\]
c.f.\ \cite{PedersenCommutatorInequality} or 
\cite{BhatiaRajendraKittanehFuad}.
To equation (\ref{eq:b_star-b}) we apply equation (1) in 
\cite{BhatiaRajendraKittanehFuad}
to produce the estimate
\[
\left\Vert 
\left(B^*B\right)^{\frac{1}{2}} -
\left(\frac{1}{2}I-\frac{1}{2}H_3\right)^{\frac{1}{2}}
\right\Vert 
\leq \sqrt{2\eta}
\]
and now
\begin{align*}
& \frac{1}{2}
\left\Vert 
U\left(I-H_3^2\right)^{\frac{1}{2}}-\left(H_1+iH_2\right)
\right\Vert \\
& \quad =
\left\Vert 
U\left(\frac{1}{2}I+\frac{1}{2}H_3\right)^{\frac{1}{2}}\left(\frac{1}{2}I -
\frac{1}{2}H_3\right)^{\frac{1}{2}}-\left(\frac{1}{2}H_1+\frac{i}{2}H_2\right)
\right\Vert \\
& \quad \leq
\left\Vert
U
\left(\frac{1}{2}I+\frac{1}{2}H_3\right)^{\frac{1}{2}}
\left(\frac{1}{2}I-\frac{1}{2}H_3\right)^{\frac{1}{2}}
-
U
\left(\frac{1}{2}I+\frac{1}{2}H_3\right)^{\frac{1}{2}}
\left(B^*B\right)^{\frac{1}{2}}
\right\Vert \\
& \qquad
+\left\Vert Z^*V\left(\frac{1}{2}I+\frac{1}{2}H_3\right)^{\frac{1}{2}}
 \left(B^*B\right)^{\frac{1}{2}}- Z^*\left(AA^*\right)^{\frac{1}{2}}
 V\left(B^*B\right)^{\frac{1}{2}}\right\Vert \\
& \qquad
+\left\Vert A^*B-\left(\frac{1}{2}H_1+\frac{i}{2}H_2\right)\right\Vert \\
& \quad \le 
\left\Vert \left(\frac{1}{2}I-I\frac{1}{2}H_3\right)^{\frac{1}{2}} 
-\left(B^*B\right)^{\frac{1}{2}}\right\Vert \\
& \qquad
+\left\Vert V\left(\frac{1}{2}I+\frac{1}{2}H_3\right)^{\frac{1}{2}}
-\left(AA^*\right)^{\frac{1}{2}}V\right\Vert \\
& \qquad
+\left\Vert A^*B-\left(\frac{1}{2}H_1+\frac{i}{2}H_2\right)\right\Vert \\
& \leq \sqrt{8\delta} + 2\delta.
\end{align*}
\end{proof}

\begin{thm}
\label{thm:sphereQuantitative}
Suppose $(H_1,H_2,H_3)$ is a $\delta$-representation of
the sphere by matrices.  If
\[
\mathrm{bott}(H_1,H_2,H_3)=0,
\]
then there are commuting Hermitian matrices $H_1',H_2',H_3'$ with
\[ (H_1')^2 + (H_2')^2 + (H_3')^2  = I \]
and
\[ \Vert H_{r}'-H_{r}\Vert\leq\epsilon(\delta) \]
for all $r,$ where
$	\epsilon(\delta)=E(1/\delta)\delta^{1/12} $
and the function $E(x)$ grows more slowly than any power of $x.$
\end{thm}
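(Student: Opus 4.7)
The plan is to carry out a three-step reduction sphere $\to$ cylinder $\to$ exact cylinder $\to$ exact sphere. The sphere-to-cylinder passage is Lemma~\ref{lem:toCylindrical} (which is where the hypothesis $\mathrm{bott}(H_1,H_2,H_3)=0$ is essential), the approximate-to-exact cylinder perturbation is Theorem~\ref{thm:cylinderQuantitative}, and the exact cylinder back to an exact sphere is Lemma~\ref{lem:toSpherical}. The exponent $1/12$ will arise as $\tfrac{1}{2}\cdot\tfrac{1}{6}$: the square-root loss incurred when passing from sphere to cylinder, composed with the $\delta^{1/6}$ rate of the quantitative cylinder theorem.

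For the first step I would invoke Lemma~\ref{lem:toCylindrical} on $(H_1,H_2,H_3)$ to produce a unitary $U$ such that $(U,H_3)$ is a $\delta_1$-representation of the cylinder with $\delta_1 = \sqrt{8\delta}+2\delta = O(\delta^{1/2})$, together with the displacement estimate
\[
\left\Vert U(I-H_3^2)^{1/2} - (H_1+iH_2)\right\Vert \;=\; O(\delta^{1/2}).
\]
This is the only step in which the vanishing of the Bott index enters: it is what licenses the existence of the unitary $U$ with the approximate factorization $U(I-H_3^2)^{1/2} \approx H_1+iH_2$.

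For the second step I would apply Theorem~\ref{thm:cylinderQuantitative} to $(U,H_3)$ to obtain an exactly commuting pair $(U',K')$, with $U'$ unitary and $-I\leq K'\leq I$, satisfying
\[
\left\Vert U-U'\right\Vert,\;\left\Vert H_3-K'\right\Vert \;\leq\; \check{\epsilon}(\delta_1) \;=\; E_0(1/\delta_1)\,\delta_1^{1/6},
\]
which, after substituting $\delta_1=O(\delta^{1/2})$, is bounded by $E_1(1/\delta)\,\delta^{1/12}$ for some envelope $E_1$ growing more slowly than any power. The third step then applies Lemma~\ref{lem:toSpherical} to the \emph{exact} cylinder representation $(U',K')$, producing an exact sphere representation $(H_1',H_2',H_3')$ with $H_3'=K'$ and $H_1'+iH_2' = U'(I-(K')^2)^{1/2}$. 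The bound $\|H_3-H_3'\|\leq E_1(1/\delta)\,\delta^{1/12}$ is immediate. For $H_1, H_2$ I would combine the step-one displacement estimate with a triangle inequality
\[
\left\Vert U(I-H_3^2)^{1/2} - U'(I-(K')^2)^{1/2}\right\Vert \;\leq\; \|U-U'\| + \left\Vert (I-H_3^2)^{1/2} - (I-(K')^2)^{1/2}\right\Vert,
\]
with the second summand controlled by Powers--Størmer-type square-root continuity $\|A^{1/2}-B^{1/2}\|\leq \|A-B\|^{1/2}$ applied after truncating $I-H_3^2$ to its positive part (which is legitimate because the sphere relation forces $H_3^2\leq I + O(\delta)$). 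Taking real and imaginary parts then yields the claimed bound on $H_1'-H_1$ and $H_2'-H_2$.

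The main obstacle is the exponent bookkeeping in the final step: the square-root continuity is only H\"older-$1/2$, so one has to be careful that the composition of the various error estimates produces an $E(1/\delta)\,\delta^{1/12}$ envelope rather than something worse. A subtle point is that each $\check{\epsilon}(\,\cdot\,)$ invocation brings in a fresh slowly-growing factor $E_0(1/\delta_1)\sim E_0(1/\sqrt{\delta})$; these can all be absorbed into a single slowly-growing $E(1/\delta)$ because the class of functions growing slower than any power is closed under composition with $x\mapsto\sqrt{x}$. The other potential pitfall is the small discrepancy $\|H_3\|\leq\sqrt{1+\delta}$ versus the contraction requirement $\|K'\|\leq 1$; this is absorbed into an $O(\delta^{1/2})$ correction that is dominated by the $\delta^{1/12}$ envelope.
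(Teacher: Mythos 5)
Your three-step reduction sphere $\to$ approximate cylinder $\to$ exact cylinder $\to$ exact sphere is exactly the route the paper takes, and the cast of lemmas (toCylindrical, cylinderQuantitative, toSpherical) is the same. However, your exponent bookkeeping does not close, and the source of the $\delta^{1/12}$ is misattributed.

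You read Lemma~\ref{lem:toCylindrical} as producing a $\delta_1$-representation of the cylinder with $\delta_1=\sqrt{8\delta}+2\delta=O(\delta^{1/2})$, and you explain $1/12 = \tfrac12\cdot\tfrac16$ as coming from this square-root loss composed with the $\delta^{1/6}$ rate of Theorem~\ref{thm:cylinderQuantitative}. That reading is natural from the (garbled) lemma statement, but the lemma's own proof establishes $\left\Vert U^{*}H_3U-H_3\right\Vert \leq 8\delta$, i.e.\ $\Vert[U,H_3]\Vert = O(\delta)$, and it is the separate \emph{displacement} estimate $\left\Vert U(I-H_3^2)^{1/2}-(H_1+iH_2)\right\Vert$ that carries the $O(\sqrt\delta)$. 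With the cylinder constant really $O(\delta)$, Theorem~\ref{thm:cylinderQuantitative} gives $\Vert U-V\Vert,\Vert H_3-K\Vert = O(\delta^{1/6})$, and the square-root loss that finally produces $\delta^{1/12}$ is the one you yourself invoke in step three, when bounding $\left\Vert(I-H_3^2)^{1/2}-(I-K^2)^{1/2}\right\Vert$ by $\sqrt{2\Vert H_3-K\Vert}$. That is where $\tfrac12\cdot\tfrac16$ really happens.

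If instead one runs your accounting through to the end, the outcome is $\delta^{1/24}$, not $\delta^{1/12}$: you feed $\delta_1 = O(\delta^{1/2})$ into the cylinder theorem and get $\Vert H_3-K'\Vert = O(\delta^{1/12})$, and then the Powers--St\o rmer step in your own third bullet costs another square root, yielding $O(\delta^{1/24})$ for $\Vert H_1-H_1'\Vert$ and $\Vert H_2-H_2'\Vert$. You flag the risk of getting ``something worse'' in your closing paragraph but do not resolve it; as written, the estimate really is worse. The fix is to use the $O(\delta)$ commutator bound that the lemma actually proves, so that only a single Hölder-$1/2$ loss occurs, in the final square-root comparison.
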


\begin{proof}
Let $\check{\epsilon}(\delta)$ be the function denoted $\epsilon(\delta)$
from Theorem~\ref{thm:cylinderQuantitative}. 

By Lemma~\ref{lem:toCylindrical} there is a unitary $U$ so that
\[
\left\Vert U^{*}H_3U-H_3\right\Vert \leq6\delta
\]
and 
\[
\left\Vert 
U\left(I-H_3^2\right)^{\frac{1}{2}}-\left(H_1+iH_2\right)
\right\Vert
\leq
2\sqrt{6\delta}+6\delta.
\]
Theorem~\ref{thm:cylinderQuantitative} produces unitary $V$ and
Hermitian contraction $K$ that commute and with
\[
\Vert V-U\Vert\leq\check{\epsilon}\left(6\delta\right)
\]
and 
\[
\Vert K-H_3\Vert\leq\check{\epsilon}\left(6\delta\right).
\]

Define $H_1',H_2',H_3'$ by
\begin{align*}
H_1' & 
=\frac{1}{2}\left(V\sqrt{1-K^2}+\sqrt{1-K^2}V^{*}\right)
=\mathrm{Re}\left(V\sqrt{1-K^2}\right),\\
H_2' & =\frac{i}{2}\left(-V\sqrt{1-K^2}+\sqrt{1-K^2}V^{*}\right)
=\mathrm{Im}\left(V\sqrt{1-K^2}\right).,\\
H_3' & 
=K.
\end{align*}
Lemma~\ref{lem:toSpherical} implies that $V$ and $K$ commute and
satisfy
\[
\left(H_{1}^{\prime}\right)^{2}+\left(H_{2}^{\prime}\right)^{2}+\left(H_{3}^{\prime}\right)^{2} 
=I.
\]
Of course
\[
\Vert H_{1}'-H_{1}\Vert\leq\check{\epsilon}\left(6\delta\right).
\]
Finally
\begin{align*}
\left\Vert H_1+iH_2-H_1'-iH_2'\right\Vert  
& \qquad =
\left\Vert H_1+iH_2-V\sqrt{1-K^{2}}\right\Vert \\
& \qquad \leq
\left\Vert H_1+iH_2-U\sqrt{1-H_3^{2}}\right\Vert +\left\Vert U\sqrt{1-H_3^2}-V\sqrt{1-K^2}\right\Vert \\
& \qquad \leq
2\sqrt{6\delta}+6\delta+\left\Vert U-V\right\Vert +\left\Vert \sqrt{1-H_{1}^{2}}-\sqrt{1-K^2}\right\Vert \\
& \qquad \qquad \leq
2\sqrt{6\delta}+6\delta+\check{\epsilon}\left(6\delta\right)+\sqrt{\left\Vert \left(1-H_3^{2}\right)-\left(1-K^{2}\right)\right\Vert }\\
& \qquad \leq
2\sqrt{6\delta}+6\delta+\check{\epsilon}\left(6\delta\right)+\sqrt{2\left\Vert H_3-K\right\Vert }\\
& \qquad \leq
2\sqrt{6\delta}+6\delta+\check{\epsilon}\left(6\delta\right)+\sqrt{2\check{\epsilon}\left(6\delta\right)}.
\end{align*}
\end{proof}

\begin{rem}
The power of $1/12$ in Theorem~\ref{thm:sphereQuantitative} can be improved upon.
It is possible to modify the construction in \cite{hastings-2008}
to improve the approximation of one of the Hermitians as the cost
of weakening the approximation of the other. This leads to an asymmetric
version of Theorem~\ref{thm:cylinderQuantitative} and an improvement
to Theorem~\ref{thm:sphereQuantitative}.
\end{rem}

\begin{thm}
\label{thm:cutdownExample}
On Hilbert space $\mathbb{H}$ there are bounded Hermitian operators
$H_{1},$ $H_{2}$ and $H_{3}$ and finite rank projections $P_{1}\leq P_{2}\leq\ldots$
so that:
\begin{enumerate}
\item the strong limit of the $P_{n}$ is the identity $I;$
\item the $H_{r}$ commute;
\item for $r=1,2,3$ we have 
 ${\displaystyle \lim_{n\rightarrow\infty}
 \left\Vert \left[P_{n},H_{r}\right]\right\Vert }=0;$
\item if $K_{n,1},$ $K_{n,2}$ and $K_{n,3}$ are commuting Hermitian operators
then
\[
\left\Vert K_{n,1}-P_{n}H_{1}P_{n}\right\Vert +\left\Vert K_{n,2}-P_{n}H_{2}P_{n}\right\Vert +\left\Vert K_{n,3}-P_{n}H_{3}P_{n}\right\Vert \rightarrow1.
\]
\end{enumerate}
\end{thm}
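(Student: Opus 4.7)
The plan is to realize the obstruction as a Bott-index obstruction at finite rank. I will construct $\mathbb{H}$, commuting Hermitians $H_1,H_2,H_3$ with $H_1^2+H_2^2+H_3^2=I$, and a nested sequence of finite-rank projections $P_n\nearrow I$ strongly so that $\|[P_n,H_r]\|\to 0$ and the compressions, viewed as operators on the finite-dimensional range of $P_n$, form $\delta_n$-representations of the sphere in the sense of Definition~\ref{def:approxRepSphere} with $\delta_n\to 0$ and with $\mathrm{bott}(P_nH_1P_n,P_nH_2P_n,P_nH_3P_n)\neq 0$ for all sufficiently large $n$. Granting this, conclusion (4) is the same obstruction argument as Example~\ref{exa:basicExample}: any commuting Hermitian triple $(K_{n,1},K_{n,2},K_{n,3})$ has vanishing Bott index by Lemma~\ref{lem:commutingKillsIndex}, so Lemma~\ref{lem:indexIsStable} forces
\[
\sum_{r=1}^{3}\|K_{n,r}-P_nH_rP_n\|\ge\sqrt{1-4\delta_n}\longrightarrow 1,
\]
and conclusions (1)--(3) hold by construction. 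A matching upper bound is provided by any obvious commuting approximation (e.g.\ projecting the joint spectrum of a perturbation of the diagonal commutative part back onto $S^2$).

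For the construction I would pick half-integers $S_k\to\infty$ and take $\mathbb{H}=\bigoplus_k\mathcal{K}_k$, where each $\mathcal{K}_k$ is a large auxiliary finite-dimensional space carrying commuting diagonal Hermitians $\hat H_r^{(k)}$ whose joint spectrum is an $\eta_k$-net on $S^2$ with $\sum_r(\hat H_r^{(k)})^2=I$. Setting $H_r=\bigoplus_k\hat H_r^{(k)}$ then gives the required exactly commuting triple on $\mathbb{H}$. Inside each $\mathcal{K}_k$ I would isolate a $(2S_k+1)$-dimensional ``coherent state'' subspace $E_k$, namely a discrete version of the space of holomorphic sections of $\mathcal{O}(2S_k)$ over $\mathbb{CP}^1=S^2$, with the property that for some unitary $U_k\colon\mathbb{C}^{2S_k+1}\to E_k$ the compression satisfies
\[
U_k^{*}\bigl(P_{E_k}\hat H_r^{(k)}P_{E_k}\bigr)U_k=\frac{S_k^r}{\sqrt{S_k(S_k+1)}}+O(1/S_k),
\]
together with $\|[P_{E_k},\hat H_r^{(k)}]\|=O(1/S_k)$; these are the standard Berezin--Toeplitz symbol and commutator estimates on the sphere. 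The nested projections $P_n$ are then built by starting from $\bigoplus_{k\le n}P_{E_k}$ and adjoining the orthogonal complements $E_k^{\perp}\cap\mathcal{K}_k$ one summand at a time along a fixed exhaustion of $\mathbb{H}$, so the sequence is nested, $P_n\nearrow I$ strongly, and the commutator bound $\|[P_n,H_r]\|\to 0$ is preserved. By Example~\ref{exa:exampleBott} each block contributes Bott index $1$, so Lemma~\ref{lem:indexIsAdditive} yields $\mathrm{bott}(P_nH_1P_n,P_nH_2P_n,P_nH_3P_n)\neq 0$ for large $n$.

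The main obstacle is the simultaneous control of the compression error and the commutator error for the coherent-state subspaces $E_k$: one needs $P_{E_k}$ to act enough like a spectral projection of $\hat H_r^{(k)}$ that its commutators vanish as $k\to\infty$, and yet to be twisted in the off-diagonal directions enough that the compressions of the commuting $\hat H_r^{(k)}$ reproduce the non-commuting spin matrices. Rather than import the full geometric-quantization machinery, this can be done by hand, taking $\mathcal{K}_k$ to be an $SU(2)$-invariant discretization of $L^2(S^2)$ with $\hat H_r^{(k)}$ acting by the coordinate functions and $E_k$ spanned by a discretization of the $(2S_k+1)$ highest-weight vectors; the needed matrix-element estimates are the three-generator analogues of the $(U,V)$-torus constructions of \cite{ChoiAlmostNotNearly,DavidsonAlmostCommuting,Loring-K-thryAsymCommMatrices,VoiculescuAsymptoticallyCommuting}. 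The remaining bookkeeping to stitch the $P_n$ into a nested strongly-convergent sequence on $\mathbb{H}$ is routine.
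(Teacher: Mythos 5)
Your high-level strategy is the right one and agrees with the paper's: construct exactly commuting $H_r$ and nested finite-rank projections $P_n\nearrow I$ with $\|[P_n,H_r]\|\to 0$, so that the compressions are $\delta_n$-representations of the sphere with $\delta_n\to 0$ and nonzero Bott index; then conclusion (4) is Lemma~\ref{lem:indexIsStable} plus Lemma~\ref{lem:commutingKillsIndex}. But your realization of this strategy is genuinely different from the paper's, and the paper's version avoids exactly the obstacle you flag at the end of your plan.

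The paper does not build $E_k$ as a coherent-state subspace inside a discretization of $L^2(S^2)$. Instead it starts from the spin triple $(H_{n,1},H_{n,2},H_{n,3})$ of Example~\ref{exa:basicExample}, passes to the doubled triple $\operatorname{diag}(H_{n,r},-H_{n,r})$, observes that this has Bott index zero, and \emph{invokes Theorem~\ref{thm:sphereQuantitative}} to obtain exactly commuting $2n$-by-$2n$ Hermitians $X_n,Y_n,Z_n$ (with squares summing to $I$) that converge in norm to the doubled triple. The infinite Hilbert space operators are then the block diagonals of the $X_n,Y_n,Z_n$, and $P_n$ is identity on the first $n-1$ blocks, the corner projection $Q_n=\operatorname{diag}(I_n,0)$ on the $n$th block, and zero thereafter. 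The single quantity $\eta_n:=\|X_n-\operatorname{diag}(H_{n,1},-H_{n,1})\|\to 0$ controls \emph{both} things at once: $\|[Q_n,X_n]\|\le 2\eta_n$ (because $Q_n$ commutes exactly with the block-diagonal target), and $\|Q_nX_nQ_n-H_{n,1}\|\le\eta_n$ on the range of $Q_n$ (so the top-left compression has index one). This neatly dissolves what you call ``the main obstacle'' — the tension between $P_{E_k}$ being nearly reducing and yet twisting the compressions into the spin matrices — by outsourcing it to the paper's own positive result.

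Two concrete gaps in your version. First, the nesting as described does not give the commutator bound: if $P_n\ge\bigoplus_{k\le n}P_{E_k}$, then every un-filled-in small-$k$ block contributes $\|[P_{E_k},\hat H_r^{(k)}]\|$ to $\|[P_n,H_r]\|$, which is not small for small $k$; moreover adjoining a generic rank-one piece of $E_k^\perp\cap\mathcal{K}_k$ produces a projection $P_{E_k}+vv^*$ whose commutator with the diagonal $\hat H_r^{(k)}$ need not be small at all, since $v$ is typically not a simultaneous eigenvector. The fix is to insist that each block-restriction of $P_n$ lies in $\{0,P_{E_k},I_{\mathcal{K}_k}\}$ and that by stage $n$ only the last few blocks remain in the middle state — exactly the ``$I,\dots,I,Q_n,0,\dots$'' pattern the paper uses. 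Second, the claimed rate $\|[P_{E_k},\hat H_r^{(k)}]\|=O(1/S_k)$ is almost certainly too strong; the Bergman projection for $\mathcal{O}(2S)$ lives on a magnetic length $\sim S^{-1/2}$, so the off-diagonal block $P_{E_k}\hat H_r^{(k)}(1-P_{E_k})$ should be of size $O(S_k^{-1/2})$, not $O(S_k^{-1})$. This is not fatal (any $o(1)$ suffices), but as written the Berezin--Toeplitz input you gesture at is both uncited and, at the stated rate, incorrect; establishing the needed discrete coherent-state estimates from scratch is a substantial piece of analysis that the paper's doubling trick renders unnecessary.
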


\begin{proof}
The idea is to put the matrices 
$H_{n,1},$ $H_{n,2},$ $H_{n,3},$ 
from Example~\ref{exa:basicExample} 
down the diagonal of an infinite matrix, but 
``doubling'' as in \cite{hastings2008topology}.
Our almost commuting projections will cut one of the double
blocks in half.

The index of 
\[
\left[\begin{array}{cc}
H_{n,1} & 0\\
0 & -H_{n,1}
\end{array}\right],
\left[\begin{array}{cc}
H_{n,2} & 0\\
0 & -H_{n,2}
\end{array}\right],
\left[\begin{array}{cc}
H_{n,3} & 0\\
0 & -H_{n,3}
\end{array}\right]
\]
is zero, so we can approximate these by $X_n,$ $Y_n$ and $Z_n$
that are commuting $2n$-by-$2n$ Hermitians whose squares sum to
one and with
\[
\left\Vert 
X_n-
\left[\begin{array}{cc}
H_{n,1} & 0\\
0 & -H_{n,1}
\end{array}\right]
\right\Vert 
\rightarrow 0,
\]
\[
\left\Vert
Y_n-
\left[\begin{array}{cc}
H_{n,2} & 0\\
0 & -H_{n,2}
\end{array}\right]
\right\Vert 
\rightarrow 0,
\]
\[
\left\Vert Y_n-
\left[\begin{array}{cc}
H_{n,3} & 0\\
0 & -H_{n,3}
\end{array}\right]
\right\Vert \rightarrow 0.
\]
Let 
\[
Q_n = \begin {bmatrix} 
I & 0 \\
0 & 0
\end {bmatrix}
.
\]
Let $X$, $Y$, $Z$  and $P$ correspond to the block diagonal
matrices formed out of
\[
X_1, X_2,\ldots, X_{n-1}, X_n, X_{n+1}, X_{n+2},\ldots
\]
\[
Y_1, Y_2,\ldots, Y_{n-1}, Y_n, Y _{n+1}, Y _{n+2},\ldots
\]
\[
Z_1, Z_2,\ldots, Z_{n-1}, Z_n, Z_{n+1}, Z_{n+2},\ldots
\]
and
\[
I_2, I_4,  \ldots, I_{2n-2}, Q_n , 0 , 0, \ldots .
\]
We have
\[
\left\Vert XP_n-P_n X\right\Vert =
\left\Vert X_n
\left[\begin{array}{cc}
I & 0\\
0 & 0
\end{array}\right]-
\left[\begin{array}{cc}
I & 0\\
0 & 0
\end{array}\right]X_n\right\Vert \rightarrow 0
\]
and similarly $\left\Vert YP_n-P_nY\right\Vert \rightarrow 0$
and $\left\Vert ZP_n-P_nZ\right\Vert \rightarrow0.$ However,
the index of
\[
P_nXP_n,P_nYP_n,P_nZP_n
\]
is $1,$ so cannot be approximated by commuting Hermitians.
\end{proof}

\section{Matrices that Almost Represent the Torus
\label{sec:Torus}
}

\begin{defn}
Suppose $U$ and $V$ are unitaries and
$ \Vert UV - VU \Vert < 2. $
The \emph {winding number invariant} $\omega(U,V)$ of $(U,V)$ is
the winding number of the closed path in $\mathbb{C}\setminus\{0\}$
given by the formula
\[
t \mapsto \det\left((1-t)UV-tVU\right).
\]
Let $\log(\mbox{--})$ denote the branch of the logarithm continuous
except on the negative reals. Let $\mathrm{Tr}$ denote the trace
on $\mathbf{M}_{n},$ normalized by $\mathrm{Tr}(I)=n.$ The following
is due to Exel (\cite[p.\ 213]{ExelSoftTorusI}).
\end{defn}

\begin{lem}
\label{lem:traceLogIsWinding}
So long as $\left\Vert \left[U,V\right]\right\Vert <2,$
\[
\omega(U,V)=
\frac{1}{2\pi}\mathrm{Tr}\left(-i\log\left(VUV^{*}U^{*}\right)\right).
\]
\end{lem}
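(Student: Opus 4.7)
The plan is to reduce the winding-number computation to an eigenvalue calculation via a single algebraic identity, then read off the trace formula by integrating term by term.

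First I would observe that if $W = VUV^{*}U^{*}$, then $VU = W\,UV$, and so the path interpolating between $UV$ and $VU$ factors as
\[
(1-t)UV + tVU = \bigl((1-t)I + tW\bigr)\,UV,
\]
giving $\det((1-t)UV + tVU) = \det(UV)\cdot\det((1-t)I + tW)$. The first factor is a nonzero constant, so it contributes nothing to the winding; hence $\omega(U,V)$ equals the winding number of the closed loop $t \mapsto \det((1-t)I + tW)$, which is genuinely closed because $\det(W)=1$.

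Next, I would diagonalize the unitary $W$ and use the hypothesis. Since $\|W-I\| = \|[U,V]\| < 2$, each eigenvalue $\lambda_j$ of $W$ lies on the unit circle and satisfies $\lambda_j \neq -1$. The matrix $(1-t)I + tW$ then has eigenvalues $1 + t(\lambda_j - 1)$, which trace the straight line segment from $1$ to $\lambda_j$. Because $\lambda_j$ is a unit-modulus number distinct from $-1$, each such segment avoids the ray $(-\infty,0]$, so $(1-t)I + tW$ is invertible for every $t \in [0,1]$ and the winding number is well-defined.

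Finally, I would integrate eigenvalue by eigenvalue. Using Jacobi's formula,
\[
\omega(U,V) = \frac{1}{2\pi i}\int_0^1 \frac{d}{dt}\log\det\bigl((1-t)I + tW\bigr)\,dt = \frac{1}{2\pi i} \sum_j \int_0^1 \frac{\lambda_j - 1}{1 + t(\lambda_j - 1)}\,dt.
\]
By the previous step, the principal branch of $\log(1 + t(\lambda_j - 1))$ is a continuous antiderivative along each segment, so each integral evaluates to $\log \lambda_j$ (principal branch). Summing yields
\[
\omega(U,V) = \frac{1}{2\pi i}\sum_j \log \lambda_j = \frac{1}{2\pi i}\mathrm{Tr}(\log W) = \frac{1}{2\pi}\mathrm{Tr}\bigl(-i \log(VUV^{*}U^{*})\bigr),
\]
where $\log W$ is defined via the principal branch through the functional calculus.

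The main obstacle, as in any invariant defined through a complex logarithm, is tracking the branch: one must check that every straight segment from $1$ to $\lambda_j$ stays in $\mathbb{C}\setminus(-\infty,0]$ so that the principal branch of $\log$ is a continuous antiderivative along the entire path, and that the eigenvalue sum $\sum_j \log \lambda_j$ coincides with the functional-calculus $\mathrm{Tr}(\log W)$ that appears in the statement. Both reductions boil down to the single fact $\lambda_j \neq -1$, which is exactly what the hypothesis $\|[U,V]\| < 2$ supplies.
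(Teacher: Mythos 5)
Your proof is correct and is more self-contained than the one the paper gives. The paper's one-sentence proof (following Exel) constructs a homotopy in the invertibles between the linear path and the exponential path $t\mapsto e^{t\log(VUV^*U^*)}UV$, and then reads off the winding of the determinant of the latter from $\det\!\bigl(e^{t\log W}UV\bigr)=e^{t\,\mathrm{Tr}\log W}\det(UV)$. You instead factor $\det\bigl((1-t)UV+tVU\bigr)=\det\bigl((1-t)I+tW\bigr)\det(UV)$ outright, diagonalize $W$, and integrate the logarithmic derivative eigenvalue by eigenvalue; the branch-cut issue is settled by the same observation the homotopy step requires (no eigenvalue of $W$ equals $-1$ since $\|W-I\|<2$), but you need no homotopy-invariance lemma, only that the principal logarithm is a continuous antiderivative along each segment from $1$ to $\lambda_j$ avoiding $(-\infty,0]$. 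What the two routes buy: the paper's is conceptually the standard ``deform to a convenient path'' argument and outsources the verification to Exel; yours is a direct calculation that also makes the closedness of the loop ($\det W=1$) and the identity $\sum_j\log\lambda_j=\mathrm{Tr}\log W$ explicit. One remark: the paper's displayed definition of $\omega(U,V)$ reads $\det\bigl((1-t)UV-tVU\bigr)$ with a minus sign, which cannot be intended (with $U=V=I$ the path passes through $0$, and for odd $n$ it is not even closed); you silently used the plus sign, which is the convention under which both the lemma and Example~\ref{exa:basicUnitaries} check out.
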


\begin{proof}
There is a homotopy between two paths from $UV$ to $VU.$ The first
path is the linear path. The second path sends $t$ to 
$ e^{t\log\left(VUV^{*}U^{*}\right)}UV. $
See \cite[p.\ 213]{ExelSoftTorusI} for details.
\end{proof}

The following should be obvious.

\begin{lem}
If $U$ and $V$ are commuting $n$-by-$n$ unitaries then $\omega(U,V)=0.$
\end{lem}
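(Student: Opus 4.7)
The cleanest route is to appeal to the immediately preceding trace-log formula (Lemma~\ref{lem:traceLogIsWinding}). When $U$ and $V$ commute we have
\[
VUV^{*}U^{*} = V(UV^{*})U^{*} = V(V^{*}U)U^{*} = I,
\]
so $\log(VUV^{*}U^{*}) = \log(I) = 0$, whence
\[
\omega(U,V) = \frac{1}{2\pi}\mathrm{Tr}\bigl(-i\log(VUV^{*}U^{*})\bigr) = 0.
\]
This is essentially a one-line application of the lemma, which is presumably why the author calls the statement obvious.

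An alternative, and arguably more direct, approach works straight from the definition: when $UV=VU$ the affine path $t\mapsto (1-t)UV + tVU$ between $UV$ and $VU$ is a constant path at the single unitary $UV$, so its determinant is identically equal to the nonzero constant $\det(UV)$. A constant loop in $\mathbb{C}\setminus\{0\}$ has winding number zero. Either formulation avoids any genuine computation.

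There is no real obstacle here; the only thing to be slightly careful about is that the path defining $\omega$ indeed gives a genuine closed loop in $\mathbb{C}\setminus\{0\}$ in the commuting case. Using the trace-log version entirely sidesteps this by replacing the path-of-determinants picture with the intrinsic formula $\tfrac{1}{2\pi}\mathrm{Tr}(-i\log(VUV^{*}U^{*}))$, which is manifestly zero when the commutator $VUV^{*}U^{*}$ equals the identity. I would present the trace-log argument as the proof, since it slots neatly into the text immediately after Lemma~\ref{lem:traceLogIsWinding}.
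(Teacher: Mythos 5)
Your argument is correct. The paper itself offers no proof, merely remarking that the statement "should be obvious," so there is no authorial argument to compare against; both of your routes validly fill the gap. The second route (when $UV=VU$ the affine path $t\mapsto(1-t)UV+tVU$ is the constant loop at the unitary $UV$, whose determinant is a fixed nonzero scalar, hence winding number zero) is the most direct reading of the definition and is presumably what the authors had in mind, while the trace-log route via Lemma~\ref{lem:traceLogIsWinding} and $VUV^*U^*=I$ is equally valid and perhaps slightly less primitive, since it routes the observation through a nontrivial identity.
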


This invariant is very stable.

\begin{lem}
If $U_0,U_1$ and $V_0,V_1$ are $n$-by-$n$ unitaries with
\[
\max \left(
\left\Vert \left[U_0,V_0\right]\right\Vert 
,\left\Vert \left[U_1,V_1\right]\right\Vert 
\right)
\leq\delta
\]
then
\[
\left\Vert U_0-U_1\right\Vert +\left\Vert V_0-V_1\right\Vert 
<
2-\delta
\]
implies 
\[
\omega\left(U_0,V_0\right)=\omega\left(U_1,V_1\right).
\]
\end{lem}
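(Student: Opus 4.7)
The plan is to use Exel's formula (Lemma~\ref{lem:traceLogIsWinding}) to express $\omega(U,V) = \frac{1}{2\pi}\mathrm{Tr}(-i\log(VUV^*U^*))$ as a continuous function on the open set $\Omega = \{(U,V) : U, V \text{ unitary}, \|[U,V]\| < 2\}$. Since $\|VUV^*U^* - I\| = \|[U,V]\|$, on $\Omega$ the group commutator avoids the eigenvalue $-1$, so the principal-branch logarithm and its trace are continuous. Being integer-valued and continuous, $\omega$ is locally constant on $\Omega$, so the task reduces to connecting $(U_0,V_0)$ to $(U_1,V_1)$ by a continuous path along which the commutator remains strictly less than $2$.

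The natural candidate is the matrix-space linear interpolation $U_s = (1-s)U_0 + sU_1$ and $V_s = (1-s)V_0 + sV_1$. These are not unitary but are invertible: the hypothesis $\|U_0 - U_1\| < 2$ forces $U_0^*U_1$ to avoid $-1$, so $(1-s)I + sU_0^*U_1$ is nonsingular for $s \in [0,1]$, and similarly for $V_s$. The bilinear expansion
\[ [U_s,V_s] = (1-s)^2[U_0,V_0] + s(1-s)\bigl([U_0,V_1]+[U_1,V_0]\bigr) + s^2[U_1,V_1], \]
together with $\|[U_0,V_1]\| \leq \delta + 2\|V_0-V_1\|$ and $\|[U_1,V_0]\| \leq \delta + 2\|U_0-U_1\|$, gives the key estimate
\[ \|[U_s,V_s]\| \leq \delta + 2s(1-s)\bigl(\|U_0-U_1\| + \|V_0-V_1\|\bigr) < \delta + \frac{1}{2}(2-\delta) = 1 + \frac{\delta}{2} < 2. \]

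Since $(U_s,V_s)$ need not be unitary, Exel's formula does not apply verbatim; instead I would revert to the original definition of $\omega$ as the winding number of $F_s(t) := \det((1-t)U_sV_s + tV_sU_s)$. The identity $\det(U_sV_s) = \det(V_sU_s)$ makes $F_s$ a closed loop in $\mathbb{C}$ for each $s$, and its winding around $0$ is homotopy-invariant provided $F_s(t) \neq 0$ on all of $[0,1]^2$. Factoring
\[ F_s(t) = \det(U_sV_s)\,\det\bigl((1-t)I + tW_s\bigr),\qquad W_s = V_s^{-1}U_s^{-1}V_sU_s, \]
nonvanishing amounts to $W_s$ having no non-positive real eigenvalue for every $s \in [0,1]$. \textbf{This is the main obstacle:} the spectral claim is automatic at the endpoints $s=0,1$, where $W_s$ is unitary with $\|W_s - I\| \leq \delta$, but for intermediate $s$ the matrix $W_s$ need not be unitary and the crude norm bound $\|W_s - I\| \leq \|V_s^{-1}\|\|U_s^{-1}\|\|[U_s,V_s]\|$ can exceed $2$. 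My plan is to exploit the identity $\det W_s = 1$ together with continuous tracking of eigenvalue branches in $s$, using the commutator bound above and the uniform invertibility of $U_s, V_s$ to preclude any branch from crossing the non-positive real axis, so that the winding number of $F_s$ is a well-defined locally constant integer in $s$ and its common value at $s=0$ and $s=1$ gives the desired equality.
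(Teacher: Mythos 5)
Your proposal correctly identifies the right strategy (a homotopy argument keeping a determinant path away from zero) and correctly computes the commutator bound $\|[U_s,V_s]\| < 1 + \delta/2$, but it stops at a genuine gap that it also honestly flags: because $U_s$ and $V_s$ are not unitary for intermediate $s$, the commutator bound alone does not force $F_s(t) = \det((1-t)U_sV_s - tV_sU_s)$ to be nonzero. Concretely, the smallest singular value of $U_s = U_0\bigl((1-s)I + sU_0^*U_1\bigr)$ can be as small as about $\sqrt{2 - \|U_0 - U_1\|}$ when $\|U_0-U_1\|$ is close to $2$, so $U_sV_s$ can be very far from invertible in the quantitative sense needed; nothing in the data prevents the perturbation $-t[U_s,V_s]$ (of size up to $\approx 1$) from killing invertibility. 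The proposed fix---tracking eigenvalue branches of $W_s = V_s^{-1}U_s^{-1}V_sU_s$ using $\det W_s = 1$---is not an argument: eigenvalues of a non-normal, non-unitary one-parameter family can collide, split, and wander across the negative real axis, and the unit-determinant constraint does nothing to rule this out. So as written the proof does not go through.

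The paper avoids this entirely by \emph{not} interpolating $U$ and $V$. Instead it builds a two-parameter family directly out of the four unitary products,
\[
X_{s,t} \;=\; s_0t_0\, U_0V_0 \;-\; s_0t_1\, V_0U_0 \;+\; s_1t_0\, U_1V_1 \;-\; s_1t_1\, V_1U_1,
\qquad s_j,t_j \in \{1-s,s\},\{1-t,t\},
\]
so that $s\mapsto (t\mapsto \det X_{s,t})$ is a free homotopy of loops from the winding loop of $(U_0,V_0)$ to that of $(U_1,V_1)$. Invertibility of $X_{s,t}$ is then elementary: each corner $U_0V_0,\,V_0U_0,\,U_1V_1,\,V_1U_1$ is unitary, and a convexity computation shows
\[
\bigl\Vert X_{s,t}-U_jV_j\bigr\Vert \le t_1\delta + s_{j+1}\eta,
\qquad
\bigl\Vert X_{s,t}-V_jU_j\bigr\Vert \le t_0\delta + s_{j+1}\eta,
\]
with $\eta = \|U_0-U_1\|+\|V_0-V_1\|$, and the four right-hand sides sum to $2(\delta+\eta)$, so one of them is at most $\tfrac12(\delta+\eta) < 1$. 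Anything within distance $<1$ of a unitary is invertible, and you're done. The moral difference from your attempt: the paper always has a genuine unitary nearby to compare against, whereas your intermediate matrices $U_s,V_s$ have no such anchor, and that is exactly where the proof needs to, and can't, close. If you want to salvage a version of your route, you would have to replace the linear interpolants $U_s,V_s$ by unitary paths (e.g.\ $U_s = U_0\exp(s\log(U_0^*U_1))$), re-derive a commutator bound along that geodesic, and check it stays below $2$---that is essentially the additional idea your write-up is missing.
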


\begin{proof}
Let
$\eta = \left\Vert U_0-U_1\right\Vert +\left\Vert V_0-V_1\right\Vert . $
Consider the following, defined for $(s,t)$ in the unit square,
\[
X_{s,t}
=
s_0t_0U_0V_0-s_0t_1V_0U_0+s_1t_0U_1V_1-s_1t_1V_1U_1,
\]
where $t_0=1-t,$ $t_1=t,$ $s_0=1-s$ and $s_1=s.$ We want
this to be invertible. The ``corners'' are unitaries, so if this stays
within distance $1$ of a corner it will be invertible.
Standard estimates show
\[
\left\Vert X_{s,t}-U_jV_j\right\Vert \leq 
s_0t_1\left\Vert \left[U_0,V_0\right]\right\Vert 
+s_1t_1\left\Vert \left[U_1,V_1\right]\right\Vert 
+s_{j+1}\left(\left\Vert U_1-U_0\right\Vert +\left\Vert V_1-V_0\right\Vert \right)
\]
\[
\left\Vert X_{s,t}-V_jU_j\right\Vert \leq 
s_0t_0\left\Vert \left[U_0,V_0\right]\right\Vert 
+s_1t_0\left\Vert \left[U_1,V_1\right]\right\Vert 
+s_{j+1}\left(\left\Vert U_1-U_0\right\Vert +\left\Vert V_1-V_0\right\Vert \right)
\]
where the $j+1$ is to be performed mod-$2.$
Putting in the assumptions on the norms we find
\[
\left\Vert X_{s,t}-U_jV_j\right\Vert \leq 
t_1\delta+s_{j+1}\eta
\]
\[
\left\Vert X_{s,t}-V_jU_j\right\Vert \leq 
t_0\delta+s_{j+1}\eta
\]
so there is always one corner to which the distance is less than 
$\frac{1}{2}\delta+\frac{1}{2}\eta.$
\end{proof}

\begin{example}
\label{exa:basicUnitaries}
Let $\omega_j=e^{\frac{2\pi ij}{n}}$
and 
\[
\Omega_{n} = 
\left[\begin{array}{ccccc}
\omega_{1}\\
 & \omega_{2}\\
 &  & \ddots\\
 &  &  & \omega_{n-1}\\
 &  &  &  & \omega_{n}
\end{array}\right],
\quad 
S_{n} =
\left[\begin{array}{ccccc}
0 &  &  &  & 1\\
1 & 0\\
 & 1 & \ddots\\
 &  & \ddots & 0\\
 &  &  & 1 & 0
 \end{array}\right].
\]
Then $ \Omega_nS_n\Omega_n^*S_n^* $ 
equals $ e^{\frac{2\pi i}{n}}I $ so
$ \omega(S_n,\Omega_n)=1. $
\end{example}

This is an old example 
\cite{ExelLoringAlmostCommutingUnitary,loring1986torus,
Loring-K-thryAsymCommMatrices,VoiculescuAsymptoticallyCommuting},
but we have a large lower bound on the distance to commuting unitaries.

\begin{thm}
If $U$ and $V$ are commuting $n$-by-$n$ unitaries then
\[
\left\Vert U-S_n\right\Vert +\left\Vert V-\Omega_n\right\Vert 
\geq
2\left(1-\sin\left(\frac{\pi}{n}\right)\right)
\]
\end{thm}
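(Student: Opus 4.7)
The plan is to combine two facts already established in the excerpt: $\omega(S_n, \Omega_n) = 1$ from Example~\ref{exa:basicUnitaries}, and the vanishing of $\omega$ on commuting pairs. Feeding these into the stability lemma for $\omega$ will force any candidate commuting pair $(U,V)$ to be far from $(S_n, \Omega_n)$. The only real computation needed is the commutator norm of the Voiculescu pair.

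First I would compute $\|[S_n, \Omega_n]\|$ from the braiding relation. Since $\Omega_n S_n \Omega_n^* S_n^* = e^{2\pi i/n} I$, we have $\Omega_n S_n = e^{2\pi i/n} S_n \Omega_n$, so
\[
[S_n, \Omega_n] = S_n \Omega_n - \Omega_n S_n = \bigl(1 - e^{2\pi i/n}\bigr) S_n \Omega_n.
\]
Since $S_n \Omega_n$ is unitary, this has operator norm $|1 - e^{2\pi i/n}| = 2\sin(\pi/n)$.

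Next I would set $\delta = 2\sin(\pi/n)$. For any commuting unitaries $U, V$, we have $\|[U,V]\| = 0 \le \delta$ and $\|[S_n, \Omega_n]\| = \delta$, so the hypotheses of the stability lemma hold with this $\delta$. Suppose for contradiction that
\[
\|U - S_n\| + \|V - \Omega_n\| < 2 - \delta = 2\bigl(1 - \sin(\pi/n)\bigr).
\]
The stability lemma then yields $\omega(U,V) = \omega(S_n, \Omega_n)$. But commuting unitaries satisfy $\omega(U,V) = 0$, while $\omega(S_n, \Omega_n) = 1$, a contradiction. Hence the reverse (weak) inequality must hold, giving the stated bound.

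There is essentially no obstacle beyond the commutator calculation, as the stability lemma and the evaluation of $\omega(S_n,\Omega_n)$ do all the work. The only point worth noting is the passage from the strict inequality in the stability lemma to the weak inequality in the theorem's conclusion, which is immediate by contraposition.
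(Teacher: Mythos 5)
Your proof is correct and is essentially the same as the paper's: the paper's one-line proof implicitly invokes the stability lemma with $\delta = \Vert[S_n,\Omega_n]\Vert = |e^{2\pi i/n}-1| = 2\sin(\pi/n)$ together with $\omega(S_n,\Omega_n)=1$ and $\omega(U,V)=0$ for commuting unitaries, exactly as you spell out. Your write-up simply makes the contrapositive appeal to the stability lemma explicit and verifies the commutator norm, which is the only computation involved.
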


\begin{proof}
~
\[
\left\Vert U-S_n\right\Vert +\left\Vert V-\Omega_n\right\Vert 
<
2-\left\Vert \left[S_n,\Omega_n\right]\right\Vert 
=
2-\left|e^{\frac{2\pi i}{n}}-1\right|
=
2-2\sin\left(\frac{\pi}{n}\right).
\]

\end{proof}

This index shares a lot of properties with the 
Bott index from \S \ref{sec:Sphere}.
It is obviously invariant with respect to conjugation by a unitary.
Also
\[
\omega(V,U)=\omega(-U,V)=\omega(U,-V)=-\omega(U,V)
\]
and
\[
\omega\left(\left[\begin{array}{cc}
U_1\\
 & U_2\end{array}\right],\left[\begin{array}{cc}
V_1\\
 & V_2\end{array}\right]\right)
 =
 \omega\left(U_1,V_1\right)+\omega\left(U_2,V_2\right).
\]
In fact, this is equal to an invariant directly based on $K$-theory.

There are no ``really nice'' maps from the torus to the sphere, but
there are smooth maps that are one-to-one over most points of the
sphere. For present purposes, the smoothness is not so important. 

On map from the standard torus in $\mathbb{C}^2$ to the unit sphere
in $\mathbb{R}^3$ that is one-to-one over most points of the sphere,
and is piecewise smooth, is
\[
\Gamma\left(e^{2\pi i\theta_1},e^{2\pi i\theta_2}\right)
=
\left(\strut
f_1\left(\theta_1\right)
,g_1\left(\theta_1\right)+h_1\left(\theta_1\right)\cos\left(\theta_2\right)
,h_1\left(\theta_1\right)\sin\left(\theta_2\right)
\right),
\]
where
\[
f_1(x)=\begin{cases}
1-4x & \mbox{if }x\leq\frac{1}{2}\\
-3+4x & \mbox{if }x\leq\frac{1}{2}
\end{cases},
\]

\[
g_1(x)=\begin{cases}
2\sqrt{2x-4x^{2}} & \mbox{if }x\leq\frac{1}{2}\\
0 & \mbox{if }x\leq\frac{1}{2}
\end{cases},
\]

\[
h_{1}(x)=\begin{cases}
0 & \mbox{if }x\leq\frac{1}{2}\\
\sqrt{-8+24x-16x^{2}} & \mbox{if }x\leq\frac{1}{2}
\end{cases}.
\]
Notice $g_1h_1=0$ and 
$
f_1^2+g_1^2+h_1^2=1,
$
and from here it is easy to check that $\Gamma$ does take
values on the unit sphere.

The restriction of $\Gamma$ to the open set determined by
$ (1/2) < \theta_1 < 1 $
gives a bijection onto the sphere minus the poles.
Points in the complement are sent by $\Gamma$ to a half-equator.

This function give us a means to manufacture an approximate representation
of the sphere out of an approximate representation
of the torus. Unfortunately, the norm of a commutator
$\left\Vert \left[X,f(Y)\right]\right\Vert $
depends rather poorly on the norm of 
$\left\Vert \left[X,Y\right]\right\Vert $
when $X$ and $Y$ are normal and all we know about $f$ is that is
is piece-wise linear or smooth.

Suppose $U$ and $V$ are approximately commuting. If we write out
the approximate projective $\B$ based on the three almost commuting
Hermitians (in some order), it looks like
\[
Q(U,V) =
\left[\begin{array}{cc}
f(V) & g(V)+h(V)U\\
g(V)+U^{*}h(V) & I-f(V)
\end{array}\right]
\]
where $f,$ $g$ and $h$ are the continuous, real-valued functions
on the circle defined by
\[
f\left(e^{2\pi i\theta}\right) =
\frac{1}{2}-\frac{1}{2}f_{1}\left(\theta\right),
\]
$
g\left(e^{2\pi i\theta}\right)=\frac{1}{2}g_1\left(\theta\right),
$ and
$
h\left(e^{2\pi i\theta}\right)=\frac{1}{2}h_1\left(\theta\right). 
$
The three Hermitians that almost represent
the sphere are
\[
I+2f(V),2\mathrm{Re}\left(g(V)+iH(V)\right),2\mathrm{Im}\left(g(V)+iH(V)\right).
\]

\begin{lem}
There is a $\delta_{1}>0$ so that for all unitary matrices $U$ and
$U$ with 
$
\left\Vert \left[U,V\right]\right\Vert <\delta_1,
$
the Hermitian matrix $Q(U,V)$ does not have $\frac{1}{2}$ it its
spectrum.
\end{lem}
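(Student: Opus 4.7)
The plan is to show that $Q(U,V)$ is \emph{exactly} a projection when $[U,V]=0$, and that $\Vert Q(U,V)^2 - Q(U,V)\Vert$ depends continuously on $\Vert [U,V]\Vert$ in a way uniform over the dimension. Once $\Vert Q^2-Q\Vert<\tfrac{1}{4}$, spectral mapping forces the spectrum of the Hermitian $Q$ into a set disjoint from $\tfrac{1}{2}$.

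First I would carry out the exact-projection calculation in the commuting case. Abbreviate $f=f(V)$, $g=g(V)$, $h=h(V)$, $k=g+hU$, so $Q=\begin{bmatrix} f & k\\ k^{*} & I-f\end{bmatrix}$. The identities built into $f_1,g_1,h_1$ give $g_1h_1=0$ and $f_1^2+g_1^2+h_1^2=1$, hence $gh=0$ and $f(I-f)=g^2+h^2$. When $[U,V]=0$ all of $f$, $g$, $h$, $U$ commute, so $kk^{*}=g^2+h^2+gh(U+U^{*})=g^2+h^2=f(I-f)$, and the off-diagonal entries of $Q^2$ reduce to $fk+k(I-f)=k$ by $[f,k]=0$; a symmetric calculation handles the $(2,2)$ block. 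Thus $Q^2=Q$ whenever $U$ and $V$ commute.

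Next I would expand $Q^2-Q$ in general. Each block turns out to be a sum of expressions that vanish under $[U,V]=0$, namely $[f(V),U]$, $[g(V),U]$, $[h(V),U]$, all multiplied by bounded operators. For instance, in the $(1,2)$ block one gets $fk+k(I-f)-k=[f,k]=h(V)[f(V),U]$, and in the $(1,1)$ block $kk^{*}-(g^2+h^2)=[g(V),U^{*}]h(V)+[h(V),U]g(V)$. So one obtains a bound
\[
\Vert Q(U,V)^2-Q(U,V)\Vert \leq C\bigl(\Vert[U,f(V)]\Vert+\Vert[U,g(V)]\Vert+\Vert[U,h(V)]\Vert\bigr)
\]
for an absolute constant $C$.

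The main technical step is to control $\Vert[U,\phi(V)]\Vert$ for the continuous real functions $\phi\in\{f,g,h\}$ on the unit circle. I would use Stone--Weierstrass to approximate $\phi$ uniformly by a Laurent polynomial $p(z,\bar z)$; since $V$ is unitary, $p(V,V^{*})$ is a finite sum of $V^{k}$'s, and the telescoping identity $[U,V^{k}]=\sum_{j=0}^{k-1}V^{j}[U,V]V^{k-1-j}$ yields $\Vert[U,p(V)]\Vert\leq C_{p}\Vert[U,V]\Vert$. Combining with $\Vert\phi-p\Vert_\infty$ gives a modulus of continuity $\eta_\phi$, independent of the matrices or their size, with $\Vert[U,\phi(V)]\Vert\le\eta_\phi(\Vert[U,V]\Vert)$ and $\eta_\phi(\delta)\to 0$ as $\delta\to 0$.

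Finally I would choose $\delta_1>0$ so small that $\Vert[U,V]\Vert<\delta_1$ forces $\Vert Q^2-Q\Vert<\tfrac14$. Since $Q$ is Hermitian, spectral mapping gives, for any eigenvalue $\lambda$ of $Q$, $|\lambda^2-\lambda|\le\Vert Q^2-Q\Vert<\tfrac14$, hence $(\lambda-\tfrac12)^2>0$, so $\tfrac12\notin\sigma(Q(U,V))$. The expected obstacle is the estimate $\Vert[U,\phi(V)]\Vert\to 0$ uniformly in dimension; this is where one must be careful that $f,g,h$ are merely continuous (not Lipschitz) because of the square-root singularities in $g_1,h_1$, but Stone--Weierstrass still delivers the needed uniform modulus.
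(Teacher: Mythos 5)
Your proof is correct. The paper states this lemma without proof, but your argument is the natural and standard one: verify $Q(U,V)$ is an exact projection in the commuting case (which reduces to the pointwise identities $g_1h_1=0$ and $f_1^2+g_1^2+h_1^2=1$), express the blocks of $Q^2-Q$ in terms of the commutators $[U,f(V)]$, $[U,g(V)]$, $[U,h(V)]$, control those by approximating $f,g,h$ uniformly with trigonometric polynomials to get a dimension-independent modulus $\eta_\phi(\|[U,V]\|)\to 0$, and then apply the spectral mapping theorem to the Hermitian $Q$ once $\|Q^2-Q\|<\tfrac14$. The one place a careless reader might stumble is exactly the point you flag: the square-root singularities of $g_1,h_1$ prevent a Lipschitz commutator bound, so the Stone--Weierstrass modulus-of-continuity route (rather than a first-order Taylor bound) is essential; you handle this correctly.
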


\begin{defn}
For all unitary matrices $U$ and $U$ with 
$
\left\Vert \left[U,V\right]\right\Vert <\delta_{1},
$
define $\kappa(U,V)$ as $-n$ plus the number of eigenvalues, counted
with multiplicity, of $Q(U,V)$ that are greater than $\frac{1}{2}.$ 
\end{defn}

Notice that $Q(U,V)$ varies continuously in $U$ and $V.$ There
is a variation on the index that lacks this feature, but it has cleaner
formulas.

In \cite{ExelLoringInvariats} it was determined that
there is a $\delta_0,$
again unspecified, so that
$\left\Vert \left[U,V\right]\right\Vert \leq\delta_0 $
implies 
\[
Q_{1}\left(U,V\right) =
\left[\begin{array}{cc}
\ell(V) & \sqrt{\ell(V)-\left(\ell(V)\right)^2}U\\
U^{*}\sqrt{\ell(V)-\left(\ell(V)\right)^2} & \ell(V)
\end{array}\right]
\]
has spectrum that does not contain $\frac{1}{2},$ where 
$
\ell(e^{2\pi ix})=x$ for $ x\in[0,1).$
We define $\kappa_1(U,V)$ as $-n$ plus the number of eigenvalues,
counted with multiplicity, of $Q_{1}(U,V)$ that are greater than
$\frac{1}{2}.$ The following is proven in \cite{ExelLoringInvariats}.
Taking into account Lemma~\ref{lem:traceLogIsWinding} we have four
methods to compute the index.

\begin{thm}
There is a $\delta_{3}>0$ so that for all unitary matrices $U$ and
$V$ with $\left\Vert \left[U,V\right]\right\Vert \leq\delta_{3},$
we have equality of the indices,
\[
\omega(U,V)=\kappa(U,V)=\kappa_{1}(U,V).
\]

\end{thm}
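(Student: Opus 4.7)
The three quantities $\omega$, $\kappa$, $\kappa_1$ are integer-valued functions on the open set of pairs $(U,V)$ of unitaries with $\|[U,V]\|$ small, so the natural strategy is a two-step one: (i) show each of the three is locally constant on a common domain $\{\|[U,V]\| < \delta_3\}$, and (ii) exhibit homotopies inside this domain along which the three must agree, so that equality on a single well-chosen pair propagates to all pairs. Local constancy of $\omega$ is already contained in the stability lemma for the winding number that appears earlier in the excerpt. For $\kappa$ and $\kappa_1$, one just needs the observation that the maps $(U,V)\mapsto Q(U,V)$ and $(U,V)\mapsto Q_1(U,V)$ are continuous in operator norm, so the eigenvalue count above $\tfrac{1}{2}$ is preserved under any perturbation that remains smaller than the spectral gap of $Q$, resp.\ $Q_1$, at $\tfrac{1}{2}$.

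\paragraph{The equality $\kappa(U,V)=\kappa_1(U,V)$.}
Both $Q$ and $Q_1$ are $2\times 2$ block almost-projections built from real-valued functions on the circle evaluated at $V$, multiplied on the off-diagonal by $U$ or $U^{*}$. I would build a linear one-parameter family $Q_s(U,V)$, $s\in[0,1]$, interpolating between the pair $(f,g,h)$ used to define $Q$ and the pair $(\ell,\sqrt{\ell-\ell^2})$ used to define $Q_1$, e.g.\ by interpolating the diagonal function $f$ to $\ell$ and the off-diagonal factors in the analogous way. The key estimate is the functional calculus bound $\|[\varphi(V),U]\| \le C_\varphi\|[V,U]\|$ for each of the continuous scalar functions $\varphi$ that appear; this bound, together with the identities that force $Q_s(U,V)^2\approx Q_s(U,V)$ when $U,V$ commute, shows that each intermediate $Q_s(U,V)$ is an approximate projection whose spectrum avoids $\tfrac{1}{2}$ provided $\|[U,V]\|\le \delta_3$ is sufficiently small. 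Since the number of eigenvalues above $\tfrac{1}{2}$ is preserved along any such continuous family, $\kappa=\kappa_1$.

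\paragraph{The equality $\omega(U,V)=\kappa_1(U,V)$.}
Via Lemma~\ref{lem:traceLogIsWinding}, $\omega(U,V)$ is the trace-log expression $\frac{1}{2\pi}\mathrm{Tr}(-i\log(VUV^*U^*))$, and the equality of this with $\kappa_1$ is the content of the Exel--Loring theorem cited as \cite{ExelLoringInvariats}. A self-contained argument that I would write in the same spirit as the previous paragraph is: both $\omega$ and $\kappa_1$ are (a) locally constant on $\|[U,V]\|<\delta_3$, (b) additive under block-diagonal direct sums, (c) invariant under unitary conjugation, and (d) zero when $U,V$ commute. These properties force them to agree once they agree on a generating family of examples. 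A direct check on the pair $(S_n,\Omega_n)$ of Example~\ref{exa:basicUnitaries} gives $\omega=1=\kappa_1$ (the latter by an eigenvalue count on $Q_1(S_n,\Omega_n)$, which is a shift-like matrix whose spectrum splits explicitly), and every almost-commuting pair is connected through $\|[U,V]\|<\delta_3$ to a block sum of copies of such shift/phase pairs with commuting "padding", finishing the argument.

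\paragraph{Main obstacle.}
The delicate point is the commutator bound $\|[\varphi(V),U]\|\le C_\varphi\|[V,U]\|$ for the specific functions appearing in $Q$ and $Q_1$. The functions $f_1,g_1,h_1$ used in defining $Q$ are only piecewise smooth, and the radial factors $g_1$ and $h_1$ involve square roots that vanish at the endpoints, giving H\"older-$1/2$ rather than Lipschitz regularity there. One can still obtain an estimate of the form $\|[\varphi(V),U]\| \le C(\varphi)\|[V,U]\|^{1/2}$ via a modulus-of-continuity argument in the joint functional calculus for normal $V$ and almost-normal-with-$V$ operator $U$; this weaker bound is still enough because all three invariants are integer-valued, so only the existence of a positive spectral gap at $\tfrac{1}{2}$ along the homotopy matters, not its size. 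Consequently one can choose $\delta_3$ small enough (independent of dimension) so that the interpolation step goes through.
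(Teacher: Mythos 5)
Note first that the paper does not itself prove this theorem: it cites \cite{ExelLoringInvariats} for $\kappa=\kappa_{1}$, and (combined with Lemma~\ref{lem:traceLogIsWinding}) defers $\omega=\kappa_{1}$ to the same reference. So there is no in-paper proof to compare to; the question is whether your blind argument stands on its own, and it does not.

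For $\kappa=\kappa_{1}$, the linear interpolation $Q_{s}=(1-s)Q+sQ_{1}$ does not preserve the spectral gap at $1/2$ even when $U$ and $V$ \emph{exactly} commute, so no smallness hypothesis on $\|[U,V]\|$ can rescue it. Take the one-dimensional commuting pair $V=e^{2\pi i/3}$, $U=-1$ (and use the intended lower-right entry $I-\ell(V)$ in $Q_{1}$, else $Q_{1}$ is not even an approximate projection in the commuting case). Then $f(V)=2/3$, $\ell(V)=1/3$, $g(V)=\sqrt{2}/3$, $h(V)=0$, $\sqrt{\ell(V)-\ell(V)^{2}}=\sqrt{2}/3$, whence
\[
Q=\begin{pmatrix}2/3 & \sqrt{2}/3 \\ \sqrt{2}/3 & 1/3\end{pmatrix},
\qquad
Q_{1}=\begin{pmatrix}1/3 & -\sqrt{2}/3 \\ -\sqrt{2}/3 & 2/3\end{pmatrix},
\qquad
Q_{1/2}=\frac{1}{2}I.
\]
The interpolated operator at $s=1/2$ has $1/2$ as a double eigenvalue. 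The H\"older-versus-Lipschitz commutator estimate you worried about is not the obstruction; the gap closes already in the exactly commuting scalar case where every commutator is zero. Any working homotopy must do something cleverer than linear interpolation of the two formulas.

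For $\omega=\kappa_{1}$, properties (a)--(d) do not determine the invariant --- $2\omega$ and $\omega^{3}$ also satisfy them --- so the burden rests entirely on the claim that every pair with $\|[U,V]\|<\delta_{3}$ is connected, inside that set, to a direct sum of shift/phase blocks plus commuting padding. That connectivity statement already implies the closing theorem of Section~\ref{sec:Torus} (restricted to $\omega=0$, it is precisely the assertion that the pair is near a commuting one), which the paper explicitly notes it cannot prove without $C^{*}$-algebraic machinery. Invoking it here leaves the hard step untouched, so the argument is circular rather than a proof.
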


We have no quantitative version of the following, and no proof of
the following that does not utilize the theory of $C^*$-algebras.
We refer readers seeking a proof to 
\cite[Corollary M3]{GongLinAlmostMultiplicativeMorphisms}
and \cite[Theorem 6.15]{ELP-pushBusby}.

\begin{thm}
For every positive $\epsilon,$ there is a positive $\delta$ less
than $2$ so that $\left\Vert \left[U,V\right]\right\Vert \leq\delta$
and $\omega(U,V)=0$ for unitary matrices implies that there exists
commuting unitary matrices $U_1$ and $V_1$ with 
$\left\Vert U-U_1\right\Vert <\epsilon$
and $\left\Vert V-V_1\right\Vert <\epsilon.$ 
\end{thm}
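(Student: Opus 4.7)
The plan is a proof by contradiction that reduces the statement to a $C^*$-algebraic lifting problem. Suppose the conclusion fails for some $\epsilon > 0$. Then there is a sequence of pairs $(U_n, V_n)$ of unitaries, with $U_n, V_n \in M_{k_n}(\mathbb{C})$, such that $\|[U_n, V_n]\| \to 0$, $\omega(U_n, V_n) = 0$ for every $n$, and yet every pair of commuting unitaries in $M_{k_n}(\mathbb{C})$ lies at distance at least $\epsilon$ from $(U_n, V_n)$ in the sum norm.

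I would then work inside the asymptotic sequence algebra $Q = \prod_{n} M_{k_n}(\mathbb{C}) \big/ \bigoplus_{n} M_{k_n}(\mathbb{C})$. Because the commutator norms tend to zero, the images $\bar U, \bar V$ of $(U_n), (V_n)$ in $Q$ are commuting unitaries, and so induce a unital $*$-homomorphism $\varphi : C(\mathbb{T}^2) \to Q$ sending the two coordinate unitaries to $\bar U$ and $\bar V$. The hypothesis $\omega(U_n, V_n) = 0$, combined with Lemma~\ref{lem:traceLogIsWinding} and the identification of $\omega$ with the trace pairing applied to the Bott (Rieffel) projection in $K_0(C(\mathbb{T}^2)) \cong \mathbb{Z}^2$, forces $K_0(\varphi)$ to annihilate the Bott generator in a manner compatible with the limiting trace on $Q$.

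The decisive step is to lift $\varphi$, or an inner-automorphism perturbation of it, to a genuine $*$-homomorphism $\tilde\varphi : C(\mathbb{T}^2) \to \prod_n M_{k_n}(\mathbb{C})$. Once the lift exists, its components $\tilde\varphi_n$ provide pairs of commuting unitaries $(U_n', V_n') \in M_{k_n}(\mathbb{C})$ with $\|U_n - U_n'\| \to 0$ and $\|V_n - V_n'\| \to 0$, contradicting the assumed $\epsilon$-separation for all sufficiently large $n$. Such a lift is precisely what the classification results cited provide, namely Corollary M3 of Gong--Lin \cite{GongLinAlmostMultiplicativeMorphisms} and Theorem 6.15 of Eilers--Loring--Pedersen \cite{ELP-pushBusby}: one first lifts the restriction of $\varphi$ to the $1$-skeleton of $\mathbb{T}^2$, a wedge of two circles, by Lin-type perturbation arguments applied to each circle generator, and then attaches the remaining $2$-cell, which is possible precisely because the Bott obstruction in $K_0$ has been arranged to vanish.

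The main obstacle, and the reason the proof is intrinsically non-quantitative, is this lifting step. The algebra $C(\mathbb{T}^2)$ is not semiprojective, so no general lifting theorem applies; one must argue directly with almost multiplicative morphisms and show that once the $K$-theoretic obstruction has been removed, a strict lift exists after a further small perturbation. Because the underlying machinery rests on compactness and limiting procedures in the category of $C^*$-algebras, it produces $\delta(\epsilon) > 0$ only as an existence statement, with no effective estimate of $\delta$ in terms of $\epsilon$---this is exactly why the authors flag the absence of a quantitative version and of any proof avoiding $C^*$-algebras.
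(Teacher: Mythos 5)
Note first that the paper itself contains no proof of this theorem. The authors state explicitly that they have no quantitative version and no proof avoiding $C^*$-algebra theory, and they simply refer the reader to Corollary~M3 of Gong--Lin \cite{GongLinAlmostMultiplicativeMorphisms} and Theorem~6.15 of Eilers--Loring--Pedersen \cite{ELP-pushBusby}. So there is no proof inside the paper to compare against.

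That said, your sketch is a plausible high-level account of the strategy that underlies the cited operator-algebraic results. The contradiction-plus-sequence-algebra reduction (pass to $\prod_n M_{k_n}/\bigoplus_n M_{k_n}$, where the pair becomes exactly commuting and therefore defines a unital $*$-homomorphism from $C(\mathbb{T}^2)$) is the standard device for converting an ``almost implies near'' statement into a lifting problem, and your identification of $\omega$ with the pairing against the Bott generator of $K_0(C(\mathbb{T}^2))$ is indeed the mechanism through which the vanishing hypothesis enters. You are also right that $C(\mathbb{T}^2)$ is not semiprojective, which is precisely why a generic lifting theorem does not apply and why the argument must route through the $K$-theoretic bookkeeping. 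Where your account becomes impressionistic is the actual lifting step: the ``lift the $1$-skeleton, then attach the $2$-cell'' picture is a suggestive cartoon, but the cited theorems proceed by working directly with almost multiplicative morphisms and invoking uniqueness/existence results from the classification program, which absorb the cell-by-cell reasoning into more global machinery. You should also be careful that the per-$n$ hypothesis $\omega(U_n,V_n)=0$ must be propagated into a statement about lifting the Bott projection from $Q$ back to $\prod_n M_{k_n}$ with the correct ranks in each coordinate; this is where the discrete, integer-valued nature of the index is used and where a bare $K$-theory computation in $Q$ alone would not suffice. None of these points break your argument, but they are the load-bearing technicalities your sketch defers to the references. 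Your closing observation---that a compactness-based argument of this kind yields no effective $\delta(\epsilon)$---is exactly why the authors single this theorem out as non-quantitative in contrast to their Theorems~\ref{thm:QuantitativeSquare} through~\ref{thm:sphereQuantitative}.
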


\section{Applications}
\label{sec:applications}

\subsection{Lattice Problems and Definition of Wannier Functions}

We describe an application of the previous results to a physics problem:
whether or not there exist localized Wannier functions for
a two-dimensional insulator.  
We begin by describing the physical
system, and the appropriate mathematical description of this problem.
We then connect the existence of Wannier functions to the absence
of an obstruction to approximating almost commuting matrices.  See also
\cite{kitaev-2009}.

We consider a system of non-interacting fermions moving in a tight
binding model with the sites arranged on the surface of a sphere (this
topology is chosen to give a two-dimensional system with
no boundary) and with short-range hopping terms of bounded strength.

Since the fermions are non-interacting, the Hamiltonian of the system is
\be
{\mathcal H}=\sum_{i,j} \Psi_i^{\dagger} H_{ij} \Psi_j,
\ee
where $\Psi^\dagger_i,\Psi_i$ are fermion creation and annihilation
operators on site $i$.  The matrix $H$ is a Hermitian matrix.
Since we consider insulating systems, we assume that the
Hamiltonian has a spectral gap $\Delta E$, so that,
without loss of generality,
all eigenvalues are either less than or equal to $-\Delta E/2$ or  are
greater than or equal to $\Delta E/2$.
We let $P$ denote the projector onto the space spanned by the
eigenvectors with eigenvalues less than or equal to $-\Delta E/2$.

A basis of Wannier functions means an orthogonal set of states which
spans the range of $P$.  An example of a basis of Wannier functions
would simply be the set of eigenvectors with eigenvalues less than
or equal to $-\Delta E/2$.
A local basis of Wannier functions means a basis of Wannier functions
in which some locality requirements are imposed on the basis vectors:
for example, given a metric on the lattice, for each vector most of
the norm of the vector should be concentrated on sites near some given
site.  Usually, the eigenvectors of $H$ will {\it not} be local in
this sense.  Below, we give a few precise, but slightly different, mathematical definitions
of ``localized Wannier function''.

To describe the fact that the interactions are short-range, we need to
describe where each site is located on the
surface of the sphere.  To do this, we introduce matrices $\X,\Y,\Z$ which
are diagonal matrices.  For each site $i$, the corresponding
diagonal matrix elements
$\X_{ii},\Y_{ii},\Z_{ii}$ denote the $x,y,z$  positions of that
site on the surface of a sphere.  We impose the condition
\be
\label{sphere}
\X^2+\Y^2+\Z^2=L^2,
\ee
where $L$ is the radius of the sphere.
We define the distance between any two sites by:
\be
{\rm dist}(i,j)\equiv \sqrt{(\X_{ii}-\X_{jj})^2+(\Y_{ii}-\Y_{jj})^2+(\Z_{ii}-\Z_{jj})^2}.
\ee
Define a distance between a site $i$ and a triple of
coordinates $\vec x=(x,y,z)$ by
\be
{\rm dist}(i,\vec x))\equiv \sqrt{(\X_{ii}-x)^2+(\Y_{ii}-y)^2+(\Z_{ii}-z)^2}.
\ee

To mathematically describe the assumption of short-range interactions,
we assume that $H_{ij}=0$ for ${\rm dist}(i,j)>R$ for some range $R$.
To describe the assumption of bounded strength interactions,
we assume that $\Vert H \Vert \leq J$ for some interaction strength $J$.
The conditions are sufficient to imply a Lieb-Robinson bound
for the dynamics\cite{LiebRobinsonFiniteGroupVelocity,HastingsKomaSpectralGap,
NachtSimsLRbounds}.
These conditions imply that
\begin{eqnarray}
\label{bnd}
\Vert [\X,H] \Vert & \leq & 2 v_{LR},
\end{eqnarray}
for $v_{LR}=RJ$, and similar bounds for $\Vert [\Y,H] \Vert,\Vert [\Z,H] \Vert$.
The subscript $LR$ refers to Lieb-Robinson; this velocity that we define here can be shown to be an
upper bound on the velocity of propagation of excitations in according with the usual definition of a Lieb-Robinson velocity.

We will be interested in the case where $L>>v_{LR}/\Delta E$ below.
We now derive
a bound on $\Vert [P,\X] \Vert,\Vert [P,\Y] \Vert, \Vert [P,\Z] \Vert$.
Because of the spectral gap, we can write
\be
\label{gfn}
P=\Delta E \int {\rm d}t f(\Delta E t) \exp(i H t),
\ee
for any function $f(t)$ such that the Fourier
transform, $\tilde f(\omega)$ obeys
$\tilde f(\omega)=1$ for $\omega\leq -1/2$ 
and $\tilde f(\omega)=0$ for $\omega\geq 1/2$.  
Then,
\begin{eqnarray}
\label{int}
\Vert [\X,P] \Vert  &= & 
\Delta E \Vert [\X,\int {\rm d}t f(\Delta E t) \exp(i H t)] \Vert
\\ \nonumber
& \leq & \Delta E \int {\rm d}t |f(\Delta E t)| \, \Vert [\X,\exp(i H t)] \Vert
\\ \nonumber
& \leq &
2 \Delta E \int {\rm d}t |f(\Delta E t)| v_{LR} |t|.
\end{eqnarray}
We now choose any specific such $\tilde f(\omega)$ which is at least twice
times differentiable for all $\omega$, so
that $f(t)$ decays faster than $1/t^3$ for large $t$.  
Then, the integral on the last line of Eq.~(\ref{int}) converges and
\be
\label{commbound}
\Vert [\X,P] \Vert  \leq {\rm const.} \times v_{LR}/\Delta E,
\ee
for some numeric constant, and the same bound holds for 
$\Vert [\Y,P] \Vert$ and $\Vert [\Z,P] \Vert$.

We now define matrices $H_r$ by
\begin{eqnarray}
\label{h123sq}
H_1 \equiv P \X P/L, \\ \nonumber
H_2 \equiv P \Y P/L, \\ \nonumber
H_3 \equiv P \Z P/L.
\end{eqnarray}
We now show that these matrices approximately represent the sphere:
\begin{lem}
\label{spherreplem}
The matrices $H_1,H_2,H_3$ defined in Eq.~(\ref{h123sq}) form a $\delta$-representation of the
sphere with
\be
\label{rep}
\delta={\rm const.} \times (v_{LR}/L \Delta E)^2.
\ee
\begin{proof}
Define 
$\X_{11}=(1-P)\X (1-P), \X_{12}=(1-P)\X P, \X_{21}=P\X (1-P)$, and
$\X_{22}=P\X P=H_1$.  Define $\Y_{11}=(1-P) \Y (1-P)$, and so on, so that
we can write
\be
\X=\begin{pmatrix}
\X_{11} & \X_{12} \\
\X_{21} & \X_{22}
\end{pmatrix} , \quad
\Y=\begin{pmatrix}
\Y_{11} & \Y_{12} \\
\Y_{21} & \Y_{22}
\end{pmatrix} . \quad
\ee
Then, $P [\X,\Y] P=\X_{21} \Y_{12}-\Y_{21}\X_{12} + [\X_{22},\Y_{22}]$.
However, since $[\X,\Y]=0$, this means that
$[\X_{22},\Y_{22}]=-\X_{21} \Y_{12}+\Y_{21}\X_{12}$, so
$\Vert [H_1,H_2] \Vert \leq \Vert \X_{21} \Vert \Vert \Y_{12} \Vert+
\Vert \Y_{21} \Vert \Vert \X_{12} \Vert$.  Note that
$\Vert \X_{21} \Vert \leq \Vert [\X,P] \Vert \leq {\rm const.}\times
v_{LR}/\Delta E$, and similarly for 
$\Vert \X_{12} \Vert, \Vert \Y_{21} \Vert \Vert \Y_{12} \Vert$.
So,
$\Vert [H_1,H_2] \Vert \leq 
{\rm const.} \times (v_{LR}/L \Delta E)^2$.

Similar bounds hold for the commutators $\Vert [H_2,H_3] \Vert,\Vert [H_3,H_1] \Vert$.

Finally,
\begin{eqnarray}
\nonumber
\Vert H_1^2+H_2^2+H_3^2-I\Vert 
&=&
\Vert P \X (1-P) \X P+ P \Y (1-P) \Y P + P \Z (1-P) \Z P \Vert \\ 
&\leq &
\Vert P \X (1-P) \X P\Vert  +\Vert P \Y (1-P) \Y P\Vert + \Vert P \Z (1-P) \Z P \Vert \\ \nonumber
& \leq & \Vert (1-P) \X P \Vert^2+
\Vert (1-P) \Y P \Vert^2+
\Vert (1-P) \Z P \Vert^2 \\ \nonumber
& \leq & 
\Vert [P,\X] \Vert^2+
\Vert [P,\Y] \Vert^2+
\Vert [P,\Z] \Vert^2 \\ \nonumber
&\leq & {\rm const.} \times (v_{LR}/L \Delta E)^2.
\end{eqnarray}
\end{proof}
\end{lem}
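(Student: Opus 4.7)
The plan is to verify directly the three conditions in Definition~\ref{def:approxRepSphere}. Hermiticity of each $H_r$ is immediate, since $\hat{X},\hat{Y},\hat{Z}$ are real diagonal and $P$ is self-adjoint, so $H_r = P\hat{X}_r P/L$ is self-adjoint. Throughout the rest of the argument the identity $I$ on the right-hand side of the sum-of-squares axiom must be read as the identity on $P\mathbb{H}$, because each $H_r$ is zero on $(1-P)\mathbb{H}$; equivalently, the whole calculation takes place inside the corner $P\mathbb{B}(\mathbb{H})P$. The only analytic input is the previously established estimate $\|[P,\hat{X}_r]\| \leq \mathrm{const}\cdot v_{LR}/\Delta E$ from Eq.~(\ref{commbound}).

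For the commutators, I would decompose each coordinate operator into the four blocks determined by $P$ and $1-P$. Since $\hat{X}$ and $\hat{Y}$ are both diagonal they commute exactly, so $P[\hat{X},\hat{Y}]P=0$; expanding this via $\hat{X} = P\hat{X}P + P\hat{X}(1-P) + (1-P)\hat{X}P + (1-P)\hat{X}(1-P)$ and isolating $[P\hat{X}P,P\hat{Y}P]$ yields an exact identity expressing this commutator as a difference of two products of off-diagonal blocks of the form $(1-P)\hat{X}_r P$. Each such block has norm at most $\|[P,\hat{X}_r]\|$, so $\|[P\hat{X}P,P\hat{Y}P]\|$ picks up two factors of $v_{LR}/\Delta E$. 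Dividing by $L^2$ gives $\|[H_1,H_2]\|=O((v_{LR}/L\Delta E)^2)$, and the same argument handles $[H_2,H_3]$ and $[H_3,H_1]$.

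For the sum-of-squares condition, I would use the insertion identity $P\hat{X}_r^2 P = (P\hat{X}_r P)^2 + P\hat{X}_r(1-P)\hat{X}_r P$. Summing over $r$ and invoking $\hat{X}^2+\hat{Y}^2+\hat{Z}^2 = L^2 I$ from Eq.~(\ref{sphere}) yields
\[
L^2 P = \sum_r (P\hat{X}_r P)^2 + \sum_r P\hat{X}_r(1-P)\hat{X}_r P,
\]
so after dividing by $L^2$ we get $P - \sum_r H_r^2 = L^{-2}\sum_r P\hat{X}_r(1-P)\hat{X}_r P$. Each summand on the right has norm $\|(1-P)\hat{X}_r P\|^2 \leq \|[P,\hat{X}_r]\|^2$, which is again $O((v_{LR}/\Delta E)^2)$; dividing by $L^2$ delivers the desired $\delta=\mathrm{const}\cdot (v_{LR}/L\Delta E)^2$.

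No step is genuinely difficult: the whole proof is algebraic manipulation of block decompositions against two exact relations, $[\hat{X}_r,\hat{X}_s]=0$ and $\sum_r\hat{X}_r^2 = L^2 I$, together with a single analytic input, the quasi-locality bound (\ref{commbound}). The one conceptual point that must be gotten right is the interpretation of ``$I$'' in the sum-of-squares axiom as the identity of the compressed algebra $P\mathbb{B}(\mathbb{H})P$; with that convention all three axioms hold with a common $\delta$ of the stated form, and the same $(v_{LR}/L\Delta E)^2$ scaling is inherited by both the commutator and sum-of-squares defects.
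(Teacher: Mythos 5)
Your proposal is correct and reproduces the paper's argument essentially step for step: the same $P$, $1-P$ block decomposition, the same use of exact commutativity of the diagonal coordinate matrices and of $\hat{X}^2+\hat{Y}^2+\hat{Z}^2=L^2 I$, the same bound of off-diagonal blocks by $\|[P,\hat{X}_r]\|$, and the same appeal to Eq.~(\ref{commbound}). The explicit remark that $I$ must be read as the identity of the corner $P\mathbb{B}(\mathbb{H})P$ is a sound clarification of a convention the paper leaves implicit.
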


From now on, we work in the subspace spanned by eigenvectors 
with eigenvalues less than or equal to $-\Delta E/2$; i.e., the subspace
onto which $P$ projects.  In this subspace,
$H_r$ are a set of almost commuting Hermitian operators that 
almost square to unity.
The question now is: do there exist a set of localized Wannier functions?
We give three possible definitions of this, and relate these
definitions to the ability to approximate $H_r$ by exactly commuting matrices.
\begin{defn}
\label{d1}
A set of {\bf exponentially localized Wannier functions with
localization length} $\xi$
is a set of orthonormal vectors, $\{v^a\}$, spanning the subspace onto
which $P$ projects, such that for each vector $v^a$ the
following property holds.  Let $v^a_i$ denote the
coefficient of $v^a$ in basis
element $i$.  Let $x^a=(v^a,\X v^a)$, $y^a=(v^a,\Y v^a)$,
and $z^a=(v^a,\Z v^a)$.  
We require that, for all $a$ and all $D$,
\be
\sum_{i,{\rm dist}(i,(x^a,y^a,z^a)\geq D} |v_i^a|\leq \exp(-D/\xi).
\ee
\end{defn}

\begin{defn}
\label{d2}
A set of {\bf Wannier functions localized to length} $l_{loc}$
is a set of orthonormal vectors, $\{v^a\}$, spanning the subspace onto
which $P$ projects, such that the following property holds.
Let $x^a=(v^a,\X v^a)$, $y^a=(v^a,\Y v^a)$, and $z^a=(v^a,\Z v^a)$.
Let $\vec x^a=(x^a,y^a,z^a))$.
Let $\X'=\sum_a x^a |v^a\rangle\langle v^a|$,
let $\Y'=\sum_a x^a |v^a\rangle\langle v^a|$,
and let $\Z'=\sum_a x^a |v^a\rangle\langle v^a|$,
We require that for any $w$ in the subspace onto which $P$ projects that
\begin{eqnarray}
\label{d2e}
|(\X-\X')) w|\leq l_{loc} |w|, \\ \nonumber
|(\Y-\Y')) w|\leq l_{loc} |w|, \\ \nonumber
|(\Z-\Z')) w|\leq l_{loc} |w|, \\ \nonumber
\end{eqnarray}
\end{defn}

\begin{defn}
\label{d3}
A set of {\bf Wannier functions weakly localized to length $l_{loc}$}
is a set of orthonormal vectors, $\{v^a\}$, spanning the subspace onto
which $P$ projects, such that for each vector $v^a$ the
following property holds:
\begin{eqnarray}
\label{d3e}
(v^a, \X^2v^a)- (v^a,\X v^a)^2\leq l_{loc}^2,\\ \nonumber
(v^a, \Y^2v^a)- (v^a,\Y v^a)^2\leq l_{loc}^2, \\ \nonumber
(v^a, \Z^2v^a)- (v^a,\Z v^a)^2\leq l_{loc}^2.
\end{eqnarray}
\end{defn}

Definition (\ref{d2}) implies definition (\ref{d3});
to see this note that Eq.~(\ref{d2e})
for $w=v^a$ implies Eq.~(\ref{d3e}).  Under one assumption,
definition (\ref{d1}) implies definition (\ref{d2}). 
This
is an assumption about the number of points in the original
lattice, as in the following lemma; this assumption
expresses the
two-dimensionality of the original problem.  This next
lemma unfortunately is fairly tedious in the details, given
the simplicity of the resulting estimate.

\begin{lem}
Suppose that, for any $\vec x=(x,y,z)$ and any $l$,
the number of sites $j$ with 
$(x-\X_{ii})^2+(y-\Y_{ii})^2+(z-\Z_{ii})^2\leq l^2$
is bounded by $c_1+c_2 l^2$, for some constants $c_1,c_2$.
Then,
\be
|(\X-\X'))w|^2 \leq
{\rm const.} \times \xi^2 [c_1+c_2\xi^2]^2
\ee
and similarly for $\Y-\Y'$ and $\Z-\Z'$.
\begin{proof}
The assumption on the number of sites in the lattice
implies a bound on the number of vectors 
$v^a$ with ${\rm dist}(\vec x,\vec x^a)\leq l$ as follows.
Define
\be
P(\vec x,l) = 
\sum_{a, {\rm dist}(\vec x,\vec x^a)\leq l} |v^a\rangle\langle v^a|.
\ee
Then, for any $D$,
\begin{eqnarray}
\label{vecdenseq}
&&{\rm Tr}(P(\vec x,l)) \\ \nonumber
&=&\sum_i 
\sum_{a}^{{\rm dist}(\vec x,\vec x^a)\leq l} |v^a_i|^2 \\ \nonumber
&=& \sum_{i}^{{\rm dist}(\vec x,i)< l+D} \;
\sum_{a}^{{\rm dist}(\vec x,\vec x^a)\leq l} |v^a_i|^2+
\sum_{i}^{{\rm dist}(\vec x,i)\geq l+D} \;
\sum_{a}^{{\rm dist}(\vec x,\vec x^a)\leq l} |v^a_i|^2 \\ \nonumber
&\leq & c_1+c_2 (l+D)^2+
\sum_{i}^{{\rm dist}(\vec x,i)\geq l+D} \;
\sum_{a}^{{\rm dist}(\vec x,\vec x^a)>D} |v^a_i|^2 \\ \nonumber
&\leq & c_1+c_2(l+D)^2+
{\rm Tr}(P(\vec x,l)) \exp(-D/\xi).
\end{eqnarray}
Picking $D=\xi$, we see that ${\rm Tr}(P(\vec x,l)$,
which is the number of such vectors $v^a$
with ${\rm dist}(\vec x,\vec x^a)\leq l$, is bounded by
${\rm const.}\times (c_1+c_2 (l+\xi)^2)$.
Next, 
for $w=\sum_a A(a) v^a$, we have 
\begin{eqnarray}
\label{longeq}
& & |(\X-\X')w|^2 \\ \nonumber
&=&  
\sum_{a \neq b} \overline{A(a)} A(b) (v^a, (\X-x^a) (\X-x^b) v^b) \\ \nonumber
&=&
\sum_{a \neq b} \overline{A(a)} A(b) \sum_i 
\overline{v^a_i}v^b_i  (\X_{ii}-x^a) (\X_{ii}-x^b)
\\ \nonumber
&\leq &
\sum_{a \neq b} |A(a) A(b)| 
\sum_i
|v^a_i v^b_i| |(\X_{ii}-x^a) (\X_{ii}-x^b)|
\\ \nonumber
&\leq &
\sum_{a \neq b} |A(a) A(b)| 
\sum_i
\exp(-{\rm dist}(i,\vec x^a)/\xi)
\exp(-{\rm dist}(i,\vec x^b)/\xi)
|(\X_{ii}-x^a) (\X_{ii}-x^b)|.
\end{eqnarray}
The sum over $i$ in the last line of the above Equation
(\ref{longeq}) is equal to
\begin{eqnarray}
\label{longeq2}
&&
\sum_{i,{\rm dist}(i,\vec x^a)\leq 2 {\rm dist}(\vec x^a,\vec x^b)}
\exp(-{\rm dist}(i,\vec x^a)/\xi)
\exp(-{\rm dist}(i,\vec x^b)/\xi)
|(\X_{ii}-x^a) (\X_{ii}-x^b)|
\\ \nonumber
&+&
\sum_{i,{\rm dist}(i,\vec x^a)>2 {\rm dist}(\vec x^a,\vec x^b)}
\exp(-{\rm dist}(i,\vec x^a)/\xi)
\exp(-{\rm dist}(i,\vec x^b)/\xi)
|(\X_{ii}-x^a) (\X_{ii}-x^b)|
\\ \nonumber
&\leq &
{\rm const.}\times \Bigl( \exp(-{\rm dist}(\vec x^a,\vec x^b)/\xi) 
{\rm dist}(\vec x^a,\vec x^b)^2 [c_1+c_2({\rm dist}(\vec x^a,\vec x^b)+\xi)^2]
\\ \nonumber
&&+
\sum_{k=1}^{\infty}
\;
\sum_{i}^{2^k< \frac{{\rm dist}(i,\vec x^a)}{{\rm dist}(\vec x^a,\vec x^b)}
  \leq 2^{k+1} }
\exp(-{\rm dist}(i,\vec x^a)/\xi)
\exp(-{\rm dist}(i,\vec x^b)/\xi)
|(\X_{ii}-x^a) (\X_{ii}-x^b)| \Bigr)
\\ \nonumber
&\leq &
{\rm const.}\times \Bigl(
\exp(-{\rm dist}(\vec x^a,\vec x^b)/\xi) 
{\rm dist}(\vec x^a,\vec x^b)^2 [c_1+c_2({\rm dist}(\vec x^a,\vec x^b)+\xi)^2]
\\ \nonumber
&& \times
(1+\sum_{k=1}^{\infty}
\exp(-2^k {\rm dist}(\vec x^a,\vec x^b)/\xi)
(2\cdot 2^k)^4) \Bigr)
\\ \nonumber
&\leq &
{\rm const.} \times \exp(-{\rm dist}(\vec x^a,\vec x^b)/\xi) 
(\xi+{\rm dist}(\vec x^a,\vec x^b))^2 [c_1+c_2({\rm dist}(\vec x^a,\vec x^b)+\xi)^2]
\end{eqnarray}
Combining (\ref{longeq}) with (\ref{longeq2}) gives
\be
|(\X-\X')w|^2 \leq {\rm const.} \times
\sum_{a \neq b} |A(a) A(b)| 
\exp(-{\rm dist}(\vec x^a,\vec x^b)/\xi) 
(\xi+{\rm dist}(\vec x^a,\vec x^b))^2 [c_1+c_2({\rm dist}(\vec x^a,\vec x^b)+\xi)^2]
\ee
Combining this with (\ref{vecdenseq}), we have:
\begin{eqnarray}
\nonumber & & |(\X-\X'))w|^2 \\ \nonumber
&\leq &
{\rm const}.\times \sum_{a \neq b} |A(a) A(b)| 
\exp(-{\rm dist}(\vec x^a,\vec x^b)/\xi) 
(\xi+{\rm dist}(\vec x^a,\vec x^b))^2 [c_1+c_2({\rm dist}(\vec x^a,\vec x^b)+\xi)^2]
\\ \nonumber
&\leq &
{\rm const}. \times \sum_{a \neq b} [(|A(a)|^2+|A(b)|^2)/2]
\exp(-{\rm dist}(\vec x^a,\vec x^b)/\xi) 
(\xi+{\rm dist}(\vec x^a,\vec x^b))^2 [c_1+c_2({\rm dist}(\vec x^a,\vec x^b)+\xi)^2]
\\ \nonumber
&= & {\rm const}.\times \sum_a |A(a)|^2 \sum_{b \neq a}
\exp(-{\rm dist}(\vec x^a,\vec x^b)/\xi) 
(\xi+{\rm dist}(\vec x^a,\vec x^b))^2 [c_1+c_2({\rm dist}(\vec x^a,\vec x^b)+\xi)^2]
\\ \nonumber
&\leq &
{\rm const.} \times \xi^2 [c_1+c_2\xi^2]^2
\end{eqnarray}
\end{proof}
\end{lem}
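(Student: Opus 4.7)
The plan is to write $w = \sum_a A(a) v^a$ and expand
\[
|(\X - \X')w|^2 = \sum_{a \neq b} \overline{A(a)} A(b) \sum_i \overline{v^a_i} v^b_i (\X_{ii} - x^a)(\X_{ii} - x^b),
\]
then estimate the inner sum for each pair $(a,b)$, and finally collapse the double sum over $a,b$ into a single one using $|A(a)A(b)| \leq (|A(a)|^2 + |A(b)|^2)/2$. For this last step, I will need a bound on the number of Wannier centers $\vec x^a$ that lie in a ball of radius $l$ around any point, which I can derive in a preliminary lemma: if $P(\vec x, l)$ is the projection onto the span of the relevant $v^a$, then splitting the trace according to whether a site $i$ lies within distance $l + D$ of $\vec x$ or not, and using the hypothesis on site density together with the exponential decay $\sum_i |v^a_i|^2 \leq \exp(-D/\xi)$ for sites farther than $D$ from $\vec x^a$, yields a self-referential inequality that, on taking $D = \xi$, gives a bound of order $c_1 + c_2(l + \xi)^2$ on the number of centers in the ball.

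For the inner sum over sites $i$, I would use exponential decay of both $v^a$ and $v^b$ from their respective centers, combined with the polynomial growth of $(\X_{ii} - x^a)(\X_{ii} - x^b)$. The natural split is dyadic: separate $i$ with $\mathrm{dist}(i, \vec x^a) \leq 2\,\mathrm{dist}(\vec x^a, \vec x^b)$ from the complementary shells where $2^k \mathrm{dist}(\vec x^a,\vec x^b) < \mathrm{dist}(i, \vec x^a) \leq 2^{k+1}\mathrm{dist}(\vec x^a,\vec x^b)$. On the close-in region the site-density hypothesis controls the cardinality; on each dyadic shell the exponential factor dominates the polynomial growth, and the resulting geometric series converges. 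The outcome is that the inner sum over $i$ is at most a constant times
\[
\exp(-\mathrm{dist}(\vec x^a, \vec x^b)/\xi)\,(\xi + \mathrm{dist}(\vec x^a,\vec x^b))^2 \,[c_1 + c_2(\mathrm{dist}(\vec x^a,\vec x^b) + \xi)^2].
\]

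Feeding this back into the double sum and applying $|A(a)A(b)| \leq (|A(a)|^2 + |A(b)|^2)/2$ symmetrizes the expression, so I am left with $\sum_a |A(a)|^2$ (which equals $|w|^2 \leq 1$ if $w$ is a unit vector, or factors out otherwise) times a sum over $b \neq a$ of the quantity above. The preliminary Wannier-center density bound now enters: partitioning the sum over $b$ by dyadic shells $2^{k}\xi < \mathrm{dist}(\vec x^a, \vec x^b) \leq 2^{k+1}\xi$ bounds the number of terms in each shell by $c_1 + c_2(2^{k+1}\xi + \xi)^2$, and the exponential factor again crushes all of the polynomial growth, so the series converges to something of order $\xi^2(c_1 + c_2\xi^2)^2$, as required.

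The main obstacle is the dyadic bookkeeping: one must be careful that each of the polynomial factors $(\xi + \mathrm{dist})^2$, $[c_1 + c_2(\mathrm{dist} + \xi)^2]$, and the Wannier-center density are all absorbed into the decaying exponential without introducing hidden dependence on the global geometry. The estimate is not delicate—any exponential beats any polynomial—but to get the clean final bound stated in the lemma one needs to rescale distances in units of $\xi$ so that each dyadic shell contributes a term of the form (polynomial in $2^k$) times $\exp(-2^k)$, which sums to a constant. The identical argument applies verbatim to $\Y - \Y'$ and $\Z - \Z'$.
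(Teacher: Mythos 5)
Your proposal follows essentially the same route as the paper: the same preliminary Wannier-center density estimate derived via the self-referential inequality with $D=\xi$, the same expansion of $|(\X-\X')w|^2$ into a double sum over $a\ne b$, the same dyadic decomposition of the inner sum over sites $i$ relative to $\mathrm{dist}(\vec x^a,\vec x^b)$, and the same symmetrization $|A(a)A(b)|\le(|A(a)|^2+|A(b)|^2)/2$. The only addition is that you spell out the final sum over $b$ via dyadic shells in units of $\xi$, a step the paper leaves implicit, but the argument is otherwise identical.
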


\subsection{Obstructions to Wannier Functions and Almost Commuting Matrices}
We claim that definition (\ref{d2}) is equivalent
to the ability to approximate $H_1,H_2,H_3$ by exactly
commuting matrices $H_1',H_2',H_3'$.  Consider first
the direction of the implication that (\ref{d2}) implies
the ability to approximate by exactly commuting matrices:
simply set 
\begin{eqnarray}
H_1'=\sum_a x^a |v^a\rangle\langle v^a|, \\ \nonumber
H_2'=\sum_a y^a |v^a\rangle\langle v^a|, \\ \nonumber
H_3'=\sum_a z^a |v^a\rangle\langle v^a|.
\end{eqnarray}
Then, 
$\Vert H_1-H_1' \Vert 
=
{\rm max}_{w, |w|=1} | (H_1-H_1') w | = {\rm max}_{w,|w|=1} |(PX P -X')w|/L= {\rm max}_{w,|w|=1} |(PX-X')w|/L 
\leq{\rm max}_{w,|w|=1} |(\X-\X') w|/L
\leq l/L$.  
Similar bounds follow for $\Vert H_2-H_2' \Vert$ and 
$\Vert H_3-H_3' \Vert$.
So, Eq.~(\ref{d2e}) implies the ability to approximate by
exactly commuting matrices up to error $l/L$.
To see the converse implication, that the ability to
approximate by exactly commuting matrices implies definition (\ref{d2}),
let the vectors $v^a$ be basis vectors in a basis in
which $H_r'$ are exactly diagonal.  Then, for any $w$,
\begin{eqnarray}
|(\X-\X') w|
&=& \sqrt{|(P\X P-\X')) w|^2 + |(1-P) \X w |^2} \\ \nonumber
&=& L \sqrt{|(H_1-H_1') w|^2+|[P,\X] w|^2} \\ \nonumber
&\leq& L \sqrt{\Vert H_1-H_1' \Vert^2+4 (v_{LR}/\Delta E)^2} |w|. 
\end{eqnarray}

Combining these implications, the presence of an
index obstruction to approximating by exactly commuting matrices
implies an obstruction to definition (\ref{d2}), which implies
(under the assumption above about the number of points in the
original lattice) an obstruction to finding exponentially
localized Wannier functions.  Conversely, the ability to
approximate $H_r$ by exactly commuting matrices implies
the ability to find Wannier functions
obeying definitions (\ref{d2},\ref{d3}).  

Combining lemma (\ref{spherreplem}) with theorem (\ref{thm:sphereQuantitative}),
the absence of an index obstruction implies the ability to find Wannier
functions localized to length $L \epsilon({\rm const}\times (v_{LR}/L\Delta E)^2)=
L(v_{LR}/L \Delta E)^{1/6}
E({\rm const.}\times L \Delta E/v_{LR})$.  This length scale is asymptotically smaller than $L$.
We leave as an open problem the question of whether the absence
of an index obstruction implies the ability to find Wannier functions localized to
length $v_{LR}/\Delta E$.
We also leave as an open problem the question of whether the absence of
an index obstruction implies the ability to find exponentially
localized Wannier functions; such a result is known in the translationally invariant case\cite{momentumtorus}.
Note in this regard that under the assumptions of finite range and interaction strength and spectral gap $\Delta E$, it
is possible to prove that the matrix elements of $P$ are
exponentially decaying: $|P_{ij}| \leq \exp(-{\rm dist}(i,j)/\xi')$,
where $\xi'$ is proportional to $v_{LR}/\Delta E$.  This proof
uses standard techniques to prove locality of correlation
functions in gapped systems
\cite{hastingsLieb-Schultz-Mattis,hastingsLocalityQuantum} and
is based on using a smoother function $f(t)$ in Eq.~(\ref{gfn}).

The index obstruction  can be computed for several examples.
In an ordinary band insulator, the obstruction vanishes.
In numerical results below, we give applications to
a lattice realization of a quantum Hall system on the
surface of the sphere where the index is non-vanishing.

\subsection{Numerical Simulations}

We have performed numerical simulations to illustrate the 
usefulness of this index obstruction for studying physical
systems such as a quantum Hall effect on the sphere.  The index 
we consider has the advantage, compared to more
usual Chern number obstructions calculated on a momentum torus\cite{momentumtorus}, that it
does not require translation invariance, and it also does not 
require averaging over parameters of the Hamiltonian
on a flux torus\cite{fluxtorus}, as in the Chern
number calculation of the Hall conductance.
Compared to the noncommutative geometry approach
discussed in \cite{BellissardNCGquantumHall}, we
have a method that will adapt to a variety of surfaces
and is applicable to finite size systems.

We have considered the following system.  The choices that we 
made are deliberately somewhat arbitrary: we wanted a modest sized
system describing free particles moving on the surface of a 
sphere in the presence of a roughly uniform magnetic field exiting the sphere,
but we wanted to illustrate the robustness of this index
even in a system with no carefully chosen symmetry.
We considered a total of 560 sites on the surface of a sphere.  The sites were
distributed on 29 different latitudes, such that all 
sites in a given latitude had the same angle from the north pole (and hence
had the same $z$ coordinate).  The angles $\theta$ 
describing the latitudes were evenly spaced from $\pi/30,2\pi/30,...,29\pi/30$.
On each latitude, the number of different sites equal to 
the floor of a constant times $\cos(\theta)$, for some constant
giving $560$ total sites, with the angles $\phi$ of the 
sites evenly spaced from $0$ to $2\pi$.  The matrices $\X,\Y,\Z$ were 
chosen to equal the $x,y,z$ coordinates of each site, with
$\X_{ii}=\sin(\theta_i)\sin(\phi_i),
\Y_{ii}=\cos(\theta_i)\sin(\phi_i),
\Z_{ii}=\cos(\phi_i)$.

The Hamiltonian $H$ was chosen so that $H_{ij}=0$ if the distance
between sites $i$ and $j$, measured
as $\sqrt{(\X_{ii}-\X_{jj})^2+(\Y_{ii}-\Y_{jj})^2+(\Z_{ii}-\Z_{jj})^2}$,
was greater than a maximum range, which we
chose to be $\sqrt{0.07}\approx 0.26$.  Otherwise, the matrix
element $H_{ij}$ was chosen to equal $-J_{ij}\exp(i\omega_{ij})$, where
$J_{ij}$ was the strength of the interaction
and $\omega_{ij}$ was a phase.  We set $J_{ij}$ equal
to $-1$ plus a constant $\delta$ times a random number chosen
independently for each pair $i,j$ and uniformly between $-0.5$ and $0.5$.
The phase $\omega_{ij}$ was chosen to mimic the effect of a 
magnetic field.  We picked
\be
\omega_{ij}=n_{monopole}*(\phi_i-\phi_j)*\cos((\theta_i+\theta_j)/2),
\ee
where $n_{monopole}$ is an integer describing the 
net flux leaving the sphere.
We chose $n_{monopole}=100$ to make the net flux slightly 
smaller than the number of sites.

\begin{figure}
\label{figspect}
\includegraphics[width=3in,angle=270]{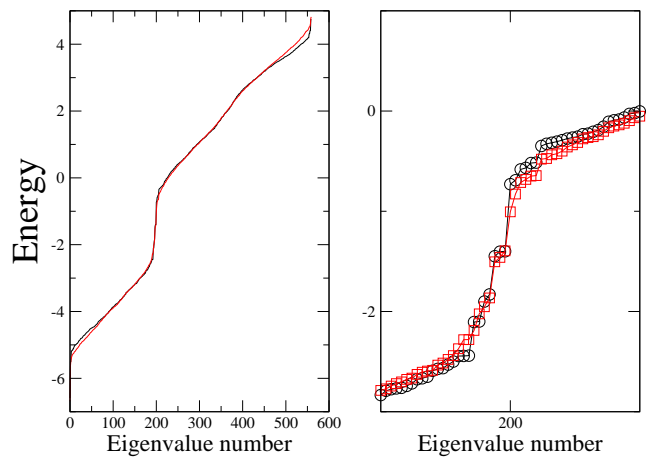}
\caption{
Spectrum of the Hamiltonian for $\delta=0$ (black line) 
and $\delta=1$ (red line).  The right column is a detail of
the figure near the band gap.
}
\end{figure}

Diagonalization of the Hamiltonian for $\delta=0,1$ gave the 
energy spectra shown in Fig.~(\ref{figspect}).
Note that for $\delta=0$ there are very few eigenvalues 
between roughly $-2.3$ and $-0.3$
and there are no eigenvalues between
$-1.3994$ and $-0.7296$.  
It is important to understand that the term ``band gap'' can be used 
in two different ways in physics.  One way means that
there is an interval $(E_{min},E_{max})$ containing no eigenvalues. 
This is referred to as a ``strict band gap.''
The other use is that there is an interval containing 
very {\it few} eigenvalues, with the corresponding eigenvectors being localized.
Such eigenvectors are referred to as ``mid-gap states.''
If there is a strict gap in the spectrum of $H$, of order unity, and if
the range of $H$ is much less than unity (for example, 
$0.07$) in our case, then the gap can be used to prove that $P$ approximately
commutes with $\X,\Y,\Z$ as discussed above.  In the example
we considered, however, there is not a very large strict band gap: if we
want to consider that the energy $1$ lies in the middle of a
strict band gap, then all we can say is that
there are are no eigenvalues in the interval
$(-1.3994,-0.7296)$.  However, even without a large strict
band gap, if the mid-gap states are indeed localized, then the
projector $P$ will still approximately commute with $\X,\Y,\Z$ and 
so the commutators of $P\X P,P\Y P,P\Z P$ will still be small.
We will see numerically below that this is the case for our problem.

There are $200$ eigenvalues less than $-1$, so the projector 
$P$ onto states with energy less than $-1$ has
rank $200$.  We computed the spectrum
of $\B$ for matrices $P\X P,P\Y P,P\Z P$.  The 
index was equal to unity, so that there were $201$ eigenvalues close to
unity and $199$ close to zero.  The smallest eigenvalue close 
to unity was equal to
$0.9763$ and the largest eigenvalue close to zero was equal to
$0.0146$, so that there is a very clear separation between 
the eigenvalues close to zero and those close to unity.
The largest commutator was $[P\Y P,P\Z P]$, with 
$\Vert [P\Y P,P\Z P] \Vert \approx 0.0298021$.  Thus, 
the matrices are very close to commuting as
claimed.  In Fig.~(\ref{figK}) we plot the eigenvalues 
of $\B$.

\begin{figure}
\label{figK}
\includegraphics[width=3in,angle=270]{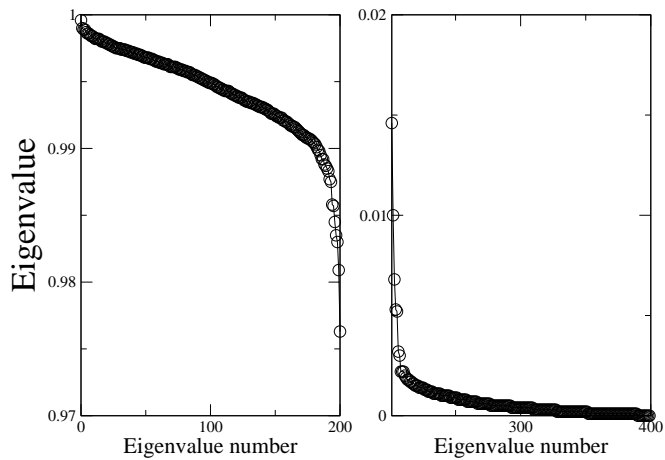}
\caption{Spectrum of $\B$ for $\delta=0$, projecting onto energies above $1$.  The left and right columns plot the eigenvalues close to unity and to zero, respectively.
There are $201$ eigenvalues on the left and $199$ on the right.}
\end{figure}

The case with $\delta=1$ was similar.  There were, in this case, 
$201$ eigenvalues less than $-1$ (so one eigenvalue crossed
unity at some value of $\delta$ between $0$ and $1$).  
However, the index remained equal to unity, and the
smallest eigenvalue of the $\B$ close to 
unity was $0.9642$ while the largest eigenvalue close to
zero was $0.0242$.
The largest commutator was again $[P\Y P,P\Z P]$, with 
$\Vert [P\Y P,P\Z P] \Vert \approx 0.036728$.  Thus, the matrices 
are very still close to commuting.

We can explicitly check that the index is insensitive to 
small changes in the energy as long as we do not 
enter the band of delocalized states.
Projecting instead onto eigenvalues less than $-2$,
there are a total of $195$ states, the index is still equal to unity, and
the eigenvalues of $\B$ which were closest to $0.5$ were
$0.305$ and $0.9562$.
Projecting onto eigenvalues less than $-3$, there are a 
total of $163$ states, the index is still equal to unity, and
the eigenvalues of $\B$ which were
closest to $0.5$ were
$0.3571$ and $0.6123$.
Projecting onto eigenvalues less than $-4$, there are 
a total of $90$ states, the index is now equal to zero, and
the eigenvalues of $\B$ which were closest to $0.5$ were
$0.4555$ and $0.6657$.  Thus, as expected, when we move 
away from the band gap, the index ceases to be well-defined since the
matrices cease to approximately commute.

Finally, we can check that we approximately represent 
the sphere, namely that $P\X P^2+P\Y P^2+P\Z P^2$ is close 
to the identity in the subspace
projected onto by $P$.  The smallest eigenvalue was 
$0.934833$ when projecting onto states with energy less
than $-1$ and the $85$-th
largest
eigenvalue was still greater than $0.99$, but the smallest eigenvalue was
only $0.447221$ when projecting onto states with energy less than $-4$.

\subsection{Relation to Hall Conductance}
The index calculated numerically for the lattice Hall system above is clearly closely related to the Hall conductance.
The formula in lemma (\ref{chernlemma}) expresses an approximation to the index in terms of a trace ${\rm Tr}(P\X P [P \Y P,P \Z P])$.
Consider a family of Hamiltonians $H$ with increasing $L$, with a uniform lower bound on the spectral gap, and uniform upper bound on $v_{LR}$.
Suppose that the dimension of the Hamiltonians, $n$, is proportional to $L^2$, as is natural for a two-dimensional system.
Then, by lemma (\ref{spherreplem}), the matrices $H_r$ form a $\delta$-representation of the sphere, with $n \delta^2\propto 1/L^2$.
This means that, by lemma (\ref{chernlemma}), ${\rm Tr}(P\X P [P \Y P,P \Z P])$ is within $1/L^2$ of an integer.
That is, this trace is approximately quantized.

This trace is closely related to the Kubo linear response formula for the Hall effect: consider applying an electric potential to
the sphere which varies uniformly between the north and south poles.  This amounts to adding a term $\Psi^\dagger \Z \Psi$ to the
Hamiltonian.  If the Hall conductance is positive, this will drive a current in the counterclockwise direction
around the sphere (and in the clockwise direction for negative Hall conductance).
Thus, at positive $x$-coordinate, one would expect to see a current in the positive $y$-direction, and at negative
$x$-coordinate one would expect to see a current in the negative $y$-direction.  If
the Hamiltonian $H$ is proportional to a projector, then this response
is proportional to the trace ${\rm Tr}(P\X P [P \Y P,P \Z P])$, up to numeric constants, and factors of the electric charge.  Thus,
we prove approximate quantization of the Hall conductance in spherical geometry for non-interacting electrons for
such Hamiltonians which are proportional to projectors.
Perhaps with more work it will be possible in this way to
prove quantization of the Hall conductance in spherical geometry for non-interacting electrons for arbitrary gapped Hamiltonians.

\section{Matrices With Additional Reality Constraints}
\label{sec:real}
In this section, we consider further the case in which the matrices
$H_r$ are assumed to be either real or self-dual.  By the results above,
since the index vanishes in this case, it is possible to approximate almost
commuting real or self-dual matrices by exactly commuting matrices.  However,
we can ask a further question: is it possible to approximate real or self-dual
matrices by exactly commuting real or self-dual matrices?

We begin with physical motivation for considering this problem.  Based
on the physical intuition, it is natural to conjecture that there is a $Z_2$
obstruction to approximating almost commuting self-dual matrices by exactly
commuting self-dual matrices, and that there are no other constructions.  We
then verify one of these conjectures: we construct this $Z_2$ index,
prove that if the index is nontrivial then there is a lower bound on the
distance to exactly commuting self-dual matrices, and construct an example
with a nontrivial index.  We then finish with precise statements of our
other conjectures.

\subsection{Physical Motivation}
In case the Hamiltonian ${\mathcal H}$ has time reversal
symmetry, the matrix $P$ will have the same symmetry. 
The possible cases of interest
correspond to different universality classes in random
matrix theories.  We discuss three classes here,
corresponding to the GUE, GOE, and GSE
classes.
For previous application of these universality classes to classifying
different insulating phases of free fermions, see
\cite{schnyder-2008}, in particular table II.

In the GUE case, ${\mathcal H}$ has no time reversal
symmetry, $P$ is a Hermitian projector, and $H_r$ are
Hermitian matrices with no further
symmetry constraints.  In the GOE case, ${\mathcal H}$ has 
time reversal symmetry, and the time reversal symmetry operator 
squares to unity: this describes
a spin-$0$ particle in a time reversal symmetric situation, 
or a spin-$1/2$ particle with time reversal symmetry and 
no spin-orbit coupling.
In this case, $H_r$ are real, symmetric matrices.  In the GSE 
case, ${\mathcal H}$ has time reversal symmetry, and the 
time reversal symmetry operator
squares to minus one.  This describes a spin-$1/2$ particle 
with strong spin-orbit coupling.  In this case, the $H_r$ 
are self-dual.

In the GSE case, the index vanishes due to the time 
reversal symmetry as explained above.
There is therefore no index obstruction to 
approximating $H_r$ by exactly commuting $H_r'$.  However,
it is natural to look for
a $Z_2$ obstruction to approximating 
three almost commuting $H_r$ by self-dual $H_r'$, because
we know that
there is a $Z_2$ index characterizing different 
translationally invariant phases of free fermions with
symplectic symmetry in two dimensions
\cite{Kane_et_al_Zmod2QuantumHall,kane_et_al_HallInGraphene}.  Such 
topologically nontrivial phases are expected, by their nature,
to be stable to perturbations of the Hamiltonian which 
break translational symmetry, as discussed in 
\cite{schnyder-2008} and have
been observed experimentally in $HgTe/(Hg,Ce)Te$ 
quantum wells\cite{konig2007quantum}.

\subsection{Index Obstruction}
We construct the index by finding a unitary transformation that
makes $S(H_1,H_2,H_3)$ anti-symmetric and then taking the sign
of the Pfaffian of this matrix.  From a physical point of view, the
existence of this unitary transformation is not surprising: the self-dual
operation can be regarded as a time-reversal symmetry operation, and a
similar time-reversal symmetry can be applied to the $\sigma$ matrices
used to construct $S(H_1,H_2,H_3)$.  Under these combined time reversal
symmetries, $S(H_1,H_2,H_3)$ changes sign; however, since there are
two spin-$1/2$s, the time reversal symmetry operator squares to unity and hence,
up to a basis change, is equivalent to transposition.  We now show this.

\begin{lem}
Let $H_r$ be self-dual.  Define the matrix
$\bD(H_1,H_2,H_3)$ by
\be
\bD(H_1,H_2,H_3)=U^* S(H_1,H_2,H_3) U,
\ee
where the unitary $U$ is defined by
\begin{eqnarray}
U=\frac{1}{\sqrt{2}} (I+Z \otimes \sigma_2).
\end{eqnarray}
Then, the matrix $\bD$ is anti-symmetric.
\begin{proof}
Note that for any $r$, $\sigma_r^T=-\sigma_2 \sigma_r \sigma_2$, while
$H_r^T=-Z H_r Z$, by the assumption of self-duality.  Also, $U^T=U$.  Thus,
\begin{eqnarray}
\bD(H_1,H_2,H_3)^T& = &U S(H_1,H_2,H_3)^T U^* \\ \nonumber
&=& U \Bigl( \sum_r H_r \otimes \sigma_r \Bigr)^T U^* \\ \nonumber
&=& U \Bigl( \sum_r H_r^T \otimes \sigma_r^T \Bigr) U^* \\ \nonumber
&=& U \Bigl( \sum_r ZH_r Z \otimes \sigma_2 \sigma_r \sigma_2 \Bigr) U^* \\ \nonumber
&=& U \Bigl( Z\otimes \sigma_2\Bigr)  \Bigl(\sum_r H_r \otimes \sigma_r \Bigr) \Bigl(Z \otimes \sigma_2 \Bigr)U^*.
\end{eqnarray}
Note that $(Z \otimes \sigma_2)^2=-I$, so
\begin{eqnarray}
U \Bigl( Z\otimes \sigma_2\Bigr) &=& 
-\frac{1}{\sqrt{2}}(I-Z \otimes \sigma_2) \\ \nonumber
&=& -U^*,
\end{eqnarray}
and
\begin{eqnarray}
\Bigl( Z\otimes \sigma_2\Bigr) U^*&=& 
\frac{1}{\sqrt{2}}(I+Z \otimes \sigma_2) U^*\\ \nonumber
&=& U.
\end{eqnarray}
Thus,
\begin{eqnarray}
U \Bigl( Z\otimes \sigma_2\Bigr)  \Bigl(\sum_r H_r \otimes \sigma_r \Bigr) \Bigl(Z \otimes \sigma_2 \Bigr)U^*
&=& - U^* \Bigl(\sum_r H_r \otimes \sigma_r \Bigr) U \\ \nonumber
&=& -\bD(H_1,H_2,H_3).
\end{eqnarray}
\end{proof}
\end{lem}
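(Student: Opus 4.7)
The plan is to show directly that $\bD^T = -\bD$ by a three-step manipulation: transpose, conjugate, and recombine. First I would observe that $U$ is symmetric. Since $Z^T=-Z$ (immediate from the block form) and $\sigma_2^T=-\sigma_2$, we have $(Z\otimes \sigma_2)^T = Z\otimes \sigma_2$, hence $U^T=U$. Therefore
\[
\bD^T = (U^*S U)^T = U^T S^T (U^*)^T = U\,S^T\,U^*,
\]
and the problem reduces to computing $S^T$ in a form that can be reabsorbed into the $U$'s.

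Next I would establish the clean conjugation identity
\[
S^T \;=\; (Z\otimes \sigma_2)\,S\,(Z\otimes \sigma_2).
\]
This uses two elementary transpose identities: $\sigma_r^T = -\sigma_2\sigma_r\sigma_2$ for $r=1,2,3$ (a one-line verification for each Pauli matrix, reflecting that $\sigma_2$ implements complex conjugation on the spin-$1/2$ representation up to sign), and $H_r^T = -Z H_r Z$, which is just the definition of self-duality. Substituting into $S^T = \sum_r H_r^T\otimes \sigma_r^T$, the two minus signs multiply to $+1$ and the tensor factors rearrange into the claimed sandwich by $Z\otimes \sigma_2$.

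The final step is the key algebraic observation that conjugation by $Z\otimes \sigma_2$ exchanges $U$ and $U^*$ up to sign. Since $Z^2=-I$ and $\sigma_2^2=I$, we have $(Z\otimes \sigma_2)^2=-I$, and a short calculation gives
\[
U(Z\otimes \sigma_2) = -U^*, \qquad (Z\otimes \sigma_2)U^* = U.
\]
Substituting these into $\bD^T = U\,S^T\,U^* = U(Z\otimes \sigma_2)\,S\,(Z\otimes \sigma_2)U^*$ collapses the outer factors to $-U^*SU = -\bD$, as desired.

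I do not anticipate any real obstacle here: the computation is essentially forced once one recognizes that self-duality and the $\sigma_2$-twist on the auxiliary spin-$1/2$ combine so that $Z\otimes \sigma_2$ squares to $-I$, making $U=\tfrac{1}{\sqrt{2}}(I+Z\otimes \sigma_2)$ a ``square root'' structure that intertwines $S$ and $S^T$. The only step requiring care is sign-bookkeeping in the Pauli identity $\sigma_r^T=-\sigma_2\sigma_r\sigma_2$ and in ensuring that the two minus signs from the $\sigma_r$ and $H_r$ transposes cancel rather than add.
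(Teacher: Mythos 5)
Your proposal follows essentially the same approach as the paper's proof: show $U^T=U$, conjugate $S^T$ into a sandwich by $Z\otimes\sigma_2$ using the Pauli transpose identity and self-duality, and then use $(Z\otimes\sigma_2)^2=-I$ to convert $U(Z\otimes\sigma_2)$ to $-U^*$ and $(Z\otimes\sigma_2)U^*$ to $U$. The sign bookkeeping is handled correctly throughout, and no step is missing.
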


\begin{defn}
We define the index $\bI(H_1,H_2,H_3)$ for self-dual matrices $H_r$ by
\be
\bI(H_1,H_2,H_3)={\rm sgn}(
\Pf(\bD(H_1,H_2,H_3))),
\ee
where $\Pf$ is the Pfaffian and ${\rm sgn}(x)=1$ for $x>0$ and ${\rm sgn}(x)=-1$ for $x<0$.  If $\Pf(\bD(H_1,H_2,H_3))=0$, the index
$\bI(H_1,H_2,H_3)$ is not defined.
\end{defn}

\begin{lem}
\label{lem:pathstable}
Consider any continuous path of self-dual matrices, $H_r(s)$, where $s$ is a real number, $0\leq s \leq 1$.  Suppose that for all $s$,
the matrix $B(H_1,H_2,H_3)$ has non-vanishing determinant.  Then, $\bI(H_1(0),H_2(0),H_3(0))=\bI(H_1(1),H_2(1),H_3(1))$.
\begin{proof}
The determinant of $B(H_1,H_2,H_3)$ is equal to $\Pf(\bD(H_1,H_2,H_3))^2$.  As long as the determinant does not vanish, the Pfaffian
does not vanish and hence cannot change sign.
\end{proof}
\end{lem}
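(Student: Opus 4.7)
The plan is to exploit the fact that, by definition, $\bI$ is the sign of a Pfaffian, and to run a standard connectedness argument on the parameter interval $[0,1]$. The first step is to observe that since $H_r(s)$ varies continuously and each $H_r(s)$ is self-dual, the preceding lemma produces a continuous path $s \mapsto \bD(s) := \bD(H_1(s),H_2(s),H_3(s)) = U^{\ast}\Sym(H_1(s),H_2(s),H_3(s))U$ in the space of antisymmetric matrices. Since the Pfaffian is a polynomial in the entries of an antisymmetric matrix, the real-valued function $s \mapsto \Pf(\bD(s))$ is continuous on $[0,1]$.

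The second step is the classical identity $\Pf(A)^2 = \det(A)$ for antisymmetric $A$, which combined with unitary invariance of the determinant gives
\[
\Pf(\bD(s))^2 = \det(\bD(s)) = \det(\Sym(H_1(s),H_2(s),H_3(s))).
\]
Reading the non-vanishing-determinant hypothesis as the natural one that keeps $\bI$ defined throughout the path, namely $0 \notin \sigma(\Sym(H_1(s),H_2(s),H_3(s)))$ (equivalently $\frac{1}{2} \notin \sigma(\B(\dots))$), we conclude that this quantity is nonzero for every $s$, so $\Pf(\bD(s)) \neq 0$ for every $s \in [0,1]$. A nowhere-vanishing continuous real-valued function on a connected interval has constant sign by the intermediate value theorem, hence $\sgn(\Pf(\bD(0))) = \sgn(\Pf(\bD(1)))$, which is exactly the claim.

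There is essentially no obstacle here: the whole argument is ``continuous nonvanishing real function on $[0,1]$ has constant sign,'' combined with the identity $\Pf^2 = \det$ and continuity of $\bD$ along the path. The one point worth flagging is that the stated hypothesis on $\det(\B(s))$ must really translate to $\det(\Sym(H_1(s),H_2(s),H_3(s))) \neq 0$ — that is, to absence of zero eigenvalues of $\Sym$, or equivalently to absence of eigenvalues $\frac{1}{2}$ of $\B$, along the path — since it is precisely this condition that is equivalent to $\Pf(\bD(s)) \neq 0$ and therefore to $\bI$ being defined at each $s$.
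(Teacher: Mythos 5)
Your proof is correct and matches the paper's approach: use $\Pf(\bD)^2 = \det(\bD)$, note the Pfaffian is a continuous (polynomial) function along the path, and conclude that a nowhere-vanishing continuous real function on $[0,1]$ cannot change sign. You are also right to flag the hypothesis: as literally stated the lemma assumes $\det(\B(H_1,H_2,H_3)) \neq 0$, but since $\B = \tfrac{1}{2}(I+\Sym)$ while $\bD = U^*\Sym U$, the identity actually used in the paper's proof should read $\det(\Sym) = \det(\bD) = \Pf(\bD)^2$, and the intended hypothesis is $\det(\Sym(H_1(s),H_2(s),H_3(s))) \neq 0$ (equivalently $\tfrac{1}{2} \notin \sigma(\B)$, which is what keeps the index defined). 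Your reading of the hypothesis is the correct one, and with it the argument goes through exactly as the paper intends.
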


\begin{lem}
If $H_1,H_2,H_3$ are self-dual and exactly commuting and $\bI(H_1,H_2,H_3)$ is defined, then
$\bI(H_1,H_2,H_3)=1$.
\begin{proof}
Note that $\bI$ is multiplicative under direct sum of matrices.  Also, $\bI$ is invariant under symplectic transformation of
the $H_r$ (these transformations preserve the property of being self-dual).  Finally, for any commuting self-dual matrices $H_r$,
we can find a symplectic transformation which makes the $H_r$ diagonal; the diagonal entries of these matrices come in pairs which are
equal.  That is, if the $H_r$ are $2n$ dimensional matrices, then after this symplectic transformation then each $H_r$ is equal to
the direct sum of $n$ different $2$-by-$2$ matrices which are proportional to the identity.
Thus, it suffices to consider the case in which the $H_r$ are {\em real} scalar multiples of
the $2$-by-$2$ identity matrix. 
If $H_r = \alpha_r I$  then
\begin{eqnarray}
\bD(H_1,H_2,H_3)&=&
\frac{1}{2} (I-Z \otimes \sigma_2) \Bigl( \sum \alpha_r \otimes \sigma_r \Bigr)
(I+Z \otimes \sigma_2) \\ \nonumber
&=& i\alpha_3 Z \otimes \sigma_3
+ i\alpha_2 I \otimes \sigma_2
+ i\alpha_3 Z \otimes \sigma_1
\\ \nonumber
&=& \begin{pmatrix}
0 & i\alpha_1 & i\alpha_2 & i\alpha_3 \\
-i\alpha_1 & 0 & -i\alpha_3 & i\alpha_2 \\
i\alpha_2 & i\alpha_3 & 0 & -i\alpha_1 \\
-i\alpha_3 & i\alpha_2 & i\alpha_1 & 0
\end{pmatrix}
\end{eqnarray}
This matrix has Pfaffian equal to 
\[
(i\alpha_1)(-i\alpha_1) - (i\alpha_2)(i\alpha_2) + (i\alpha_3)(-i\alpha_3) 
= \alpha_1^2 + \alpha_2^2 + \alpha_3^2 .
\]
\end{proof}
\end{lem}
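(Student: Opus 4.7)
The plan is to follow the reduction sketched in the lemma statement itself: use multiplicativity of $\bI$ under direct sums together with invariance under symplectic (i.e.\ duality-preserving unitary) conjugation to reduce to the case of self-dual scalar-type matrices, where the Pfaffian is computed by hand and is manifestly a sum of squares.

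First I would verify two structural properties of $\bI$. Multiplicativity under direct sums: if we write each $H_r$ as a block-diagonal sum $H_r = H_r^{(1)} \oplus H_r^{(2)}$ of self-dual matrices on complementary Kramers-paired subspaces, then $\Sym(H_1,H_2,H_3)$ splits correspondingly into two blocks after a simple permutation of tensor factors, so $\bD$ is (up to unitary conjugation by a permutation that preserves self-duality) a direct sum of two smaller $\bD$'s. The Pfaffian of a block-diagonal anti-symmetric matrix is the product of the Pfaffians, so $\bI(H_1,H_2,H_3) = \bI(H_1^{(1)},H_2^{(1)},H_3^{(1)}) \cdot \bI(H_1^{(2)},H_2^{(2)},H_3^{(2)})$. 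Symplectic invariance: if $W$ is unitary with $W^T Z W = Z$ (equivalently, $W$ preserves the self-dual property via conjugation), then conjugating each $H_r$ by $W$ conjugates $\Sym(H_1,H_2,H_3)$ by $W \otimes I$, which in turn conjugates $\bD$ by a unitary matrix whose determinant has modulus one; since $\Pf(A^T M A) = \det(A) \Pf(M)$ for anti-symmetric $M$, and since symplectic $W$ has $\det(W) = 1$, the sign of the Pfaffian is preserved.

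Next I would invoke the quaternionic spectral theorem: a commuting family of self-dual matrices can be simultaneously diagonalized by a symplectic unitary, and the diagonal of each $H_r$ consists of $n$ Kramers-degenerate pairs of real eigenvalues. Reordering the basis so that paired eigenvalues sit in adjacent $2\times 2$ blocks expresses $(H_1, H_2, H_3)$ as a symplectic conjugate of a direct sum of triples of the form $(\alpha_1 I_2, \alpha_2 I_2, \alpha_3 I_2)$ with $\alpha_r \in \mathbb{R}$. By multiplicativity, it suffices to compute $\bI$ for one such triple.

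Finally, for the $2\times 2$ scalar case I would compute $\bD(\alpha_1 I_2, \alpha_2 I_2, \alpha_3 I_2)$ explicitly by expanding $U^*(\sum_r \alpha_r I_2 \otimes \sigma_r)U$, obtaining the $4\times 4$ anti-symmetric matrix
\[
\begin{pmatrix} 0 & i\alpha_1 & i\alpha_2 & i\alpha_3 \\ -i\alpha_1 & 0 & -i\alpha_3 & i\alpha_2 \\ -i\alpha_2 & i\alpha_3 & 0 & -i\alpha_1 \\ -i\alpha_3 & -i\alpha_2 & i\alpha_1 & 0 \end{pmatrix},
\]
whose Pfaffian evaluates to $\alpha_1^2 + \alpha_2^2 + \alpha_3^2 \geq 0$. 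This is strictly positive whenever $\bI$ is defined (since $\Pf(\bD)^2 = \det(\B)$ up to constants, the hypothesis that $\bI$ is defined forces the Pfaffian to be nonzero, hence strictly positive), so $\bI = +1$ on each block and therefore on the original triple.

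The main obstacle, and the only place requiring real thought, is the quaternionic simultaneous diagonalization step; I would either cite it as the standard Kramers/quaternionic spectral theorem or prove it inductively by picking one $H_r$, using an eigenspace of fixed eigenvalue (which is automatically even-dimensional and carries a compatible symplectic form inherited from $Z$), and iterating with the remaining commuting $H_s$ restricted to that eigenspace.
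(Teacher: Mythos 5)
Your proposal is correct and follows essentially the same line of argument as the paper: multiplicativity of $\bI$ under direct sums, invariance under symplectic conjugation, simultaneous symplectic diagonalization into Kramers pairs, and a direct $4\times 4$ Pfaffian computation yielding $\alpha_1^2+\alpha_2^2+\alpha_3^2$. In fact your explicit $4\times 4$ matrix is the genuinely anti-symmetric one (the paper's display has sign typos in the lower-left entries and writes $\alpha_3$ where $\alpha_1$ is intended), though the final Pfaffian value agrees.
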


We now show that for any $\delta>0$, there exist self-dual matrices $H_r$ which form a $\delta$-representation of the
sphere with $\bI(H_1,H_2,H_3)=-1$.  This example essentially consists of two copies of
the matrices in (\ref{exa:basicExample}), with opposite Bott index between the two copies.

\begin{example}
\label{selfdualexample}
Let
$ n=2S+1. $
Consider the spin matrices $S^1,S^2,S^3$ for a quantum spin $S$ with
$H_1=I^{(1)} \otimes S^1/\sqrt{S(S+1)}$,
$H_2=\sigma_2^{(1)} \otimes S^2/\sqrt{S(S+1)}$,
$H_3=I^{(1)} \otimes S^3/\sqrt{S(S+1)}$.
We will be using the $\sigma$ matrices in two different ways in this example: first, to form the $2n$ dimensional matrices
$H_r$ from the $n$ dimensional matrices $S^r$, second to form the matrix $\bD$.  We use $\sigma^{(1)}$ to refer to the first case, and
$\sigma^{(2)}$ to refer to the second.  Note that $Z=-i\sigma_2^{(1)}\otimes I$.   We use $I^{(1)}$ to refer to the $2$-by-$2$ identity matrix.
The matrix $S^2$ is anti-symmetric, while $S^1,S^3$ are symmetric.

It is easy to see that
\[ \Vert H_r,H_s \Vert \leq 1/S,
\]
so that the matrices form a $\delta$-representation of the sphere with $\delta=1/S$.

We claim that $\bI(H_1,H_2,H_3)=-1$.
\begin{proof}
We now compute the Pfaffian.  Since the index depends only on the sign of the Pfaffian, we ignore constant factors which are real and positive.
We have
\begin{eqnarray}
&&\bD(H_1,H_2,H_3) \\ \nonumber
&=& {\rm const.} \times
(I-Z \otimes \sigma^{(2)}_2) B(H_1,H_2,H_3)
(I+Z \otimes \sigma^{(2)}_2) \\ \nonumber
&=& {\rm const.}\times \sigma^{(1)}_2 \otimes \Bigl( S^x \otimes \sigma^{(2)}_3+S^y \otimes \sigma^{(2)}_2 - S^z \otimes \sigma^{(2)}_1 \Bigr).
\end{eqnarray}
Unitarily conjugate this matrix $\bD(H_1,H_2,H_3)$ by the orthogonal transformation $(1/\sqrt{2}) I \otimes (I^{(2)}+i\sigma^{(2)}_2)$, giving
the matrix
${\rm const.} \times \sigma^{(1)}_2 \otimes \sum_r S^r \sigma^{(2)}_r$.  Since this orthogonal transformation has determinant $+1$, the Pfaffian is
unchanged.

The matrix
$\sigma^{(1)}_2 \otimes \sum_r S^r \sigma^{(2)}_r$ is anti-symmetric and equals
\be
\begin{pmatrix}
0 & i\sum_r S^r \sigma_r \\
-i \sum_r S^r \sigma_r & 0
\end{pmatrix}
\ee

The Pfaffian of this matrix is equal to the determinant of
$i\sum_r S^r \sigma_r$, which is equal to $i^{4S+2}$ times the determinant of $\sum_r S^r \sigma_r$.  The matrix
$\sum_r S^r \sigma_r$ has $2(S+1/2)+1=2S+2$ positive eigenvalues and $2(S-1/2)+1=2S$ negative eigenvalues, as
computed in example (\ref{exa:basicExample}).  Thus, the sign of the Pfaffian is equal to
\be
i^{4S+2} (-)^{2S}=-1.
\ee

\end{proof}
\end{example}

\begin{lem}
\label{lem:indexDIsStable}
Suppose 
$\left(H_1, H_2, H_3 \right)$ and
$\left(K_1, K_2, K_3 \right)$ are triples of
self-dual, Hermitian $n$-by-$n$ matrices and suppose
$\left(H_1, H_2, H_3 \right)$ is a 
$\delta$-representation
of the sphere with $\delta<1/4$.
If
\[
\left\Vert H_1-K_1\right\Vert 
+\left\Vert H_2-K_2\right\Vert
+\left\Vert H_3-K_3\right\Vert 
\leq 
\sqrt{1-4\delta}
\]
then 
\[
\bI(K_1,K_2,K_3)=
\bI(H_1,H_2,H_3)
\]
\end{lem}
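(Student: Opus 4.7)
The plan is to mimic the proof of Lemma \ref{lem:indexIsStable}, but invoke the homotopy invariance of Lemma \ref{lem:pathstable} in place of the spectral flow argument for the integer-valued Bott index. The key observations are that self-duality is preserved under real affine combinations and that the map $(H_1,H_2,H_3) \mapsto S(H_1,H_2,H_3)$ is linear, so any linear interpolation lives in the space of self-dual triples and translates directly into a linear interpolation of $S$.

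First I would define the straight-line homotopy
\[ H_r(s) = (1-s) H_r + s K_r, \qquad s \in [0,1], \]
and verify that each $H_r(s)$ is Hermitian (trivially) and self-dual (since $Z(\cdot)^T Z$ and scalar multiplication commute with addition). Let $S(s) = \Sym(H_1(s),H_2(s),H_3(s))$ and note $S(s) = (1-s)S(0) + s\,S(1)$, so that for every $s$,
\[ \Vert S(s) - S(0) \Vert \leq \Vert S(1) - S(0) \Vert \leq \gamma, \qquad \gamma := \sum_r \Vert K_r - H_r \Vert. \]

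Next I would use Lemma \ref{lem:boundPsquareMinusP} on the $\delta$-representation $(H_1,H_2,H_3)$ to get a spectral gap of $S(0)$ at the origin of width at least $\sqrt{1-4\delta}$. The hypothesis $\gamma \leq \sqrt{1-4\delta}$ together with Weyl's inequality then guarantees that no eigenvalue of $S(s)$ can reach $0$ along the homotopy. Equivalently, $\bD(H_1(s),H_2(s),H_3(s)) = U^* S(s) U$ remains invertible, so $\Pf(\bD(H_1(s),H_2(s),H_3(s)))$ is a nonvanishing continuous function of $s$. Applying Lemma \ref{lem:pathstable} gives $\bI(H_1(0),H_2(0),H_3(0)) = \bI(H_1(1),H_2(1),H_3(1))$, which is the claim.

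There is not really a main obstacle here — the argument is essentially the same as Lemma \ref{lem:indexIsStable}, with three small bookkeeping points to check: (i) self-duality is preserved along the linear path; (ii) the gap statement of Lemma \ref{lem:boundPsquareMinusP} only needs a $\delta$-representation at one endpoint, since closedness of the gap is what is propagated along the homotopy; and (iii) if one wishes to use Lemma \ref{lem:pathstable} exactly as stated (with non-vanishing determinant of $B$ rather than of $\bD$), one observes that $B - \tfrac12 I = \tfrac12 S$ is invertible along the path, and the Pfaffian of $\bD$ can only change sign by passing through zero, which would require $S(s)$ to be singular — precisely what the gap argument rules out.
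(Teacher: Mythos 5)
Your proof is correct and follows essentially the same path as the paper's: both interpolate linearly from $(H_1,H_2,H_3)$ to $(K_1,K_2,K_3)$, reuse the spectral-gap argument from Lemma~\ref{lem:indexIsStable} to keep $\Sym(s)$ invertible along the segment, and then conclude via Lemma~\ref{lem:pathstable} that the Pfaffian of $\bD(s)$ cannot change sign. Your bookkeeping points (self-duality preserved under affine combination; $\det(\bD) = \det(S)$, so non-vanishing of the gap is exactly the non-vanishing of $\Pf(\bD)^2$) are a slightly more careful spelling-out of what the paper takes for granted, and correctly absorbs the slight imprecision in the statement of Lemma~\ref{lem:pathstable}, which refers to $\det(B)$ where $\det(S)$ is what is actually used.
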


\begin{proof}
We showed in the proof of Lemma \ref{lem:indexIsStable}
that the line segment from $\Sym(H_1,H_2,H_3)$ to 
$\Sym(K_1,K_2,K3)$ passes through invertibles.  It follows
that the line segment from $\bD(H_1,H_2,H_3)$ to 
$\bD(K_1,K_2,K3)$ passes through skew-symmetric invertibles
and so by lemma (\ref{lem:pathstable})
the Pfaffian does not change sign.
\end{proof}

As a corollary, the distance in operator norm 
from the matrices in Example (\ref{selfdualexample}) to the
nearest exactly commuting triple of self-dual matrices is at least $\sqrt{1-4/S}$.

\subsection{Conjectures}
We believe that the absence of this index 
obstruction implies that is possible to approximate almost commuting
self-dual matrices by exactly commuting self-dual matrices.
Before addressing this issue,
 we need to understand the two matrix 
case in the presence of additional symmetry.
We thus raise the following conjectures which 
generalize Lin's theorem:
\begin{conjecture}
For all $\epsilon>0$, there exists a $\delta>0$ such that, 
given any real, symmetric matrices $A,B$ with 
$\Vert [A,B] \Vert \leq \delta$
and $\Vert A \Vert,\Vert B \Vert \leq 1$,
there exist real, symmetric matrices $A',B'$, with 
$[A',B']=0$ and $\Vert A-A'\Vert, \Vert B-B'\Vert \leq \epsilon$.
\end{conjecture}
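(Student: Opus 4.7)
The plan is to deduce the conjecture from the complex Hermitian Lin theorem by exploiting the reality of $A, B$ through a symmetrization on joint spectral projections. The guiding observation is that, while a single Hermitian approximation to a real symmetric matrix need not be real, its complex conjugate is an equally good approximation, and one can average over the resulting symmetry to recover a real commuting approximant.

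First apply Lin's theorem in the complex Hermitian setting to $(A, B)$ to obtain commuting Hermitian matrices $(\tilde A, \tilde B)$ with $\|\tilde A - A\|, \|\tilde B - B\| \le \epsilon_1(\delta)$, where $\epsilon_1(\delta) \to 0$ as $\delta \to 0$. Since $A = \bar A$ and $B = \bar B$, the pair $(\bar{\tilde A}, \bar{\tilde B})$ also commutes and satisfies the same bound, so $\|\tilde A - \bar{\tilde A}\|, \|\tilde B - \bar{\tilde B}\| \le 2\epsilon_1(\delta)$. Select a grid of squares in $\mathbb{R}^2$ of side length $\eta$, offset (by an averaging argument over translations) so that the joint spectrum of $(\tilde A, \tilde B)$ lies at distance at least $\eta/4$ from every grid line. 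For each grid square $E_k$, let $P_k$ be the joint spectral projection of $(\tilde A, \tilde B)$ onto $E_k$; because eigenvalues are real, $\bar P_k$ is the analogous projection for the conjugate pair. A joint Davis-Kahan estimate, controlled by the gap $\eta/4$, gives $\|P_k - \bar P_k\| \le C \epsilon_1(\delta)/\eta$. Then $R_k = (P_k + \bar P_k)/2$ is a real symmetric contraction with $\|R_k^2 - R_k\|$ of the same order, so its spectrum clusters near $\{0, 1\}$; real-valued functional calculus rounds $R_k$ to a real symmetric projection $Q_k$ close to $P_k$. The family $\{Q_k\}$ is nearly an orthogonal partition of the identity, and a Gram-Schmidt orthogonalization performed within the real symmetric projections yields an exact orthogonal family $\{Q_k'\}$. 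Setting $A' = \sum_k \alpha_k Q_k'$ and $B' = \sum_k \beta_k Q_k'$, where $(\alpha_k, \beta_k)$ is the center of $E_k$, gives exactly commuting real symmetric matrices with $\|A - A'\|, \|B - B'\| = O(\eta) + \text{(orthogonalization error)}$.

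The main obstacle is the quantitative estimate $\|P_k - \bar P_k\| \le C\epsilon_1(\delta)/\eta$ and the subsequent balancing of parameters. The orthogonalization step accumulates error across the $O(1/\eta^2)$ nonempty squares, so $\eta$ must be large enough that the aggregate error remains small, yet small enough that $\eta$ itself contributes a small approximation error; a choice such as $\eta \sim \epsilon_1(\delta)^{1/3}$ should suffice to produce some rate $\epsilon(\delta) \to 0$, which is all that the conjecture requires. A secondary technical point is the joint-spectral perturbation bound for commuting Hermitian pairs, which can be derived from double operator integrals or from separate one-variable Davis-Kahan estimates combined with the grid offset. No index obstruction appears, reflecting the triviality of the relevant $KO$-theoretic classifying space for pairs of real symmetric contractions; in contrast, the three-matrix real case would face a genuine $KO$-theoretic obstruction analogous to the $Z_2$ invariant $\bI$ constructed elsewhere in this paper.
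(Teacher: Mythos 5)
This statement is one of the open \emph{conjectures} in Section 6 of the paper; the paper explicitly presents it without proof as a proposed generalization of Lin's theorem. So there is no proof in the paper to compare against, and your attempt must be judged on its own.

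Your proposal contains a genuine gap at the grid-offset step. You select a square grid of side $\eta$ and ask for an offset so that the joint spectrum of $(\tilde A, \tilde B)$ stays at distance at least $\eta/4$ from every grid line. For an offset $s\in[0,\eta)$ in one coordinate direction, each eigenvalue $\lambda$ of $\tilde A$ forbids an arc of length $\eta/2$ (namely $(\lambda-\eta/4,\lambda+\eta/4)\bmod\eta$) out of a circle of circumference $\eta$, i.e.\ half of all offsets. With $n$ eigenvalues, the forbidden sets generically cover the entire offset interval already for small $n$, and certainly for large $n$; the averaging argument therefore does not produce a good offset. One can try to rescue this by demanding only a gap of size $\eta/(4n)$, which makes the measure argument close, but then the Davis--Kahan estimate degrades to $\|P_k-\bar P_k\|\lesssim n\,\epsilon_1/\eta$, and the conclusion becomes dimension-dependent. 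The conjecture, like Lin's theorem itself, requires $\delta$ to depend only on $\epsilon$ and not on the matrix size; a bound that grows with $n$ cannot close the argument, and the same dimension-dependence reappears in the accumulation of orthogonalization error over the cells.

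More fundamentally, the obstruction you hit is the same one that makes Lin's theorem nontrivial in the first place: sharp spectral projections onto cells demand gaps, and gaps cannot be arranged uniformly in $n$. All known proofs of Lin's theorem (Lin's original, Friis--R\o rdam, and the quantitative construction in \cite{hastings-2008}) avoid sharp cutoffs, precisely because of this. A proof of the real conjecture is unlikely to be a soft reduction to the complex case via symmetrized spectral projections; it more plausibly requires redoing the entire Lin-type argument inside the real (respectively quaternionic/self-dual) category while tracking the $KO$-theoretic index data, as you gesture at in your closing sentence. Your structural observation that the complex-conjugate pair $(\bar{\tilde A},\bar{\tilde B})$ gives a second commuting approximant at distance $O(\epsilon_1)$ is correct and is a natural opening move, but by itself it does not deliver the dimension-independent control that the conjecture requires.
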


\begin{conjecture}
For all $\epsilon>0$, there exists a $\delta>0$ such that, 
given any self-dual, Hermitian matrices $A,B$ with 
$\Vert [A,B] \Vert \leq \delta$
and $\Vert A \Vert,\Vert B \Vert \leq 1$,
there exist self-dual, Hermitian matrices $A',B'$, 
with $[A',B']=0$ and $\Vert A-A'\Vert, \Vert B-B'\Vert \leq \epsilon$.
\end{conjecture}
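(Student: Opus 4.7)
The natural plan is to reduce Conjecture~2 to the standard complex version of Lin's theorem by exploiting the antimultiplicative involution $J\colon M \mapsto -Z M^T Z$. A direct calculation shows $J^2 = \mathrm{id}$, $J$ preserves the Hermitian property and the operator norm, and the self-dual Hermitians are exactly the fixed points of $J$. Given self-dual Hermitian $A, B$ with $\Vert [A, B]\Vert \leq \delta$ and $\Vert A\Vert, \Vert B\Vert \leq 1$, Lin's theorem produces commuting Hermitians $\tilde{A}, \tilde{B}$ with $\Vert A - \tilde{A}\Vert, \Vert B - \tilde{B}\Vert \leq \epsilon'(\delta)$. Since $J(XY) = J(Y)J(X)$, we have $[J(\tilde A), J(\tilde B)] = -J([\tilde A, \tilde B]) = 0$, so $(J(\tilde A), J(\tilde B))$ is a second commuting pair approximating $(A,B)$ equally well.

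The naive symmetrization $A_0 = \tfrac12(\tilde A + J(\tilde A))$, $B_0 = \tfrac12(\tilde B + J(\tilde B))$ is self-dual Hermitian and lies within $\epsilon'(\delta)$ of $(A,B)$, but its commutator is typically of order $\epsilon'(\delta)$ rather than zero, since nothing constrains the cross-commutator $[\tilde A, J(\tilde B)]$. Closing this gap is the principal obstacle. One route is an iterative scheme in the spirit of a Newton iteration: feed $(A_0, B_0)$ back into Lin's theorem and re-symmetrize, seeking a quadratic improvement at each stage, provided one can bound $\Vert [\tilde A, J(\tilde B)]\Vert$ adequately in terms of $\Vert [\tilde A, \tilde B]\Vert$ and the symmetrization error so that the aggregate errors form a convergent series. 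The delicate point is that Lin's qualitative theorem gives no quantitative bound on $\epsilon'(\delta)$ in terms of $\delta$, so naive iteration need not converge; one would likely need to invoke the quantitative version of Lin's theorem in \cite{hastings-2008} at each stage.

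A more conceptual alternative, which I believe is the cleaner approach, is to use the fact that self-dual Hermitian complex $2n \times 2n$ matrices are in bijection with Hermitian quaternionic $n \times n$ matrices, and exact commutativity is preserved under this correspondence. Conjecture~2 then becomes Lin's theorem in the quaternionic setting. The plan would be to adapt the quantitative proof underlying Theorem~\ref{thm:QuantitativeSquare} so that every step — the spectral cutoffs, the local surgery, and the patching — either uses only real-coefficient functional calculus in $A$ and $B$ (which automatically preserves self-duality) or replaces general unitaries by symplectic unitaries. The hard part will be a symplectic-equivariant version of the spectral-subspace manipulations used by Hastings, with comparable norm control and, crucially, with error bounds that remain independent of the matrix dimension. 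If such a symplectic-equivariant reworking succeeds, the analogue of Theorem~\ref{thm:QuantitativeSquare} in the quaternionic setting follows, and the conjecture is established quantitatively; one would then hope for an analogous adaptation to resolve Conjecture~1 in the real-symmetric case.
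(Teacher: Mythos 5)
This statement appears in the paper as Conjecture~2 and is explicitly left open; the authors offer no proof, so there is no argument of theirs to compare against. What you have written is, by your own account, a research program rather than a proof. Two substantive comments on its content.

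Your iterative scheme in the first paragraph does not converge. Write $J(M)=-ZM^TZ$. After one pass of Lin's theorem the symmetrized pair
\[
A_0=\tfrac12\bigl(\tilde A+J(\tilde A)\bigr),\qquad
B_0=\tfrac12\bigl(\tilde B+J(\tilde B)\bigr)
\]
is self-dual, Hermitian, within $\epsilon'(\delta)$ of $(A,B)$, and has
\[
[A_0,B_0]=\tfrac14\bigl([\tilde A,J(\tilde B)]+[J(\tilde A),\tilde B]\bigr)
=\tfrac14\bigl([\tilde A,\,J(\tilde B)-\tilde B]+[J(\tilde A)-\tilde A,\,\tilde B]\bigr),
\]
whose norm is of order $\epsilon'(\delta)$ and nothing smaller, since $\Vert J(\tilde B)-\tilde B\Vert$ is only bounded by $2\epsilon'(\delta)$. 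With the quantitative bound $\epsilon'(\delta)=E(1/\delta)\,\delta^{1/6}$ of Theorem~\ref{thm:QuantitativeSquare}, the new commutator bound $\delta_1\sim\epsilon'(\delta)$ is \emph{larger} than $\delta$ for all small $\delta$. Iterating makes the commutator grow toward a constant, the per-step perturbations $\epsilon'(\delta_k)$ grow, and the accumulated displacement diverges. The ``Newton iteration'' picture would require $[A_0,B_0]=o(\delta)$, and you have supplied no mechanism forcing $[\tilde A,J(\tilde B)]$ to be that small.

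Your second route --- identifying self-dual Hermitian $2n\times2n$ complex matrices with Hermitian quaternionic $n\times n$ matrices, so that the conjecture becomes Lin's theorem over the quaternions --- is indeed the natural reformulation. But the load-bearing step, a symplectic-equivariant reworking of the spectral-subspace surgery of \cite{hastings-2008} with dimension-independent error control, is precisely where the difficulty lives and is not carried out. As written this is a plausible roadmap, not a proof; the conjecture remains open.
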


The conjectures we make regarding index obstructions are:
\begin{conjecture}
For all $\epsilon>0$, there exists a $\delta>0$ such that, 
given any real, symmetric matrices $H_1,H_2,H_3$ with 
$\Vert [H_1,H_2] \Vert, \Vert [H_2,H_3] \Vert, \Vert [H_3,H_1] \Vert
 \leq \delta$, and
\be
H_1^2+H_2^2+H_3^2=I,
\ee
there exist real, symmetric commuting matrices $H_1',H_2',H_3'$, with
$\Vert H_1-H_1'\Vert, \Vert H_2-H_2'\Vert, \Vert H_3-H_3' \Vert \leq \epsilon$.
\end{conjecture}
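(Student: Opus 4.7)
The plan is to mirror the proof of Theorem \ref{thm:sphereQuantitative}, replacing each intermediate construction by a $T$-equivariant variant, where $T$ is the antilinear symmetry of the approximate projection $\B(H_1,H_2,H_3)$ that is forced by reality of the $H_r$. Explicitly, the Pauli identity $\sigma_2 \sigma_r \sigma_2 = -\bar\sigma_r$ (valid for $r=1,2,3$) together with $\bar H_r = H_r$ yields
\[
T\,\B\,T^{-1} = I - \B, \qquad T := (\sigma_2 \otimes I)\,C, \qquad T^2 = -I,
\]
where $C$ is componentwise complex conjugation on $\mathbb{C}^{2n}$. This Kramers-like symmetry is the origin of the vanishing of $\mathrm{bott}(H_1,H_2,H_3)$ in Lemma \ref{lem:realsKillIndex}; I would use it to install a real structure on the cylindrical data produced by Lemma \ref{lem:toCylindrical}.

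First I would diagonalize $\B$ by a unitary $W$ whose columns with eigenvalues near $1$ and near $0$ are matched into $T$-pairs; such a pairing is always available because $T\B T^{-1} = I - \B$. Tracking this equivariance through the extraction of the blocks $A$, $B$ and through the polar decompositions $A = Z(A^*A)^{1/2}$, $B = V(B^*B)^{1/2}$---with the polar unitaries chosen in each $T$-paired spectral block by an Autonne--Takagi-style prescription---one should arrange that the resulting unitary $U = Z^*V$ of Lemma \ref{lem:toCylindrical} is a \emph{symmetric} unitary, $U^T = U$, while $H_3$ of course remains real symmetric. The operator-norm estimates of Lemma \ref{lem:toCylindrical} would survive unchanged, so $(U, H_3)$ is an $O(\sqrt{\delta})$-representation of the cylinder and $\Vert U\sqrt{I-H_3^2} - (H_1+iH_2)\Vert = O(\sqrt{\delta})$, but now carrying the extra feature $U^T = U$.

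A symmetric unitary $U$ can be written $U = e^{iL}$ for a real symmetric $L$ (applying the principal logarithm after a harmless perturbation of $U$ pushing $-1$ off $\sigma(U)$; $U^T = U$ then transfers to $L^T = L$, and Hermiticity of $L$ combined with $L = L^T$ forces $\bar L = L$). Smallness of $[U, H_3]$ transfers to smallness of $[L, H_3]$ by functional calculus. The problem thus reduces to approximating the almost-commuting real symmetric pair $(L, H_3)$ by exactly commuting real symmetric matrices $(L', H_3')$, which is precisely the conjectured real version of Lin's theorem stated earlier in this section. Assuming that conjecture, set $U' := e^{iL'}$ and define $H_r'$ by the spherical change of variables of Lemma \ref{lem:toSpherical}. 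Because $U'$ is symmetric unitary and commutes with the real symmetric $H_3'$, the real and imaginary parts of $U'\sqrt{I-(H_3')^2}$ are both real symmetric, and unitarity of $U'$ forces $[\mathrm{Re}\,U', \mathrm{Im}\,U'] = 0$, so the resulting triple $(H_1', H_2', H_3')$ is real symmetric, pairwise commuting, and has $(H_1')^2+(H_2')^2+(H_3')^2=I$. The norm estimates follow the final calculation of Theorem \ref{thm:sphereQuantitative} verbatim.

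The main technical obstacle is the second step: checking that the joint $T$-equivariance and polar-decomposition constraints are simultaneously satisfiable so that $U$ comes out symmetric. The complex proof uses only the approximate relations $AA^* + BB^* \approx I$, $A^*A \approx (I+H_3)/2$, $B^*B \approx (I-H_3)/2$, $A^*B \approx (H_1+iH_2)/2$, leaving ample gauge freedom in the polar unitaries $Z, V$; in the $T$-equivariant setting that freedom shrinks, and the existence of a compatible choice requires careful bookkeeping within each spectral block paired by $T$. The secondary obstacle is conceptual: the real Lin's theorem is itself an open conjecture of this paper, so the sphere result can at best be established conditional on it, and any quantitative rate would inherit its (currently unknown) quantitative dependence.
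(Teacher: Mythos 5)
The statement you are trying to prove is stated in the paper as an open conjecture; the authors offer no proof. So there is nothing in the paper to compare against, and I evaluate your proposal on its own terms.

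Your strategy---carry the proof of Theorem \ref{thm:sphereQuantitative} through a $T$-equivariant version, where $T=(\sigma_2\otimes I)C$ is the antilinear operator satisfying $T^2=-I$ and $T\,\B\,T^{-1}=I-\B$---is the natural one, and the intertwining relation you state is correct (using $\sigma_2\bar\sigma_r\sigma_2 = -\sigma_r$ and $\bar H_r = H_r$). This also gives an alternative proof of Lemma \ref{lem:realsKillIndex}. Reducing to the (also conjectural) real version of Lin's theorem is the right target, and you are right that any proof along these lines can only be conditional on it.

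However, there is a genuine gap beyond the two you flag. The step ``smallness of $[U,H_3]$ transfers to smallness of $[L,H_3]$ by functional calculus,'' with $L=-i\log U$, is not valid. The principal logarithm has a jump at $-1$, and even after a small rotation removes $-1$ from $\sigma(U)$, eigenvalues of $U$ near the cut give eigenvalues of $L$ near $\pm\pi$. Concretely, take $U=\mathrm{diag}(e^{i(\pi-\epsilon)},e^{-i(\pi-\epsilon)})$ and $H_3=\sigma_1$: both are real-symmetric/symmetric-unitary, $\Vert[U,H_3]\Vert=2\sin\epsilon$ is tiny, yet $\Vert[L,H_3]\Vert=2(\pi-\epsilon)$ is of order one. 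One can repair this particular example by choosing a different branch on one eigenvalue (e.g.\ $L=\mathrm{diag}(\pi-\epsilon,\pi+\epsilon)$), but nothing in your argument produces such a globally consistent branch choice, and finding one is essentially the content of the cylinder theorem itself. The robust route is to push the symmetry through the reduction chain of Section \ref{sec:DiskAnnulus}, avoiding the logarithm: cylinder $\to$ annulus via, say, $X=(\tfrac34 I+\tfrac14 K)^{1/2}U(\tfrac34 I+\tfrac14 K)^{1/2}$, which is exactly complex-symmetric when $U^T=U$ and $K$ is real symmetric; then annulus $\to$ disk $\to$ square, noting that for $X^T=X$ the Hermitian and anti-Hermitian parts $H=\tfrac12(X+X^*)$, $K=\tfrac{1}{2i}(X-X^*)$ are automatically real symmetric (since $H^T=\tfrac12(X+\bar X)=H$ and $H^*=H$ forces $\bar H=H$). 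That is precisely the input to the conjectured real Lin's theorem.

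Finally, the $T$-equivariant choice of polar unitaries so that $U=Z^*V$ comes out symmetric is, as you note, not verified, and it is the real crux. I would not call it merely ``bookkeeping'': you need to confirm that the phase freedom in $Z$ and $V$ is compatible both with the Autonne--Takagi structure you want and with the norm estimates of Lemma \ref{lem:toCylindrical}, which constrain how far $Z$ and $V$ can be rephased. Until that is checked, the proposal is a plausible program rather than a proof sketch.
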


\begin{conjecture}
For all $\epsilon>0$, there exists a $\delta>0$ such that, 
given any self-dual Hermitian matrices $H_1,H_2,H_3$ with $\bI(H_1,H_2,H_3)=1$
and
$\Vert [H_1,H_2] \Vert, \Vert [H_2,H_3] \Vert, \Vert [H_3,H_1] \Vert
 \leq \delta$,
and
\be
H_1^2+H_2^2+H_3^2=I,
\ee
there exist self-dual Hermitian commuting matrices $H_1',H_2',H_3'$, with
$\Vert H_1-H_1'\Vert, \Vert H_2-H_2'\Vert, \Vert H_3-H_3' \Vert \leq \epsilon$.
\end{conjecture}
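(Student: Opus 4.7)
My approach is to carry out the proof of Theorem \ref{thm:sphereQuantitative} entirely within the self-dual (symplectic/quaternionic) category, using the preceding Conjectures 1 and 2 (self-dual Lin-type theorems) as the basic input. The strategy mirrors the layered structure of Sections 2 and 3: first get self-dual Lin for the square, then for the disk, annulus, and cylinder, then handle the sphere by reducing to the cylinder as in Lemma \ref{lem:toCylindrical}.

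The first step is to establish self-dual analogues of Theorems \ref{thm:QuantitativeSquare} through \ref{thm:cylinderQuantitative}. Given Conjecture 2, the derivations of the disk, annulus, and cylinder versions carry over essentially verbatim, provided one verifies that every operation used in the proofs preserves self-duality. This is straightforward because $Z A^T Z = -A$ is preserved by taking Hermitian and anti-Hermitian parts, by continuous functional calculus with real-valued functions (such as $|X|$, $\sqrt{\cdot}$, and the truncation $f$ in Theorem \ref{thm:QuantitativeDisk}), and by averaging with the time reversal $H \mapsto -Z H^T Z$. The only subtle point is the polar decomposition $X = U|X|$: when $X$ is self-dual one can choose the unitary factor $U$ to be symplectic, i.e.\ to satisfy $-Z U^T Z = U^*$, by a Kramers-pair-respecting block factorization. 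These observations give self-dual versions of all the quantitative Lin theorems of Section 2.

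The second step is to prove a self-dual version of Lemma \ref{lem:toCylindrical}: given a self-dual $\delta$-representation $(H_1,H_2,H_3)$ of the sphere with $\bI(H_1,H_2,H_3)=1$, produce a symplectic unitary $U$ such that $(U,H_3)$ is an approximate cylindrical representation consisting of a symplectic unitary and a self-dual Hermitian contraction, with an estimate of the form (\ref{eq:toCylinder-Displacement}). Since the $H_r$ are self-dual, the almost projector $\B(H_1,H_2,H_3)$ has Kramers-degenerate spectrum, and its high-eigenvalue spectral projection inherits a symplectic form whose Pfaffian (up to a fixed convention) is precisely $\bI$. The hypothesis $\bI = 1$ is the topological condition that this projection admits a Kramers-consistent orthonormal basis, which is exactly the statement that it can be diagonalized by a \emph{symplectic} unitary $W$. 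One then follows the original proof: the corresponding $A, B$ admit polar decompositions with symplectic unitary factors $Z, V$, and $U = Z^* V$ is symplectic. Lemma \ref{lem:toSpherical}, conversely, preserves self-duality when fed a symplectic $U$ and self-dual $K$, as one checks directly from the formulas.

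The main obstacle is Step 2: turning the Pfaffian-positivity statement $\bI = 1$ into a usable, quantitative construction of a symplectic diagonalizing unitary for $\B(H_1,H_2,H_3)$, with norm estimates comparable to those in the ordinary case. The non-self-dual Lemma \ref{lem:toCylindrical} gets its diagonalization for free from $\mathrm{bott} = 0$ and ordinary spectral theory, but extracting a symplectic diagonalization requires a symplectic refinement of the spectral gap and interlacing estimates behind Lemma \ref{lem:boundPsquareMinusP}, together with a careful pairing of eigenvectors that is globally consistent precisely when the Pfaffian is positive. Proving such a symplectic spectral theorem with the quantitative control needed to carry through the rest of the argument is the principal technical content to be supplied; note that since Conjectures 1 and 2 are stated non-quantitatively, only a qualitative version of this spectral step is actually required, which suggests attacking the conjecture first via an ultraproduct or $C^*$-algebraic limiting argument before seeking quantitative bounds.
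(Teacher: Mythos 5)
The statement you were asked to prove is a \emph{Conjecture} in the paper (Conjecture 4 of Section~\ref{sec:real}); the authors explicitly leave it open, stating only that they ``believe'' the absence of the $\bI$ obstruction suffices and defer the matter to future work. There is therefore no proof in the paper to compare your proposal against, and your write-up should be judged as a strategy outline rather than a demonstration.

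As a strategy outline it is reasonable and broadly consistent with the paper's own implicit roadmap (the paper itself says the two-matrix self-dual Lin-type statements, Conjectures 1 and 2, must be understood first), but it is not a proof, for two independent reasons. First, the entire construction is conditional on Conjectures 1 and 2, which are themselves unproven; even a complete execution of your Steps 1 and 2 would at best reduce Conjecture 4 to Conjectures 1 and 2, not settle it. Second, and as you candidly acknowledge, Step 2 contains the genuine gap: you need to show that $\bI(H_1,H_2,H_3)=1$ implies that the near-$1$ spectral subspace of $\B(H_1,H_2,H_3)$ can be diagonalized in a way compatible with the self-dual structure, so that the polar factors $Z$, $V$ and hence $U=Z^*V$ land in the symplectic group. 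You assert that $\bI=1$ is ``exactly'' this statement, but you do not verify it, and the picture is subtler than ``Kramers degeneracy of $\B$'': when the $H_r$ are self-dual, the operator $\Sym(H_1,H_2,H_3)$ is, after the fixed unitary conjugation $U=\tfrac{1}{\sqrt2}(I+Z\otimes\sigma_2)$, \emph{anti-symmetric} (i.e.\ purely imaginary Hermitian), so the natural auxiliary structure on $\B$ is real/orthogonal rather than quaternionic/symplectic. The correct statement must relate the sign of the Pfaffian of $\bD$ to the triviality of a real structure on the spectral projection, and it is precisely this identification, together with the quantitative control needed to mimic Lemma~\ref{lem:toCylindrical}, that is missing. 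Until that lemma is supplied (and Conjectures 1 and 2 are established), the argument does not close.
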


\section{Discussion}

We have given quantitative error bounds on the ability to approximate three almost commuting matrices by three exactly commuting
matrices under the assumption of a vanishing index assumption.  We have related the ability to approximate these matrices to the
ability to find localized Wannier functions in a physical system, and we have demonstrated that it is readily possible to numerically
calculate this index for such systems.  

We have constructed a $Z_2$ index obstruction to approximation of
almost commuting self-dual matrices by exactly commuting self-dual
matrices, and
raised additional conjectures regarding almost commuting
matrices in the case of {\it real} $C^*$-algebras.
Finally, we note that table II
of
\cite{schnyder-2008} lists $10$ different classes of matrices and different
index possibilities in various dimensions.  We believe more generally that
each of the obstructions in $d=2$ in this table will correspond to a particular
obstruction to approximating almost commuting matrices; i.e., we expect
that for almost commuting matrices in class DIII there will be a $Z_2$
obstruction to approximating them by exactly commuting matrices of class
DIII.  Such obstruction is left for future work.

\bibliographystyle{plain}
\bibliography{cstarRefs}

\begin{thebibliography}{10}

\bibitem{fluxtorus}
J.~E. Avron and R.~Seiler.
\newblock Quantization of the hall conductance for general multiparticle
  schrodinger operators.
\newblock {\em Phs. Rev. Lett.}, 54:259, 1985.

\bibitem{BellissardNCGquantumHall}
J.~Bellissard, A.~van Elst, and H.~Schulz-Baldes.
\newblock The noncommutative geometry of the quantum {H}all effect.
\newblock {\em J. Math. Phys.}, 35(10):5373--5451, 1994.

\bibitem{BhatiaRajendraKittanehFuad}
Rajendra Bhatia and Fuad Kittaneh.
\newblock Some inequalities for norms of commutators.
\newblock {\em SIAM J. Matrix Anal. Appl.}, 18(1):258--263, 1997.

\bibitem{BrattElliottEvKiapproxCommUnitaries}
Ola Bratteli, George~A. Elliott, David~E. Evans, and Akitaka Kishimoto.
\newblock Homotopy of a pair of approximately commuting unitaries in a simple
  {$C\sp *$}-algebra.
\newblock {\em J. Funct. Anal.}, 160(2):466--523, 1998.

\bibitem{momentumtorus}
C.~Brouder, G.~Panati, M.~Calandra, C.~Mourougane, and Marzari N.
\newblock Exponential localization of wannier functions in insulators.
\newblock {\em Phys. Rev. Lett.}, 98:046402, 2007.

\bibitem{BrownNonQD}
Nathanial~P. Brown.
\newblock On quasidiagonal {$C\sp *$}-algebras.
\newblock In {\em Operator algebras and applications}, volume~38 of {\em Adv.
  Stud. Pure Math.}, pages 19--64. Math. Soc. Japan, Tokyo, 2004.

\bibitem{ChoiAlmostNotNearly}
Man~Duen Choi.
\newblock Almost commuting matrices need not be nearly commuting.
\newblock {\em Proc. Amer. Math. Soc.}, 102(3):529--533, 1988.

\bibitem{DavidsonAlmostCommuting}
Kenneth~R. Davidson.
\newblock Almost commuting {H}ermitian matrices.
\newblock {\em Math. Scand.}, 56(2):222--240, 1985.

\bibitem{davidson2001}
Kenneth~R. Davidson and Stanislaw~J. Szarek.
\newblock Local operator theory, random matrices and {B}anach spaces.
\newblock In {\em Handbook of the geometry of {B}anach spaces, {V}ol. {I}},
  pages 317--366. North-Holland, Amsterdam, 2001.

\bibitem{ELP-pushBusby}
S{\o}ren Eilers, Terry~A. Loring, and Gert~K. Pedersen.
\newblock Morphisms of extensions of {$C\sp *$}-algebras: pushing forward the
  {B}usby invariant.
\newblock {\em Adv. Math.}, 147(1):74--109, 1999.

\bibitem{ElliottRordamClassificationII}
George~A. Elliott and Mikael R{\o}rdam.
\newblock Classification of certain infinite simple {$C\sp *$}-algebras. {II}.
\newblock {\em Comment. Math. Helv.}, 70(4):615--638, 1995.

\bibitem{ExelSoftTorusI}
Ruy Exel.
\newblock The soft torus and applications to almost commuting matrices.
\newblock {\em Pacific J. Math.}, 160(2):207--217, 1993.

\bibitem{ExelLoringAlmostCommutingUnitary}
Ruy Exel and Terry Loring.
\newblock Almost commuting unitary matrices.
\newblock {\em Proc. Amer. Math. Soc.}, 106(4):913--915, 1989.

\bibitem{ExelLoringInvariats}
Ruy Exel and Terry~A. Loring.
\newblock Invariants of almost commuting unitaries.
\newblock {\em J. Funct. Anal.}, 95(2):364--376, 1991.

\bibitem{GongLinAlmostMultiplicativeMorphisms}
Guihua Gong and Huaxin Lin.
\newblock Almost multiplicative morphisms and almost commuting matrices.
\newblock {\em J. Operator Theory}, 40(2):217--275, 1998.

\bibitem{HadwinStronglyQD}
Don Hadwin.
\newblock Strongly quasidiagonal {$C\sp *$}-algebras.
\newblock {\em J. Operator Theory}, 18(1):3--18, 1987.
\newblock With an appendix by Jonathan Rosenberg.

\bibitem{HalmosFDvectorSpaces}
Paul~R. Halmos.
\newblock {\em Finite-dimensional vector spaces}.
\newblock Springer-Verlag, New York, second edition, 1974.
\newblock Undergraduate Texts in Mathematics.

\bibitem{hastings-2008}
M.~B. Hastings.
\newblock Making almost commuting matrices commute.
\newblock {\em Communications in Mathematical Physics}, 291(2):321--345, 2009.

\bibitem{HastingsKomaSpectralGap}
Matthew~B. Hastings and Tohru Koma.
\newblock Spectral gap and exponential decay of correlations.
\newblock {\em Comm. Math. Phys.}, 265(3):781--804, 2006.

\bibitem{hastingsLieb-Schultz-Mattis}
MB~Hastings.
\newblock {Lieb-Schultz-Mattis in higher dimensions}.
\newblock {\em Physical Review B}, 69(10):104431, 2004.

\bibitem{hastingsLocalityQuantum}
MB~Hastings.
\newblock {Locality in quantum and Markov dynamics on lattices and networks}.
\newblock {\em Physical review letters}, 93(14):140402, 2004.

\bibitem{hastings2008topology}
MB~Hastings et~al.
\newblock {Topology and phases in fermionic systems}.
\newblock {\em J. Stat. Mech}, page L01001, 2008.

\bibitem{KadisonRingroseI}
Richard~V. Kadison and John~R. Ringrose.
\newblock {\em Fundamentals of the theory of operator algebras. {V}ol. {I}},
  volume 100 of {\em Pure and Applied Mathematics}.
\newblock Academic Press Inc. [Harcourt Brace Jovanovich Publishers], New York,
  1983.
\newblock Elementary theory.

\bibitem{kane_et_al_HallInGraphene}
CL~Kane and EJ~Mele.
\newblock {Quantum spin Hall effect in graphene}.
\newblock {\em Physical review letters}, 95(22):226801, 2005.

\bibitem{Kane_et_al_Zmod2QuantumHall}
CL~Kane and EJ~Mele.
\newblock {Z\_ $\{$2$\}$ Topological Order and the Quantum Spin Hall Effect}.
\newblock {\em Physical review letters}, 95(14):146802, 2005.

\bibitem{kitaev-2009}
Alexei Kitaev.
\newblock Periodic table for topological insulators and superconductors.
\newblock http://arxiv.org/abs/0901.2686, to appear in the Proceedings of the
  L.D.Landau Memorial Conference "Advances in Theoretical Physics", June 22-26,
  2008, Chernogolovka, Moscow region, Russia, 2009.

\bibitem{konig2007quantum}
M.~Konig, S.~Wiedmann, C.~Brune, A.~Roth, H.~Buhmann, L.W. Molenkamp, X.L. Qi,
  and S.C. Zhang.
\newblock {Quantum spin Hall insulator state in HgTe quantum wells}.
\newblock {\em Science}, 318(5851):766, 2007.

\bibitem{LiPerturbpolarDecomp}
W.~Li.
\newblock {On the perturbation bound in unitarily invariant norms for
  subunitary polar factors}.
\newblock {\em Linear Algebra and Its Applications}, 429(2-3):649--657, 2008.

\bibitem{LiebRobinsonFiniteGroupVelocity}
Elliott~H. Lieb and Derek~W. Robinson.
\newblock The finite group velocity of quantum spin systems.
\newblock {\em Comm. Math. Phys.}, 28:251--257, 1972.

\bibitem{LinAlmostCommutingHermitian}
Huaxin Lin.
\newblock Almost commuting selfadjoint matrices and applications.
\newblock In {\em Operator algebras and their applications ({W}aterloo, {ON},
  1994/1995)}, volume~13 of {\em Fields Inst. Commun.}, pages 193--233. Amer.
  Math. Soc., Providence, RI, 1997.

\bibitem{LinAlmostUnitariesClassification}
Huaxin Lin.
\newblock Almost commuting unitaries and classification of purely infinite
  simple {$C\sp *$}-algebras.
\newblock {\em J. Funct. Anal.}, 155(1):1--24, 1998.

\bibitem{loring1986torus}
Terry~A. Loring.
\newblock {\em {The torus and noncommutative topology}}.
\newblock PhD thesis, University of California, Berkeley, 1986.

\bibitem{Loring-K-thryAsymCommMatrices}
Terry~A. Loring.
\newblock {$K$}-theory and asymptotically commuting matrices.
\newblock {\em Canad. J. Math.}, 40(1):197--216, 1988.

\bibitem{LoringWhenMatricesCommute}
Terry~A. Loring.
\newblock When matrices commute.
\newblock {\em Math. Scand.}, 82(2):305--319, 1998.

\bibitem{NachtSimsLRbounds}
Bruno Nachtergaele and Robert Sims.
\newblock Lieb-{R}obinson bounds and the exponential clustering theorem.
\newblock {\em Comm. Math. Phys.}, 265(1):119--130, 2006.

\bibitem{osborne-2008}
T.J. Osborne.
\newblock {Almost commuting unitaries with spectral gap are near commuting
  unitaries}.
\newblock {\em Proc. Amer. Math. Soc.}, 137:4043--4048, 2009.

\bibitem{PedersenCommutatorInequality}
Gert~K. Pedersen.
\newblock A commutator inequality.
\newblock In {\em Operator algebras, mathematical physics, and low-dimensional
  topology ({I}stanbul, 1991)}, volume~5 of {\em Res. Notes Math.}, pages
  233--235. A K Peters, Wellesley, MA, 1993.

\bibitem{schnyder-2008}
A.P. Schnyder, S.~Ryu, A.~Furusaki, and A.W.W. Ludwig.
\newblock {Classification of Topological Insulators and Superconductors}.
\newblock In {\em AIP Conference Proceedings}, volume 1134, page~10, 2009.

\bibitem{VoiculescuAsymptoticallyCommuting}
Dan Voiculescu.
\newblock Asymptotically commuting finite rank unitary operators without
  commuting approximants.
\newblock {\em Acta Sci. Math. (Szeged)}, 45(1-4):429--431, 1983.

\bibitem{VoiculescuQD}
Dan Voiculescu.
\newblock Around quasidiagonal operators.
\newblock {\em Integral Equations Operator Theory}, 17(1):137--149, 1993.

\end{thebibliography}

\end{document}